\documentclass[aps,prl,reprint,groupedaddress,footinbib, nobibnotes]{revtex4-2}

\usepackage{amsmath,amsthm,amssymb}
\usepackage{bibunits}
\defaultbibliographystyle{apsrev4-2}
\defaultbibliography{apssamp}

\usepackage{color,soul}
\usepackage{graphicx}
\usepackage{dcolumn}
\usepackage{bm}
\usepackage[mathlines]{lineno}
\usepackage{hyperref}

\theoremstyle{plain}

\newtheorem{theorem}{Theorem}
\newtheorem{atheorem}{Theorem}

\newtheorem{lemma}{Lemma}
\newtheorem{corollary}{Corollary}

\theoremstyle{remark}
\newtheorem{remark}{Remark}

\theoremstyle{definition}

\newcommand{\R}{\mathbb{R}}
\newcommand{\rr}{{\mathbf{r}}}
\newcommand{\BB}{{\mathcal B}}
\newcommand{\const}{\mathrm{const}}

\DeclareMathOperator{\Int}{int}
\DeclareMathOperator{\Cl}{cl}

\begin{document}
\begin{bibunit}

\title{Fundamental limits on state preparation for an open qubit}
\author{D.A. Abraamian}
\email{david.abraamian@math.msu.ru}
\affiliation{Department of Mathematics and Mechanics, Lomonosov Moscow State University, Leninskiye gory str. 1, Moscow, 119991, Russia}
\author{L.V. Lokutsievskiy}
\email{lion.lokut@gmail.com}
\affiliation{Department of Differential Equations,
Steklov Mathematical Institute of Russian Academy of Sciences, Gubkina str. 8, Moscow, 119991, Russia}
\affiliation{Faculty of Mathematics, National Research University Higher School of Economics, Usacheva Str. 6, Moscow, 119048, Russia}
\author{A.N. Pechen}
\email{apechen@gmail.com}
\affiliation{Department of Mathematical Methods for Quantum Technologies,
Steklov Mathematical Institute of Russian Academy of Sciences, Gubkina str. 8, Moscow, 119991, Russia}

\sloppy

\date{\today}
\begin{abstract}
We analytically determine the ultimate limits of  state preparation in two-level open quantum systems driven by coherent control. For a dissipative qubit governed by a GKSL master equation, we give an exact characterization of the reachable set in the Bloch ball. Dissipation excludes a region of states in the Bloch ball which cannot be approached even under arbitrarily strong coherent driving, and we prove that this region has a nontrivial geometry whose boundary is a surface of revolution around the $x$-axis which is analytic except for two conical singularities. We derive a closed-form control protocol for moving on this boundary, and construct an explicit protocol that steers the system arbitrarily close to any prescribed boundary state. These results provide a complete geometric constructive description of reachable qubit states in the standard dissipative environment, establishing fundamental bounds on controllability and state-preparation fidelity for open two-level quantum systems.
\end{abstract}

\maketitle

While the control of isolated, i.e. closed, quantum systems, governed by the Schr\"odinger equation, has been extensively studied, realistic quantum devices inevitably interact with their surrounding environment~\cite{Schleich_etal_2016,Acin_etal_2018,Koch_etal_2022}. The control of open quantum systems is therefore central to a wide range of quantum technologies~\cite{Brif_Chakrabarti_Rabitz_2010,ShapiroBrumerBook2012,Koch_2016,Cong_2018,AlessandroBook2021,Kurizki_Kofman_2021,KallushDannKosloff2022}. A particularly important task is the preparation of quantum states that are as close as possible to a prescribed target state, either pure or mixed. 

The dynamics of open quantum systems are typically nonunitary and include dissipation and decoherence \cite{BreuerPetruccioneBook2007}. The environment is often viewed as detrimental since it introduces errors and limits coherence times. In other cases, it can be exploited as a useful resource~\cite{Beige_Braun_Tregenna_Knight_2000,Diehl_etal_2008,Verstraete_Wolf_Cirac_2009,Schmidt_Negretti_Ankerhold_Calarco_Stockburger_2011,HarringtonMuellerMurch2022,Coleetal2022,Malinowskietal2022}. An example is incoherent control which exploits  properties of the environment such as its spectral density for affecting the system's evolution \cite{PechenRabitz2006, Pechen2011} to complement traditional coherent control, where the system is driven by external fields such as shaped laser pulses or magnetic fields.

A fundamental question in quantum control is  controllability: which states can be reached from a given initial state? The answer to this question determines the ultimate limits on manipulating a given system. For closed quantum systems, controllability criteria based on the Lie algebra generated by the system Hamiltonians are well established \cite{Huang_Tarn_Clark_1983,Turinici2001, Albertini2001, Schirmer2001, Schirmer2002, Altafini2002, Polack2009}. For open quantum systems governed by the GKSL equations~\cite{GoriniKossakowskiSudarshan1976,Lindblad1976}, the analysis is more subtle because the dynamics are dissipative and generally irreversible. Early studies addressed controllability properties for general finite-dimensional Markovian systems \cite{Altafini1} and, in particular, for two-level systems with specific dissipation mechanisms and control resources \cite{Altafini2}. Further reachability and controllability questions for Markovian open quantum systems were studied using Lie-semigroup methods~\cite{Dirr_Helmke_Kurniawan_SchulteHerbrueggen_2009}.

A related problem is developing time-optimal control protocols which for open systems is complicated by the interplay between coherent driving and dissipation. Various studies have addressed this problem using tools from geometric control theory\cite{AgrachevSachkov}, notably the Pontryagin Maximum Principle~\cite{Sugny_Kontz_Jauslin_2007, Lapert_Zhang_Braun_Glaser_Sugny_2010, BoscainPiccoli12004}, including for optimization of state purity~\cite{Clark_Bloch_Colombo_Rooney_2020}, mitigation of dissipation during state transfer~\cite{Mukherjee_Carlini_Mari_Caneva_Montangero_Calarco_Fazio_Giovannetti_2013}, and speed limit bounds for open system dynamics~\cite{delCampo_Egusquiza_Plenio_Huelga_2013}. 

Two-level quantum systems provide the basic model for many platforms, ranging from qubits in quantum information processing to spin dynamics in NMR~\cite{Zhang_Lapert_Sugny_Braun_Glaser_2011}, quantum optics~\cite{Sugny_Kontz_Jauslin_2007}, chemical dynamics, and biological systems such as light-harvesting complexes~\cite{Fassioli_Dinshaw_Arpin_Scholes_2014,Kozyrev_Pechen_2022}. Much effort has been directed toward the analysis of controllability and high-fidelity control of two-level quantum systems, either closed or open, as well as of qubit gates implemented in the presence of decoherence~\cite{Viola_Lloyd_1998,Viola_Knill_Lloyd_1999,Viola_Lloyd_Knill_1999,Grace_Brif_Rabitz_Walmsley_Kosut_Lidar_2007,West_Lidar_Fong_Gyure_2010,Zahedinejad_Ghosh_Sanders_2015,Caneva_2009,Hegerfeldt_2013,Palao_Kosloff_2002}. Dissipation excludes a region of states in the Bloch ball which cannot be approached even under arbitrarily strong coherent driving, and describing this region determines fundamental limits on state preparation in two-level open quantum systems. For coherently controlled open systems, approximate reachable-set constructions were developed for quantum state engineering in NMR systems~\cite{Li_Lu_Luo_Laflamme_Peng_Du_2016}. Reachable set at the level of its closure was characterized and time-optimal state transfer was studied~\cite{LokutsievskiyPechen1,LokutsievskiyPechen2} by extending classical geometric control theory \cite{Sussmann_1982, Sussmann_1987} to this quantum setting. Despite these advances, a complete geometric and dynamical characterization of the reachable sets and the  coherent controls generating boundary dynamics  have remained unavailable.

In this article, we give a complete analytic solution to the problem of characterizing the ultimate limits of state preparation for a two-level quantum system interacting with an environment. We consider an open qubit driven by coherent control and coupled to a photonic reservoir. Its dynamics are described by the Gorini--Kossakowski--Sudarshan--Lindblad (GKSL) master equation~\cite{BreuerPetruccioneBook2007,AlessandroBook2021}, which combines Hamiltonian evolution with nonunitary dissipation. The standard form of the master equation in quantum-optical models is ~\cite{Scully_Zubairy_1997}
\begin{equation}
\label{eq: main QCS}
	\frac{\mathrm{d}\rho}{\mathrm{d}t} =
	-\mathrm{i}\bigl[H_0+uV, \rho\bigr]
	+\gamma(n+1) \mathcal{D}[\sigma^-]\rho + \gamma n \mathcal{D}[\sigma^+]\rho,
\end{equation}
where we set $\hbar=1$. Here $\rho$ is the $2\times 2$ density matrix of the two level system, $H_0=\frac\omega2 \sigma_z$ is the free Hamiltonian, $u(t)$ is the coherent control, $V=\frac\kappa2 \sigma_x$ is the interaction Hamiltonian, $\sigma^+=(\sigma_x +\mathrm{i}\sigma_y)/2$ and $\sigma^-=(\sigma_x -\mathrm{i}\sigma_y)/2$ are the raising and lowering operators, $\sigma_{x,y,z}$ are the Pauli matrices, and $\mathcal{D}[\sigma]\rho=\sigma\rho \sigma^\dagger - \frac{1}{2} \{ \sigma^\dagger \sigma, \rho \}$ is the Lindblad dissipator. A key parameter is $\alpha = \gamma/\omega$. The choice of $H_0$ and $V$ describes a general dipole-type coupling to a coherent control field. In the most general setting, the control variables are the coherent control $u(t)\in\R$, for example a shaped coherent laser field, and the incoherent control $n(t)\ge 0$, for example the spectral density of incoherent photons. An additional dephasing term $\mathcal{D}[\sigma_z]$ may arise for other environments, for instance for phonon reservoirs. In the photonic case considered here, however, the relevant dissipators are $\mathcal{D}[\sigma^\pm]$, and $n$ represents the photon density at the transition frequency $\omega$.

Our main result is an exact analytical characterization of the set $\mathcal R$ of all quantum states reachable from the equilibrium state $N$ of an open qubit under arbitrary coherent control\footnote{The set $\mathcal{R}(\rho_0)$ of all quantum states reachable from a given initial state $\rho_0$ always contains $\mathcal{R}=\mathcal{R}(N)$. Moreover, if $\rho_0\in\mathcal{R}$, then $\mathcal{R}(\rho_0)=\mathcal{R}$, see~\cite{LokutsievskiyPechen1}.}. Moreover, we construct explicit control protocols that steer the system arbitrarily close to any prescribed state on the boundary of this reachable set. Specifically,

\begin{enumerate}
	\item We determine the global geometry of the reachable-set boundary in Bloch coordinates $(r_x,r_y,r_z)$. We prove that this boundary is a surface of revolution around the $r_x$-axis which is analytic everywhere except at two conical singularities (see Theorem~\ref{thm: main}). We analytically construct a generating curve $\Gamma\subset\partial\mathcal R$ whose rotation about the $r_x$-axis gives the full boundary $\partial\mathcal R$. This reduces the three-dimensional reachable-set problem to the construction of a boundary trajectory.
    
	\item We derive a closed-form feedback control generating the trajectory $\Gamma$ on the boundary of the reachable set,
	\[
		\boxed{u = \frac\omega\kappa \frac{(1+r_z)(2 r_x r_z+2 \alpha r_y+\alpha r_yr_z)}{r_y^2+r_z^2+r_z^3},}
	\]
	which gives an explicit differential equation for the boundary trajectories (see Theorem~\ref{thm: main} for details).

	\item Using this boundary dynamics, we construct explicit coherent controls $u(t)$ that steer any initial state to an arbitrarily small neighborhood of any desired final state on the boundary of the reachable set (see~\hyperref[sec: practical implementation]{Appendix}). This provides a constructive method for preparing extremal reachable states.

	\item We describe the internal structure of the reachable set and identify the region of small-time local controllability, consisting of points whose neighborhoods are reachable in arbitrarily small time. This region is a body of revolution whose boundary, denoted by $\mathcal{T}$, has a torus-like structure. This boundary $\mathcal{T}$ is generated by rotating an explicitly described figure-eight-shaped curve \eqref{eq: 8-figure two equations} around the $r_x$-axis, as shown in Fig.~\ref{fig: final structure of boundary clear}. This provides a separation between locally controllable  and not locally controllable regions in the entire reachable set (see~\hyperref[sec: practical implementation]{Appendix} and Supplementary Online Material (SOM)).
\end{enumerate}

Now we provide precise formulations of these results, including the contributions of the present paper and relevant findings from previous works.

If we set $u=n=0$ in Equation~\eqref{eq: main QCS}, then every solution will approach the unique equilibrium state exponentially fast. In the basis $\{ |e\rangle, |g\rangle \}$, the equilibrium state is
$N = |g\rangle\langle g|$. Therefore, by Krener's theorem (see~\cite{AgrachevSachkov}), the reachable set $\mathcal{R}(\rho_0)$ from any initial state $\rho_0$ contains at least the interior of the reachable set $\mathcal{R}(N)$ from the state $N$ (see \cite[Section~4]{LokutsievskiyPechen1}). Moreover, for any initial state $\rho_0$ in the reachable set $\mathcal{R}(N)$, we have
\[
	\rho_0\in\mathcal{R}(N)
	\Rightarrow
	\operatorname{int} \mathcal{R}(N)\subset \mathcal{R}(\rho_0)\subset\mathcal{R}(N)\subset\operatorname{cl}\operatorname{int}\mathcal{R}(N).
\]
Therefore, determining the structure of the reachable set starting from the initial state $N$ is of primary interest.

Thus, we focus on the reachable set from the equilibrium state $N$. We consider two  already mentioned control scenarios: with only the coherent control (so that $n\equiv0$) and the full model with coherent and incoherent controls:
\begin{align*}
	\mathcal{R}_b & = \Big\{
	\rho_1 \mathrel{\Big|} \exists T\ge 0, u,n\ge 0: \rho(0)=N, \rho(T)=\rho_1
	\Big\},\\
	\mathcal{R}_c & = \Big\{\rho_1 \mathrel{\Big|} \exists T\ge 0, u,n\equiv 0: \rho(0)=N, \rho(T)=\rho_1
	\Big\}.
\end{align*}
Clearly, $\mathcal{R}_c\subset \mathcal{R}_b$. It was shown in \cite{LokutsievskiyPechen1} that setting $n\equiv0$ and steering system~\eqref{eq: main QCS}  only with the coherent control $u$ does not change the reachable set at the level of closure; closures of the reachable sets generated with and without incoherent control are the same. Specifically, the following result was proven:

\begin{theorem}[{see~\cite[Theorem~2]{LokutsievskiyPechen1}}]
\label{thm: cl int partial}
	Consider the six sets $\mathcal{R}_*$, $\operatorname{int}\mathcal{R}_*$, and $\operatorname{cl}\mathcal{R}_*$ where $*$ is $b$ or $c$. Then these six sets share the same boundary, interior, and closure.
\end{theorem}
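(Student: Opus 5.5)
Since $\mathcal{R}_c\subset\mathcal{R}_b$, the plan is to reduce the statement to two inclusions:
\[
\mathcal{R}_b\subset \operatorname{cl}\operatorname{int}\mathcal{R}_c
\qquad\text{and}\qquad
\operatorname{int}\mathcal{R}_c\neq\emptyset .
\]
Once these hold, we get $\operatorname{cl}\mathcal{R}_b=\operatorname{cl}\mathcal{R}_c=\operatorname{cl}\operatorname{int}\mathcal{R}_c$.

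For the interiors, we use the Krener-type inclusion $\mathcal{R}_*\subset\operatorname{cl}\operatorname{int}\mathcal{R}_*$ quoted above for both systems. It gives
\[
\operatorname{cl}\operatorname{int}\mathcal{R}_b=\operatorname{cl}\mathcal{R}_b=\operatorname{cl}\mathcal{R}_c=\operatorname{cl}\operatorname{int}\mathcal{R}_c .
\]
We then need $\operatorname{int}\mathcal{R}_b=\operatorname{int}\mathcal{R}_c$. One direction is trivial. For the other, take $\rho\in\operatorname{int}\mathcal{R}_b$. Two sets with a common closure and nonempty interiors need not have equal interiors, so extra structure is needed. I would show that $\operatorname{int}\operatorname{cl}\mathcal{R}_c\subset\mathcal{R}_c$. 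Any point $\rho$ in the interior of the closure can be approached by states of $\operatorname{int}\mathcal{R}_c$. By the footnoted property, $\mathcal{R}_c(\rho_0)=\mathcal{R}_c$ and $\operatorname{int}\mathcal{R}_c\subset\mathcal{R}_c(\rho_0)$ for any $\rho_0\in\mathcal{R}_c$. So it suffices to steer backwards locally. Near $\rho$ we would use small-time local controllability of the coherent system at points where the Lie rank condition holds: the brackets of $H_0$, $V$ and the dissipative drift span $\R^3$ at generic Bloch vectors. For the remaining points we would argue via the openness of the normal (Krener) reachable map, which gives that $\mathcal{R}_c$ contains a neighborhood of each point of $\operatorname{int}\operatorname{cl}\mathcal{R}_c$. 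Equality of interiors and closures then yields equality of boundaries, and all six sets coincide in boundary, interior and closure.

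The main step is $\mathcal{R}_b\subset\operatorname{cl}\mathcal{R}_c$, i.e.\ that incoherent control adds nothing up to closure. In Bloch coordinates the incoherent control $n$ enters affinely through the vector field $-\gamma n\,(r_x/2,\,r_y/2,\,r_z)$, up to sign convention, which is a pure contraction toward the center of the ball. I would show that this contraction direction lies in the closed convex cone generated by the coherent dynamics. That is, the vector $f_0+n g$ belongs to the closure of the convex hull of the velocities achievable by coherent control, where fast rotations about the $x$-axis with large $u$ and averaging (relaxation) are combined. Concretely, with $u\to\infty$ rapidly rotating in the $(r_y,r_z)$ plane, the averaged dissipation contracts $r_y,r_z$ toward $0$ while preserving $r_x$ direction. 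Combined with free precession about $z$, these produce effective isotropic-type contraction. By the relaxation (Filippov / chattering) theorem, trajectories of the convexified system are uniform limits of coherent trajectories, which gives $\mathcal{R}_b\subset\operatorname{cl}\mathcal{R}_c$.

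The expected obstacle is matching the exact contraction rates: the $n$-term contracts $r_z$ at twice the rate of $r_x,r_y$. I would first try to check this on the level of the reachable closures rather than velocities. Then I would show that $\operatorname{cl}\mathcal{R}_c$ is invariant under the flow of $g$, using the rotational symmetry about the $r_x$-axis that fast $u$-rotations provide, together with convexity of velocity sets along that orbit.
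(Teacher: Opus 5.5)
Your skeleton is sound: closures via $\mathcal{R}_b\subset\operatorname{cl}\mathcal{R}_c$, interiors via $\operatorname{int}\operatorname{cl}\mathcal{R}_c\subset\mathcal{R}_c$, plus Krener. But the central step is not actually proved, and the mechanism you propose does not produce the required vector.

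\textbf{The main step.} At a point $\rr$ the coherent velocities form the line $\omega f_0(\rr)+\R f_1(\rr)$. So $f_0+ng$ is not in their convex hull for $n>0$, $\rr\neq0$. You must pass to the reduced $(r_x,R)$ system, in which $\theta$ acts as a control. Because $n$ is unbounded, you also need the convex \emph{cone} $\R_+U(r_x,R)$ together with a time reparametrization, not the hull.

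Let $V(\theta)$ be the right-hand side of the reduced system, divided by $\omega$. Uniform fast rotation about the $r_x$-axis averages $V$ to $-\tfrac{\alpha}{2}\bigl(r_x,\tfrac32 R\bigr)$, which is a single direction. The incoherent term at angle $\theta$ instead adds $-\alpha n\bigl(r_x,(1+\sin^2\theta)R\bigr)$, whose direction varies with $\theta$. ``Free precession about $z$'' is part of the drift, not an extra independent direction. This is exactly the rate mismatch you flag and leave open. Your fallback, invariance of $\operatorname{cl}\mathcal{R}_c$ under the flow of $g$, is asserted without an argument.

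The missing idea is to average over the rotation by $\pi$, not over all angles. With $P=\mathrm{diag}(1,-1,-1)$ one checks $f_0(\rr)+Pf_0(P\rr)=2\alpha f_2(\rr)$; physically, conjugation by $\sigma_x$ swaps $\mathcal{D}[\sigma^-]$ and $\mathcal{D}[\sigma^+]$ and flips the sign of $H_0$. Equivalently,
\[
V(\theta)+V(\theta+\pi)=-\alpha\bigl(r_x,(1+\sin^2\theta)R\bigr).
\]
Hence, for every $\theta$, the reduced velocity with incoherent control is exactly $(1+n)V(\theta)+nV(\theta+\pi)=(1+2n)\,w$ with $w\in U(r_x,R)$. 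Rescaling time by the factor $1+2n$ and then relaxing gives $\mathcal{R}_b\subset\operatorname{cl}\mathcal{R}_c$. The relaxation means chattering between $\theta$ and $\theta+\pi$, realized by short strong $u$-pulses. There is no rate mismatch at all, and this also explains why $n$ can change steering times but not the closure.

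\textbf{The interior step.} This step relies on a false implication. For a system with drift, the Lie rank condition gives accessibility, not small-time local controllability. The paper's own figure-eight analysis shows that STL-controllability fails at every point outside $\mathcal T$. That region includes the whole shell of $\operatorname{int}\mathcal{R}$ between $\mathcal T$ and $\partial\mathcal{R}$. So backward steering by STLC is unavailable exactly where you need it, and the appeal to ``openness of the normal reachable map'' merely restates the desired conclusion.

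The correct tool is Krener's theorem for the time-reversed system. Take $\rho\in\operatorname{int}\operatorname{cl}\mathcal{R}_c$ and restrict to $|u|\le1$, which generates the same Lie algebra. For small $t$, the set of states from which $\rho$ is reachable within time $t$ then has nonempty interior and lies in a neighbourhood of $\rho$ contained in $\operatorname{cl}\mathcal{R}_c$. A nonempty open subset of $\operatorname{cl}\mathcal{R}_c$ must meet $\mathcal{R}_c$, hence $\rho\in\mathcal{R}_c$. This uses only the rank condition you already need for Krener. Combined with the closure equality and $\mathcal{R}_*\subset\operatorname{cl}\operatorname{int}\mathcal{R}_*$, it yields all the coincidences in the statement.
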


Thus, Theorem~\ref{thm: cl int partial} prevents the reachable set from being, for example, a full-dimensional body missing an interior point, or the union of a full-dimensional body and a curve.

In the present work, we are interested in the boundary  structure of these reachable sets. Therefore we use the unified notation
\[
	\boxed{\mathcal{R} = \operatorname{cl}\mathcal{R}_b = \operatorname{cl}\mathcal{R}_c.}
\]
With a slight abuse of terminology, we call  $\mathcal{R}$ also the ``reachable set'', although formally from the equilibrium state $N$ one can only reach the interior points $\operatorname{int}\mathcal{R}$, while the boundary points $\partial\mathcal{R}$ are only asymptotically reachable. Thus, the use of incoherent control $n\ge 0$ essentially does not increase the size of the reachable set. However, the inability to use incoherent control (i.e., for $n\equiv0$) can affect the optimal steering time.

Time-optimal steering from an arbitrary initial state $\rho_0$ to a final state $\rho_1$ in the absence of incoherent control was studied in~\cite{LokutsievskiyPechen2}. It was proven that in the absence of incoherent control (i.e.,~$n\equiv 0$), the time-optimal control $u(t)$ consists of two pulses at $t=0$ and $t=T$ and an analytic function on the open interval~$(0,T)$. The optimal steering time $T$ satisfies
\[
	\frac1\gamma\ln\left(\frac{1-s \mu_0}{1-s \mu_1}\right) \le
	T\le
	\frac{\pi}{\omega} + \frac{{\mathrm e}}\gamma +
	\frac1\gamma \ln\left(\frac{1-s \mu_0}{1-s \mu_1}\right),
\]
where $\mu_i = \sqrt{2\operatorname{tr}\rho_i^2 - 1}$ for $i=0,1$ and $s=\operatorname{sgn}(\mu_1-\mu_0)$. The lower bound holds when $\mu_1<1$, and the upper bound when $\mu_1\le 1-\frac\pi2\frac\gamma\omega$.

The characteristic size of the reachable set $\mathcal{R}$ was estimated in~\cite{LokutsievskiyPechen1}, although its exact shape was not determined. To describe the reachable set,  we switch to coordinates $\rr=(r_x,r_y,r_z)$ in the Bloch ball, which are given by the formula
\[
\rho=\frac12\left(\mathbb{I} + r_x\sigma_x + r_y\sigma_y + r_z\sigma_z\right).
\]
In these coordinates, the phase space of the system is the unit Bloch ball $\BB=\{\rr:r_x^2+r_y^2+r_z^2\le 1\}$ and Eq.~\eqref{eq: main QCS} takes the form
\begin{equation}
\label{eq: QCS in Bloch ball}
\dot{\rr} = \omega f_0(\rr) + \kappa f_1(\rr)u + 2\gamma f_2(\rr)n,
\end{equation}
where (see~\cite{LokutsievskiyPechen2} for details\footnote{Note that \cite{LokutsievskiyPechen2} uses a slightly different convention ($r_z\mapsto -r_z$ and $u\mapsto -u$) corresponding to the opposite choice of the raising and lowering operators $\sigma_\pm$.})
\begin{align*}
	f_0(\rr) & =
	\begin{pmatrix}
		0&-1&0\\
		1&0&0\\
		0&0&0
	\end{pmatrix}\rr
	-
	\alpha
	\begin{pmatrix}
		\frac12 & 0 & 0\\
		0 & \frac12 & 0\\
		0 & 0 & 1
	\end{pmatrix}\rr
	-
	\alpha
	\begin{pmatrix}
		0\\0\\1
	\end{pmatrix},\\
	f_1(\rr) & =
	\begin{pmatrix}
		0&0&0\\
		0&0&-1\\
		0&1&0
	\end{pmatrix}\rr,
	\quad
	f_2(\rr)=
	-\begin{pmatrix}
		\frac12 & 0 & 0\\
		0 & \frac12 & 0\\
		0 & 0 & 1
	\end{pmatrix}\rr
\end{align*}
with $\alpha<1$.

\begin{theorem}[{see~\cite[Theorem 3]{LokutsievskiyPechen1}}]
\label{thm: lacunas}
	The reachable set $\mathcal{R}$ is a solid of revolution around the $r_x$-axis and for any $\omega>\gamma>0$ it holds that
	$$
		\left\{\rr:|\rr|\le 1 - c_1\alpha\right\}
		\subset
		\mathcal{R}
		\not\supset
		\left\{\rr:|\rr|\le 1 - c_2\alpha\right\},
	$$
	where the constants $c_1>c_2>0$ are of order 1.
\end{theorem}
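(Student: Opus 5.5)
We prove the three assertions in turn: the rotational symmetry, the inner ball, and the excluded ball. By Theorem~\ref{thm: cl int partial} it suffices to work with $\mathcal{R}=\operatorname{cl}\mathcal{R}_c$, i.e.\ with coherent control $u$ only and $n\equiv0$.

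\textbf{Rotational symmetry.} Write $A$ for the $3\times3$ generator of rotations about the $r_x$-axis, so that $f_1(\rr)=A\rr$. The flow of $f_1$ is $e^{sA}$. Impulsive controls ($u=c/\varepsilon$ on an interval of length $\varepsilon$, with $\varepsilon\to0$) realize any rotation $e^{sA}$ in the limit. Hence $e^{sA}\mathcal{R}\subset\mathcal{R}$ for every $s$, because $\mathcal{R}$ is closed. Taking $s\mapsto -s$ gives equality, so $\mathcal{R}$ is a solid of revolution about the $r_x$-axis.

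\textbf{Inner ball.} Fix $\rho\in\mathcal{R}$ and use the averaged dynamics. Compose the rotation $R_s=e^{sA}$ with the drift $f_0$ and average over fast rotations in $s$ (chattering). The effective field at $\rr$ is the convex hull of $R_s^{-1}f_0(R_s\rr)$ over $s$. By Filippov/relaxation together with closedness, trajectories of the convexified system stay in $\mathcal{R}$.

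In the $(r_y,r_z)$-plane, fast rotation averages the anisotropic damping $\mathrm{diag}(\tfrac12,\tfrac12,1)$ to $\mathrm{diag}(\tfrac12,\tfrac34,\tfrac34)$. It also averages the constant term $-\alpha e_z$ to zero. The $\omega$-rotation about $r_z$ mixes $r_x$ with $r_y$. Combining short drift steps with rotations, one obtains the following:
\begin{itemize}
\item The drift moves the state unitarily in the $(r_x,r_y)$ plane at rate $1$.
\item The dissipation changes $|\rr|$ at rate $O(\alpha)$.
\item The purity change is governed by $\tfrac{d}{dt}|\rr|^2=-\alpha\bigl(r_x^2+r_y^2+2r_z^2+2r_z\bigr)$, multiplied by $\omega$.
\end{itemize}
This rate is positive near the south pole $r_z=-1$ (since $2r_z^2+2r_z<0$ there while $r_x^2+r_y^2$ is small). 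Starting from $N$ ($r_z=-1$, pure), the state can therefore be steered so that $|\rr|$ stays $\ge 1-c\alpha$. The mechanism is to rotate quickly back toward the $r_z<0$ hemisphere whenever the purity loss accumulated during one $O(1)$-time drift arc reaches order $\alpha$. Concretely, we fix a target direction and alternate rotations about $r_x$ with drift periods of length $\le\pi$. Since the drift rotates about $r_z$ and the control rotates about $r_x$, these generate $SO(3)$ with bounded total drift time $\le C$. The total purity loss is then $\le C\alpha$, which gives every direction at radius $1-c_1\alpha$. Filling the ball below that radius follows by additionally waiting, which lets the state relax inward, then rotating.

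\textbf{Excluded ball.} We find a function that cannot increase beyond a level near the sphere. Take $V(\rr)=|\rr|^2$. The control term contributes zero since $f_1\perp\rr$. Hence $\dot V=-\omega\alpha\,(r_x^2+r_y^2+2r_z^2+2r_z)$. This is negative unless $r_z<0$ with $r_x^2+r_y^2+2r_z^2<-2r_z$. Near the unit sphere that forces $r_z$ close to $-1$, i.e.\ a cap around the south pole. A Lyapunov function that is merely $|\rr|$ only says that near-pure states exist near $N$, so $V$ alone is insufficient.

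Instead we pick a point $\rr^*$ on the sphere in the region where the dissipation is strongest, say on the equator $r_z=0$ or the north pole, and show that the ball of radius $1-c_2\alpha$ does not contain it. For this we use a modified function $W=|\rr|^2+\alpha\,\phi(\rr)$ with $\phi$ rotation-invariant about $r_x$; we try $\phi=\beta r_x^2$ or $\phi$ depending on $r_y^2+r_z^2$. We choose $\phi$ so that $\dot W\le0$ whenever $W\ge 1-c\alpha$, for all $u$. Invariance of the sublevel set then bounds the reachable set.

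Since $u$ is unbounded, $\phi$ must be invariant under $f_1$, and $\dot W$ must be controlled using only the drift. This is the main obstacle. The $f_0$-rotation about $r_z$ is $O(1)$ and not small, so $\phi$ must absorb it. The cleanest first attempt is to average: rotation invariance plus the estimate on $\dot V$ shows that staying at purity $1-O(\alpha)$ requires $r_z\approx-1$. Then we argue that rotating away to the equator costs time $\gtrsim1$ of drift, during which $V$ drops by $\gtrsim\alpha$. This yields the constant $c_2$.
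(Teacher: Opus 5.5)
The paper does not prove this statement (it quotes it from \cite{LokutsievskiyPechen1}), so your argument has to stand on its own. Your rotational-symmetry step is fine.

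\textbf{Excluded ball: genuine gap.} First, ``show that the ball of radius $1-c_2\alpha$ does not contain $\rr^*$'' for $\rr^*$ on the sphere is vacuous. What you must exhibit is a point of norm at most $1-c_2\alpha$ that lies outside $\mathcal R$.

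Second, your heuristic points the wrong way. The north pole and equatorial points such as $(0,\pm1,0)$ are images of $N$ under the free rotations $e^{sA}$, so by your own first step they lie in $\mathcal R$. In fact the whole great circle $\{r_x=0\}$ of the Bloch sphere does, and ``rotating away to the equator'' costs no drift time at all. The only expensive direction is building up $r_x$, which only the drift rotation about $r_z$ can do. The lacunas therefore sit near $(\pm1,0,0)$, where the paper's boundary has its conical points.

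Third, your trial $\phi\propto r_x^2$ cannot work. For $\phi=\mu r_x^2$ one gets $\dot W=-\omega\alpha\bigl[(1+r_z)^2+2\mu r_xr_y\bigr]$ on $\{W=1\}$. Near $N$ the cross term is quadratic and of either sign, while $(1+r_z)^2$ is quartic.

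Your Lyapunov idea does go through with a quartic correction. Take $W=|\rr|^2+\lambda\alpha r_x^4$ with $\lambda=1/(3\sqrt3)$. The $f_1$-term drops out, and
\[
\dot W=-\omega\alpha\Bigl[(W-1)+(1+r_z)^2+4\lambda r_x^3r_y+\lambda\alpha r_x^4\Bigr],\qquad |r_x^3r_y|\le\tfrac{3\sqrt3}{16}(r_x^2+r_y^2)^2\le\tfrac{3\sqrt3}{4}(1+r_z)^2 \ \text{ on } \BB,
\]
so $\dot W\le-\omega\alpha(W-1)$ for every $u$. Since $W(N)=1$, this gives $\mathcal R\subset\{|\rr|^2+\frac{\alpha}{3\sqrt3}r_x^4\le1\}$. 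That set excludes $(1-c_2\alpha,0,0)$, e.g.\ with $c_2=\frac1{20}$ for all $0<\alpha<1$. As a check, this is consistent with \eqref{eq: R Taylor}, and sharp against it to first order in $\alpha$: that expansion gives $r_x^2+R^2=1-\frac14\alpha\beta(1+\beta^2)r_x^4+O(r_x^6)$ on $\Gamma$, with $\frac14\alpha\beta(1+\beta^2)\ge\frac{\alpha}{3\sqrt3}$.

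\textbf{Inner ball: closer to a proof, but incomplete.} The drift is a rotation plus $O(\alpha)$ dissipation. So your construction yields, for each direction, only a reachable point within $O(\alpha)$ of the sphere in \emph{roughly} that direction. For example, the free arc from $(0,1,0)$ passes the $r_x$-axis at distance about $\pi\alpha/2$ without meeting it. Hence ``every direction at radius $1-c_1\alpha$'' does not follow.

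What you need is a \emph{closed} reachable curve in the reduced $(r_x,R)$-plane that encloses the disk of radius $1-c_1\alpha$. You can close the arc with a short piece at $\theta=0$, where $\dot R=\omega(r_x-\frac\alpha2R)<0$ near $r_x=-1$, and then add the mirror images. After that, use the crossing argument of step (i) in the SOM proof of the exact boundary. Take an admissible flow whose backward orbits escape to infinity; the backward orbit of any interior point must meet the closed curve, so the point is reachable. The $\theta\equiv0$ logarithmic spirals serve for this. So does your $u=0$ flow in 3D, whose backward orbits escape from every point except $N$.

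Two smaller corrections. ``Waiting lets the state relax inward'' is not the mechanism: free relaxation ends at $N$ on the sphere. And the sign of $\frac{d}{dt}|\rr|^2$ near the south pole plays no role. The radius bound only needs $\bigl|\frac{d}{dt}|\rr|^2\bigr|\le4\omega\alpha$ on $\BB$ together with total drift time $O(1/\omega)$.
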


\begin{figure}
\centering
\includegraphics[width=0.3\linewidth]{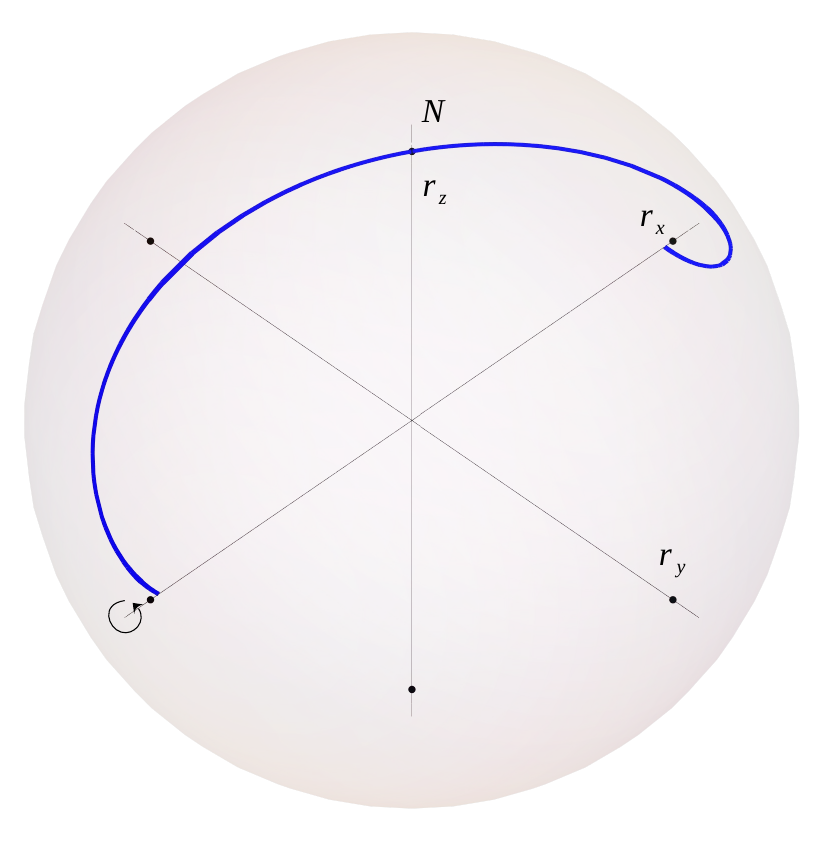}
\includegraphics[width=0.3\linewidth]{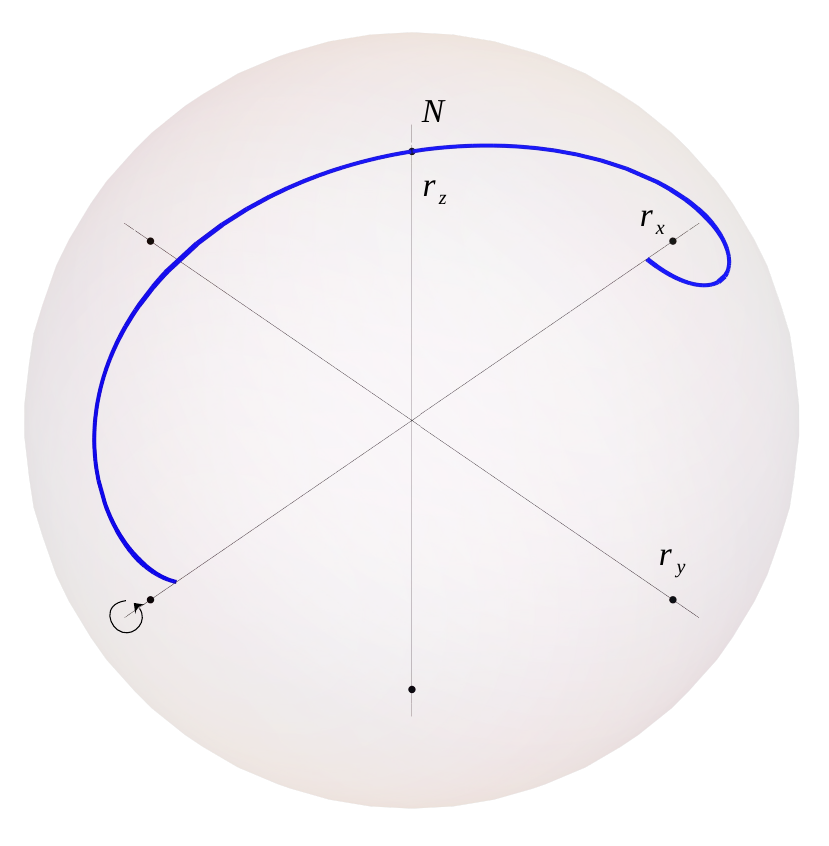}
\includegraphics[width=0.3\linewidth]{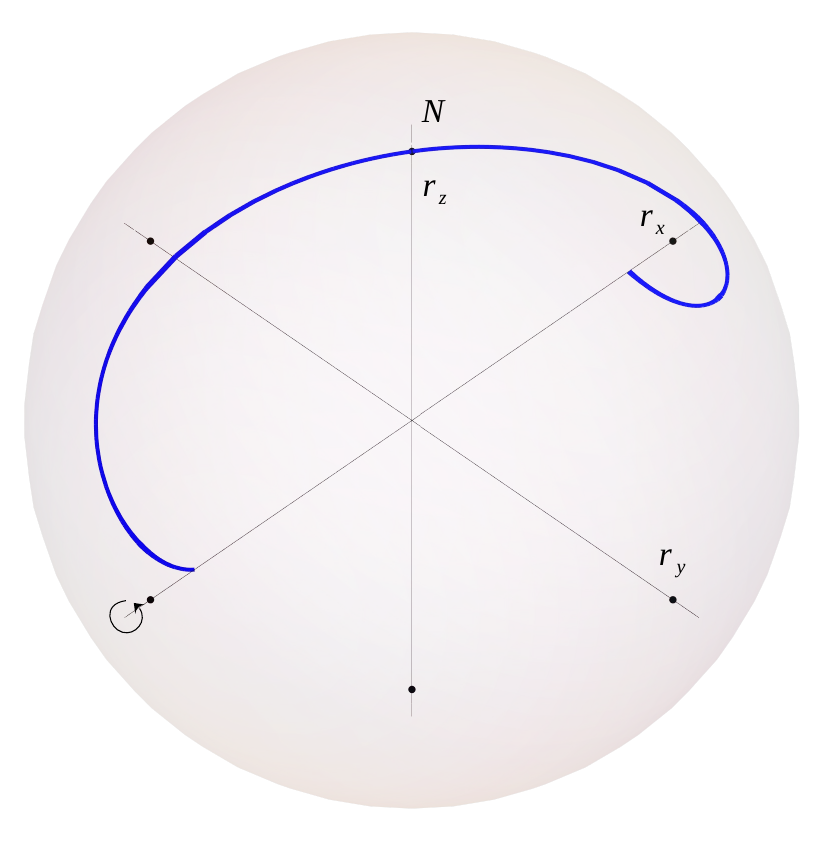}\\
\includegraphics[width=0.3\linewidth]{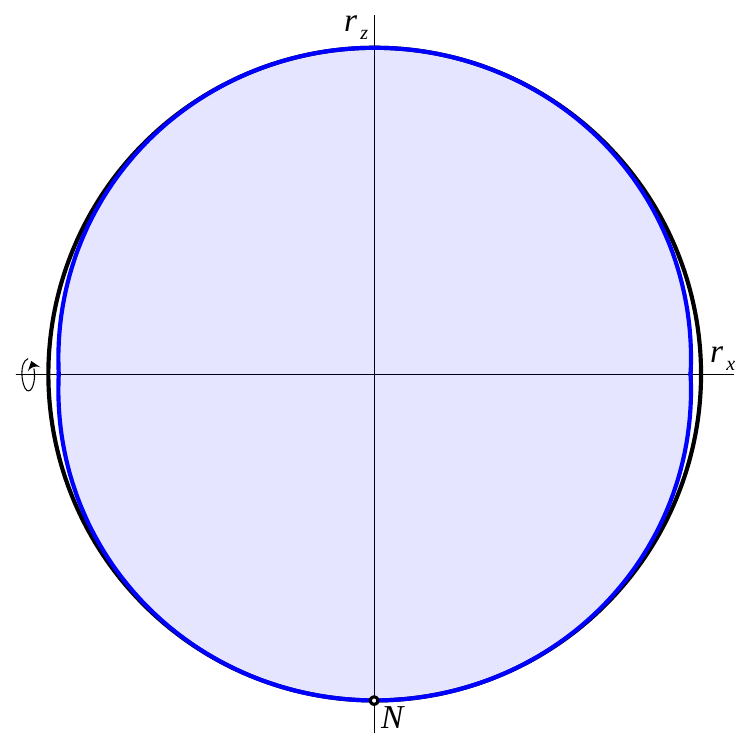}
\includegraphics[width=0.3\linewidth]{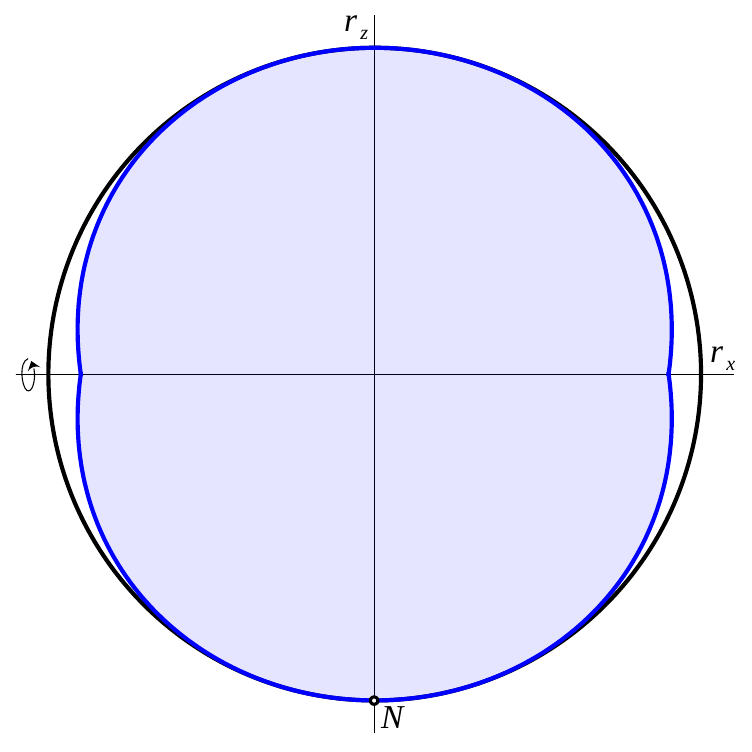}
\includegraphics[width=0.3\linewidth]{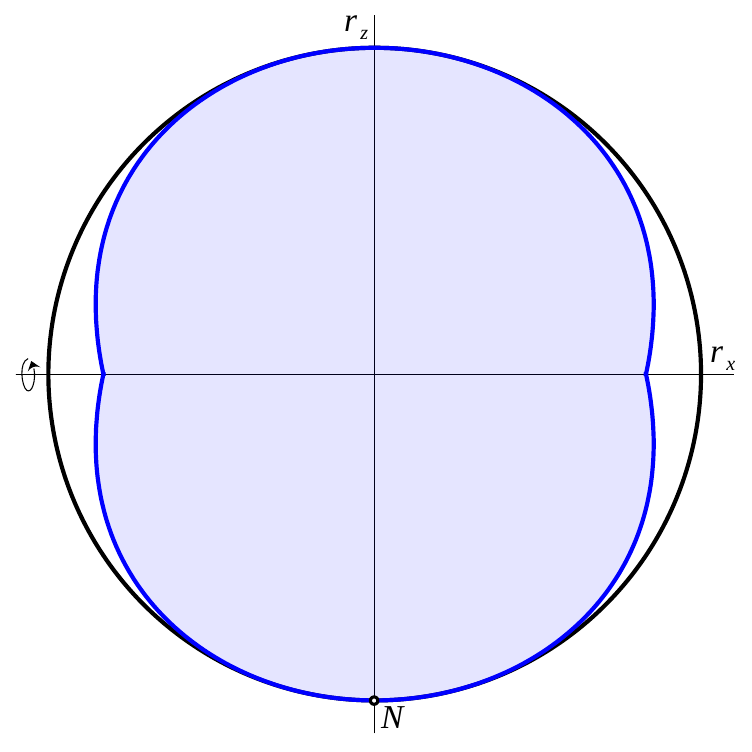}

\caption{\label{fig: 3d separatrix-1}The first row shows the curve $\Gamma\subset\partial\mathcal{R}$ in the Bloch ball for $\alpha=\gamma/\omega=0.1, 0.3, 0.5$ from left to right. The second row shows the intersection of the boundary $\partial\mathcal{R}$ with $r_xr_z$-plane for the same values of $\alpha$.}
\end{figure}

In other words, if $\alpha\ll 1$, then the reachable set is close to the entire Bloch ball $\BB$ (as it contains a ball of radius $1-c_1\alpha$), but never coincides with it (as it never contains the slightly larger ball of radius $1-c_2\alpha$). Thus, the boundary of the reachable set $\partial \mathcal{R}$ has a non-trivial structure, which is the main object of study in the present paper.

A priori, the boundary $\partial\mathcal{R}$ may not be a smooth two-dimensional surface. Here we prove that $\partial\mathcal{R}$ is a two-dimensional surface of revolution, which is smooth except for two conical singularities on the $r_x$-axis.

\textit{The set of reachable states}---Now we provide a precise description of $\partial\mathcal{R}$ as obtained in the present work. According to Theorem~\ref{thm: lacunas}, the boundary $\partial\mathcal{R}$ is a surface of revolution about the $r_x$-axis. Hence, in order to construct $\partial\mathcal{R}$, it is sufficient to find a curve $\Gamma\subset\partial\mathcal{R}$ such that rotating it about the $r_x$-axis forms $\partial\mathcal{R}$. This curve for some parameters is shown in Fig.~\ref{fig: 3d separatrix-1}.

\begin{theorem}
\label{thm: main} The boundary of the reachable set $\partial\mathcal R$ is obtained by rotating an analytic curve $\Gamma$ about the $r_x$-axis. The curve $\Gamma$
lies in the lower half of the Bloch ball $\BB \cap \{r_z \le 0\}$, passes through the equilibrium point $N$, and finishes at the $r_x$-axis (see 
Fig.~\ref{fig: 3d separatrix-1}). Curve $\Gamma$ away from $N$ is a trajectory of~\eqref{eq: QCS in Bloch ball} (or equivalently~\eqref{eq: main QCS}) with the control
\begin{equation}
\label{eq: u_b}
	\boxed{u = u_b = \frac\omega\kappa \frac{(1+r_z)(2 r_x r_z+2 \alpha r_y+\alpha r_yr_z)}{r_y^2+r_z^2+r_z^3}}
\end{equation}
\end{theorem}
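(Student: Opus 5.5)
Since $\mathcal R$ is a solid of revolution about the $r_x$-axis (Theorem~\ref{thm: lacunas}), we work in the reduced two-dimensional picture. Set $r_x$ and $\rho=\sqrt{r_y^2+r_z^2}$, or equivalently use the angle $\varphi$ of $(r_y,r_z)$. The control field $\kappa f_1 u$ is exactly the rotation generator about the $r_x$-axis. With unbounded $u$, the system can therefore rotate freely, and only the drift $\omega f_0$ moves the state in the $(r_x,\rho)$ half-plane. This means the boundary is governed by the \emph{convexified} or \emph{extremal} drift: at each point $(r_x,\rho)$ the admissible velocities of the reduced system are the images of $f_0$ over the whole circle of rotation. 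The boundary $\partial\mathcal R$ is then the locus in the half-plane where the tangent is supported by an extremal velocity.

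\textbf{Step 1: the reduced field.} For a point at angle $\varphi$ on the circle of radius $\rho$, compute the velocity of $(r_x,\rho)$ under $f_0$. We get $\dot r_x = -\rho\sin\varphi\cdot(\text{rotation part}) - \tfrac\alpha2 r_x$ and $\dot\rho = (\text{terms in }r_x,\rho,\varphi,\alpha)$. The family of velocities parametrized by $\varphi$ is a closed curve (an ellipse-like set), and the boundary must be tangent to a supporting direction of its convex hull. Maximum-principle reasoning (or the characterization from~\cite{LokutsievskiyPechen1,LokutsievskiyPechen2} via Sussmann's theory) says that along $\partial\mathcal R$ the outward normal $p$ satisfies $\max_\varphi\langle p, F(\varphi)\rangle=0$. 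So $\partial\mathcal R$ in the half-plane is the union of curves along which $\langle p,F\rangle$ is stationary and equal to zero in $\varphi$. This yields an implicit ODE for the generating curve. Pulling it back to 3D, the trajectory of~\eqref{eq: QCS in Bloch ball} that stays on the surface must keep its velocity tangent to the surface of revolution. That tangency condition is linear in $u$. Solving it for $u$, with the stationarity condition in $\varphi$ fixing the position of the trajectory on the circle, should give exactly~\eqref{eq: u_b}. I would verify this by directly checking that with $u=u_b$ the quantity defining tangency is preserved.

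\textbf{Step 2: construct $\Gamma$.} Integrate the analytic ODE $\dot\rr=\omega f_0+\kappa f_1 u_b$ starting near $N=(0,0,-1)$. Note that $u_b$ has the form $0/0$ at $N$, since $1+r_z=0$ and the denominator $r_y^2+r_z^2+r_z^3$ vanishes there. I would blow up at $N$ to show that a unique analytic branch leaves $N$ and continues it until it hits the $r_x$-axis. I would also show $r_z\le0$ along the branch by checking the sign of $\dot r_z=-\alpha(1+r_z)+\dots$ on the plane $r_z=0$. The two endpoints on the axis (via symmetry $r_x\mapsto -r_x$ together with $N$) give the conical points.

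\textbf{Step 3: show the rotated surface is exactly $\partial\mathcal R$.} For the inclusion of the surface in $\mathcal R$, observe that points on $\Gamma$ are limits of trajectories from $N$, and apply rotations via $u$ impulses. For the reverse direction, show that the region bounded by the rotated surface is invariant. On the surface, every admissible velocity $\omega f_0+\kappa f_1u$ points inward or tangentially, since rotations are tangent and $f_0$ satisfies the supporting inequality by construction. Theorem~\ref{thm: cl int partial} then controls the closure issues.

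The main obstacle is Step 2 near $N$ and the endpoint analysis. We must prove the singular ODE has an analytic solution through $N$ and that the curve reaches the axis transversally, forming a cone rather than a cusp. I would attempt a power-series ansatz $r_x=a_1s+\dots$, $r_y=b_1s+\dots$ at $N$ and a linearization at the axis endpoints.
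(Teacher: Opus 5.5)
Your outline (reduce to the $(r_x,R)$ half-plane, use the extremal-generator condition, build a candidate curve, show it is reachable and that the enclosed region is invariant) matches the architecture of the paper's proof. However, the object that carries the argument is missing: the invariant surface $\{S=0\}$, with $S=r_z(r_x^2+r_y^2)+(1+r_z)(2r_y^2+2r_z^2+\alpha r_xr_y)$. This is your stationarity condition $\det[V,\partial_\theta V]=0$ lifted to the Bloch ball.

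Your account of where $u_b$ comes from is wrong. $\kappa f_1$ is the infinitesimal rotation about the $r_x$-axis, so tangency of $\omega f_0+\kappa f_1u$ to a surface of revolution does not involve $u$ at all. The control enters only through $\dot\theta$. $u_b$ is the control that keeps $\theta$ on the chosen root, i.e.\ makes $\{S=0\}$ invariant: solving $\dot S=0$ for $u$ and simplifying with $S=0$ gives exactly $u_b$.

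Consequently the blow-up in Step~2 has to be done on $\{S=0\}$, in local coordinates $(r_x,\theta)$, where $u_b$ is Lipschitz but not smooth at $N$. The blown-up circle carries four hyperbolic points. Two give the unstable separatrices $\Gamma^\pm$; the other two give stable separatrices lying outside $\BB$.

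That structure, not merely the existence of a branch leaving $N$, is what proves $\Gamma\subset\mathcal R$. Your claim that ``points of $\Gamma$ are limits of trajectories from $N$'' has no mechanism behind it. The paper explicitly notes that leaving $N$ arbitrarily and then switching on $u_b$ can take you far from $\Gamma$. The actual argument runs as follows:
\begin{itemize}
\item reachable points of the exterior $E_+$ of the figure-eight exist arbitrarily close to $N$, because the figure-eight minus the poles lies in $\Int\bar{\mathcal R}$;
\item their lifts to $\{S=0\}$ lie inside $\BB$, hence off the stable separatrices;
\item so their $u_b$-trajectories leave a small neighbourhood of $N$ close to $\Gamma^\pm$;
\item forward invariance and closedness of $\mathcal R$ finish.
\end{itemize}

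The reverse inclusion is not ``by construction'' either. The $u_b$-flow preserves only stationarity, and $S=0$ has two roots, one for each generator of $C(q)=\R_+U(q)$. You must show three things:
\begin{itemize}
\item the projected curve stays in $E_+$, avoids the origin, and reaches $R=0$ in finite time (the paper uses that on $\{S=0\}$ one has $\dot R=0$ only on the figure-eight, plus a cone-swapping argument at $E$);
\item the untraced generator points inward (its orientation is constant in $E_+$ and is fixed by the $\theta=0,\pi$ logarithmic spirals);
\item the enclosed region is filled, i.e.\ the $\theta=0$ spiral followed backward from any interior point hits the curve. Without this, $\partial\mathcal R$ could have further components.
\end{itemize}

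Finally, your check of $r_z\le 0$ fails. On $\{r_z=0,\ r_y\ne0\}$ one has $\frac\kappa\omega u_b=2\alpha/r_y$, hence $\dot r_z=-\alpha\omega+2\alpha\omega=\alpha\omega>0$, so that plane is crossed upward, not repelled. In fact, on the branch with $r_x>0$ near the axis, the root of $S=0$ with $\dot R<0$ tends to $\theta=\pi-\arctan(\alpha/r_x)$, where $\sin\theta>0$. So the literal $u_b$-trajectory reaches the axis from the side $r_z>0$. The paper does not prove this sub-claim either; its argument concerns only the projection to the $(r_x,R)$ half-plane, which is all that determines the rotated surface $\partial\mathcal R$.
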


This theorem immediately implies that $\partial R$ is analytic except two conical points on the $r_x$-axis.

The curve $\Gamma$ consists of the point $N$ and two symmetric branches, $\Gamma=\Gamma^+ \sqcup \Gamma^- \sqcup \{N\}$. The branches $\Gamma^\pm$ are trajectories of \eqref{eq: QCS in Bloch ball} with $u=u_b$. Moreover, both of these solutions tend to $N$ as $t\to-\infty$. Thus, $\Gamma^\pm$ are the unstable separatrices of the point $N$. We prove that the curve $\Gamma$ lies on the surface
\begin{equation}
\label{eq: S surface}
	\frac{r_z}{1+r_z} + \frac{2r_y^2+2r_z^2+\alpha r_x r_y}{r_x^2+r_y^2} = 0.
\end{equation}
This surface in fact has no singularity at $N$ and is smooth everywhere except at the origin (which always lies in the interior of the reachable set). Indeed, we show that in a neighborhood of $N$, the curve $\Gamma$ in cylindrical coordinates $(r_x,r_y,r_z)=(r_x,R\cos\theta,R\sin\theta)$ is given by the following series expansions:
\begin{align}
	\label{eq: R Taylor}
	R&=1-\frac12r_x^2 -\frac18\beta^2(3\beta^2+2)r_x^4 + O(r_x^6),
	\\
	\label{eq: theta Taylor}
	\theta&=-\frac\pi2 - \beta r_x + O(r_x^3),
\end{align}
where $\beta = \frac16(\alpha+\sqrt{12+\alpha^2})\simeq \frac1{\sqrt{3}}$.

A detailed proof of Theorem~\ref{thm: main} is provided in the Supplementary Online Material (SOM).

\textit{Approaching boundary states}---Now we show that in this system the boundary $\partial\mathcal{R}$ of the reachable set not only can be described, but can be approached with a precise control protocol. Constructing a control protocol to steer the qubit state along the boundary is a non-trivial problem since the control $u_b$ is singular at the equilibrium point $N$, and simply starting in a small vicinity of $N$  and applying the control $u_b$ may move far away from the curve $\Gamma$. For a working protocol, we need to prepare the state on the special torus-like surface $\mathcal T$ obtained by rotating about the $r_x$-axis a figure-eight curve defined  by (see Fig.~\ref{fig: final structure of boundary clear}):
\begin{equation}
\label{eq: 8-figure two equations}
\alpha r_x + 2r_y=(4+\alpha^2)r_y^2 + 2\alpha^2 r_z(1+r_z)=0.
\end{equation}
Once the state on $\mathcal T$ is prepared, we apply control $u_b$.

\begin{figure}[t]
	\centering
	\includegraphics[width=0.5\linewidth]{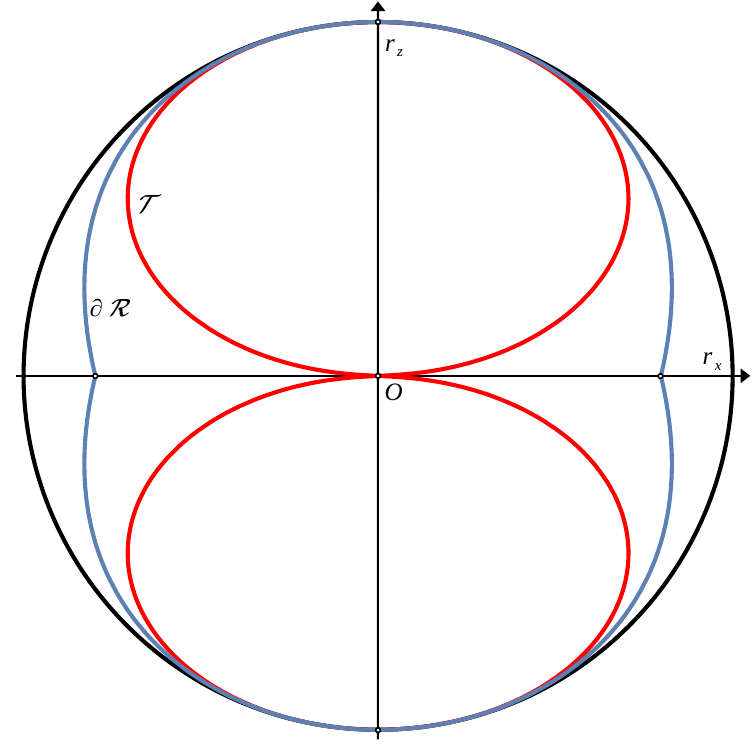}
	\caption{\label{fig: final structure of boundary clear}Structure of the intersections of $\partial\mathcal R$ and $\mathcal T$ with the $r_xr_z$-plane $\{r_y=0\}$ in the Bloch ball~$\BB$.}
\end{figure}

This can be done by a special control 
\[
\frac\kappa\omega u_p(t) = \begin{cases}
	\delta,&\text{for }t\in[0,\tau),\\
	\frac1{R^2}(r_xr_z+\alpha r_y+\frac12\alpha r_yr_z),&\text{for }t\in[\tau,T],\\
	\frac\kappa\omega u_b,&\text{for }t>T.
\end{cases}
\]
where $R=\sqrt{r_y^2+r_z^2}$. In the limit $\varepsilon:=\tau\delta\to0$, $|\delta|\to\infty$, $\tau\to0$, and $T\to\infty$, this control moves the state arbitrarily close to $\mathcal T$ near $N$, and then makes the subsequent trajectory approaching the boundary curve $\Gamma$. Finally, applying a control close to a $\delta$-impulse at a suitable time moves the system state arbitrarily close to any target state on the boundary of the reachable set. Thus the boundary found in Theorem~\ref{thm: main} can be attainable in the asymptotic sense with the specified protocol, and not merely an abstract boundary of the closure of the reachable set. Full details of this procedure are given in the End Matter.

\textit{Conclusions}---In this work, we have obtained a complete analytical description of the boundary of the reachable set for an open two-level quantum control system. This result provides fundamental limits on the ability to manipulate two-level open quantum systems---it shows which states can be created, at least in principle, using arbitrary controls within the considered class, and which states are unreachable in any finite or infinite time. These limits are determined by the ratio of the relaxation rate to the transition frequency of the two-level system. 

We have shown that the reachable set has highly nontrivial geometry as a subset of the Bloch ball. Its boundary is a surface of revolution that is smooth, indeed analytic, except at two conical singularities. We have also derived an explicit control protocol that generates the dynamics on this boundary, and constructed control protocol that steers the system arbitrarily close to any prescribed boundary state. Thus, the results provide both a geometric characterization of the reachable set of an open qubit and a constructive description of extremal state-preparation protocols.

These findings can be used across a wide range of quantum control problems, including establishing upper bounds on achievable fidelities in open-qubit quantum computing models, analyzing limits on spin dynamics in NMR and related systems, and quantifying the ultimate controllability limits imposed by dissipation in two-level quantum systems.

\textit{Acknowledgments}---This work was supported by the Theoretical Physics and Mathematics Advancement Foundation ``BASIS''. 

\textit{Data availability}—No data were created or analyzed in
this Letter.

\putbib[apssamp,main_prlNotes]
\makeatletter
\global\let\@FMN@list\@empty
\makeatother

\section{\label{sec: practical implementation}Derivation of the protocol for motion along the boundary of the reachable set}

Here we describe in details an explicit practical method for steering the qubit state along the boundary $\partial\mathcal{R}$, which solves the problem of reaching qubit states that are as close to the set of pure states as possible. Without loss of generality, we can assume that $\rr(0)=N$. Indeed, if this is not the case, it is sufficient to set $u=0$, and any state $\rr(t)$, as a solution of~\eqref{eq: QCS in Bloch ball} with $u\equiv n\equiv 0$, will tend to the equilibrium $N$ exponentially fast.

The difficulty of initiating the control at $N$ lies in the fact that the control $u_b$ in the Bloch ball $\BB$ has a singularity at the equilibrium point $N$ and on the axis $\{r_y=r_z=0\}$. Therefore, for the practical implementation of motion along the boundary $\partial\mathcal R$ starting from $N$, it is necessary to move slightly away from $N$. Unfortunately, if one moves away from $N$ using an arbitrary control and then switches to the control $u_b$, one can end up very far from the curve $\Gamma$. The reason is that in the Bloch ball $\BB$, there is a torus-like surface $\mathcal T$, obtained by rotating a figure-eight curve about the $r_x$-axis. The figure-eight curve lies in the plane $\alpha r_x + 2r_y=0$ and is defined in that plane by the formula (see Fig.~\ref{fig: final structure of boundary clear}):
\[
(4+\alpha^2)r_y^2 + 2\alpha^2 r_z(1+r_z)=0.
\]
This surface bounds the region of small-time local controllability, consisting of points whose neighborhoods are reachable in arbitrarily small time. (See the SOM).

Formally, the control $u_b$ is defined both inside and outside the surface $\mathcal T$, but physically, the control $u_b$ is meaningful only outside this surface. The point $N$ lies on the surface $\mathcal T$, and the surface $\mathcal T$ itself is tangent to the Bloch sphere at this point. Therefore, if one moves away from $N$ arbitrarily, the state $\rho$ may end up inside $\mathcal T$ (and most likely will), and the control $u_b$ will no longer be physically meaningful.

Below, we propose an explicit form of control that steers the state arbitrarily close to the boundary $\partial\mathcal{R}$. For details regarding the structure of this control, we refer the reader to the proofs in the SOM.

First, it is necessary to reach the surface $\mathcal T$ in the vicinity of $N$, and then switch to the control $u_b$. This can be done as follows. Let us choose a small $\varepsilon$ (possibly negative) and a large $|\delta|$ (with $\delta$ possibly negative), as well as $\tau,T>0$ such that $\tau\delta=\varepsilon$. The limiting case corresponds to $\varepsilon\to 0$, $|\delta|\to\infty$, $\tau\to0$, and $T\to+\infty$. For example, $\delta=\varepsilon^{-1}$, $\tau=\varepsilon^2$, and $T=|\varepsilon|^{-1}$. Although other choices are possible: the choice of parameters $\varepsilon$, $\tau$, $\delta$, $T$ affects the accuracy and the steering time. Let us set
\[
\frac\kappa\omega u_p(t) = \begin{cases}
	\delta,&\text{for }t\in[0,\tau),\\
	\frac1{R^2}(r_xr_z+\alpha r_y+\frac12\alpha r_yr_z),&\text{for }t\in[\tau,T],\\
	\frac\kappa\omega u_b,&\text{for }t>T.
\end{cases}
\]
Here, $R=\sqrt{r_y^2+r_z^2}$.

In the first interval $t\in[0,\tau)$, we approximate the Dirac $\delta$-function to obtain $\theta(\tau)=\frac\pi2+\varepsilon$, where $\theta=\arg(r_y+\mathrm{i}r_z)$. In the second interval, we apply a control such that $\dot\theta=0$, and, consequently, $(r_x,R)\to E(\frac\pi2+\varepsilon)$, where
\[
E(\varphi)=\begin{pmatrix}
	r_x(\varphi)\\ R(\varphi)
\end{pmatrix} 
= 
\frac{2\alpha\sin\varphi}{2 \alpha^2 + (4-\alpha^2) \cos^2\varphi}
\begin{pmatrix}
	2\cos\varphi\\ -\alpha
\end{pmatrix}.
\]
The subsequent application of the control $u_b$ will guide the state close to the curve $\Gamma$.

Under the control $u=u_p$, the state $(r_x,r_y,r_z)$ at time $T$ will be arbitrarily close to a point on the surface $\mathcal T$ with $r_z=-2\alpha^2\cos\varepsilon/(2\alpha^2 + (4-\alpha^2)\sin^2\varepsilon)$. For $t>T$, the state of the system will move arbitrarily close to the curve $\Gamma$. At a finite time $T_1>T$, this trajectory will reach the $r_x$-axis, i.e., $r_y(T_1)=r_z(T_1)=0$. For $t\in(T,T_1)$, this trajectory runs arbitrarily close to $\Gamma\subset\partial\mathcal R$. For $t>T_1$, this trajectory lies strictly inside $\mathcal R$. Therefore, to obtain an arbitrary point on the boundary $\partial\mathcal R$, one must apply a control close to a Dirac $\delta$-function at the required moment $s\in(T,T_1)$, for example:
\[
	\frac\kappa\omega u_p(t) = \frac\theta\sigma,\text{ for }t\in[s,s+\sigma],
\]
where $\theta$ is the required angle of rotation around the $r_x$-axis. Thus, one can obtain a final state $\rho(s+\sigma)$ arbitrarily close to any point $\rho_1\in\partial\mathcal R$ on the boundary of the reachable set. Note that it is impossible to remain at this point due to the drift in~\eqref{eq: QCS in Bloch ball}, so for $t>s+\sigma$ the state will inevitably drift away.

To transfer the initial state $\rho(0)=N$ to a final state $\rho_1$ in the interior of the reachable set, $\rho_1\in\operatorname{int}\mathcal{R}$, one must correctly choose the instant $s$ and ensure that $\theta(s+\sigma)=0$ or $\theta(s+\sigma)=\pi$. If after this, for $t>s+\sigma$, one uses
\begin{equation}
\label{eq: spiral control}
	\frac\kappa\omega u_p(t) = \frac14\alpha\sin2\theta + \frac1R(r_x\sin\theta+\alpha\cos\theta),
\end{equation}
then $\theta(t)\equiv\const$ for $t>s+\sigma$. In both cases ($\theta(t)=0$ and $\theta(t)=\pi$), the solutions of system~\eqref{eq: QCS in Bloch ball} will be logarithmic spirals tending towards the origin. Therefore, if a final state $\rho_1=(r_{x1},r_{y1},r_{z1})$ is given, the initial state $\rho_0=N$ can be transferred to a state $\rho(\tau)$, where $\tau\ge s+\sigma$, with the required $r_{x1}$ and $R_1=\sqrt{r_{y1}^2+r_{z1}^2}$. To reach $\rho_1$ in the interval $[\tau,\tau+\sigma]$, it is necessary to use an analogous to the $\delta$-pulse described above. Note that if $\sqrt{2\mathrm{tr}\,\rho_1^2-1}<1-\frac\pi2\alpha$, the scheme described above is not needed, and one can use a much simpler method proposed in the paper~\cite{LokutsievskiyPechen2}.

\end{bibunit}

\onecolumngrid

\appendix
\begin{bibunit}

\newpage 

\begin{center}
\Large\bf Supplementary material for the manuscript\\ "Fundamental limits on state preparation for an open qubit"
\end{center}

In this supplementary material, we provide all the proofs for the results of the work. Specifically, we investigate the structure of the reachable set for an open qubit subject to coherent and incoherent control. Recall that the dynamics of an open quantum system under coherent and incoherent control is often described by the Gorini--Kossakowski--Sudarshan--Lindblad (GKSL) master equation \cite{AlessandroBook2021}
\[
	\frac{{\mathrm{d}} \rho}{{\mathrm{d}} t} = -\frac{\mathrm{i}}{\hbar}[H_0+u(t)V,\rho] + \mathcal{L}(\rho),
\]
where $\rho$ is the density matrix, $H_0$ is the free Hamiltonian, $u(t)$ is the coherent control, $V$ is the corresponding interaction Hamiltonian, and $\mathcal{L}$ is the Lindblad superoperator describing dissipation. The incoherent control enters the equation through the last term $\mathcal{L}(\rho)$.

For a single open qubit, the GKSL master equation takes the following form\cite{Scully_Zubairy_1997}:
\begin{equation}
\label{eq: main QCS appendix}
\tag{\ref*{eq: main QCS}}
	\frac{\mathrm{d}\rho}{\mathrm{d}t} =
	-\frac{\mathrm{i}}{\hbar}\bigl[H_0+uV, \rho\bigr]
	+\gamma(n+1) \mathcal{D}[\sigma^-]\rho + \gamma n \mathcal{D}[\sigma^+]\rho.
\end{equation}
Here $\rho$ is the $2\times 2$ density matrix, $\sigma_{x,y,z}$ are the Pauli matrices, $\sigma^\pm=\frac12(\sigma_x\pm\mathrm{i}\sigma_y)$ are the raising and lowering operators, $H_0=\frac\omega2 \sigma_z$ and  $V=\frac\kappa2 \sigma_x$, and $\mathcal{D}[\sigma]\rho=\sigma\rho \sigma^\dagger - \frac{1}{2} \Big\{ \sigma^\dagger \sigma, \rho \Big\}$. The main parameters are the transition frequency $\omega$, the coherent control coupling coefficient $\kappa$, and the decoherence rate $\gamma$. The control variables are the coherent control $u(t)\in\R$ and the incoherent control $n(t)\ge 0$. Hereafter, we put $\hbar=1$.

If we set $u=n=0$ in Equation~\eqref{eq: main QCS appendix}, then every solution will approach the unique equilibrium state exponentially fast. In the basis $\{ |e\rangle, |g\rangle \}$, the equilibrium state is
$N = |g\rangle\langle g|$. Therefore, determining the structure of the reachable set starting from the initial state $N$ is of primary interest.

In the present work, we are interested in the boundary structure of the reachable set:
$$
	\boxed{\mathcal{R} = \Cl\mathcal{R}_b = \Cl\mathcal{R}_c},
$$
where
$$
	\mathcal{R}_b = \Big\{
		\rho_1 \mathrel{\Big|} \exists T\ge 0,,\,u(t)\in\R,\,n(t)\ge0\,:\,\rho(0)=N, \rho(T)=\rho_1
	\Big\}.
$$
$$
	\mathcal{R}_c = \Big\{
		\rho_1 \mathrel{\Big|} \exists T\ge 0,\,u(t)\in\R,\,n(t)\equiv 0\,:\,\rho(0)=N, \rho(T)=\rho_1
	\Big\},
$$
Note that $\Cl\mathcal{R}_b=\Cl\mathcal{R}_c$ and $\partial\mathcal{R}_b=\partial\mathcal{R}_c$ due to \cite[Theorem~2]{LokutsievskiyPechen1}. Therefore, without loss of generality, we assume $n\equiv 0$.

In order to investigate system~\eqref{eq: main QCS appendix}, let us write it in Bloch coordinates. Recall that Bloch coordinates $\rr=(r_x,r_y,r_z)$ on the Bloch ball $\BB=\{\rr:r_x^2+r_y^2+r_z^2\le 1\}$ are given by the formula
$$
	\rho=\frac12\left(\mathbb{I} + r_x\sigma_x + r_y\sigma_y + r_z\sigma_z\right).
$$
Direct computations lead to the following control system
\begin{equation}
\label{eq: QCS in Bloch ball appendix}
\tag{\ref*{eq: QCS in Bloch ball}}
	\dot{\rr} = \omega f_0(\rr) + \kappa f_1(\rr)u + 2\gamma f_2(\rr)n
\end{equation}
on the Bloch ball, where (see~\cite{LokutsievskiyPechen2} for details\footnote{Note that \cite{LokutsievskiyPechen2} uses a slightly different convention ($r_z\mapsto -r_z$ and $u\mapsto -u$) corresponding to the opposite choice of the raising and lowering operators $\sigma_\pm$.})
\begin{align*}
	f_0(\rr) & =
	\begin{pmatrix}
		0&-1&0\\
		1&0&0\\
		0&0&0
	\end{pmatrix}\rr
	-
	\alpha
	\begin{pmatrix}
		\frac12 & 0 & 0\\
		0 & \frac12 & 0\\
		0 & 0 & 1
	\end{pmatrix}\rr
	-
	\alpha
	\begin{pmatrix}
		0\\0\\1
	\end{pmatrix},\\
	f_1(\rr) & =
	\begin{pmatrix}
		0&0&0\\
		0&0&-1\\
		0&1&0
	\end{pmatrix}\rr,
	\qquad
	f_2(\rr)=
	-\begin{pmatrix}
		\frac12 & 0 & 0\\
		0 & \frac12 & 0\\
		0 & 0 & 1
	\end{pmatrix}\rr.
\end{align*}
Here and throughout the rest of the paper, we denote
$$
	\boxed{\alpha=\frac\gamma\omega\ll 1}.
$$

\begin{figure}
	\centering
	\includegraphics[width=0.3\linewidth]{separatrix3d_01.pdf}
	\includegraphics[width=0.3\linewidth]{separatrix3d_03.pdf}
	\includegraphics[width=0.3\linewidth]{separatrix3d_05.pdf}
	
	\caption{\label{fig: 3d separatrix}Curve $\Gamma\subset\partial\mathcal{R}$ in the Bloch ball for $\gamma/\omega=0.1, 0.3, 0.5$ from left to right. The boundary $\partial\mathcal{R}$ is a surface of revolution obtained by rotating $\Gamma$ about the $r_x$-axis. The curve $\Gamma$ is symmetric wrt reflection $(r_x,r_y,r_z)\mapsto(-r_x,-r_y,r_z)$}
\end{figure}

We prove the following main result:\footnote{Theorem numbering is consistent with the main text.}
\setcounter{theorem}{3}

\begin{theorem}
	In the lower half of the Bloch ball $\BB \cap \{r_z \le 0\}$, there exists an analytic curve $\Gamma$ (see Fig.~\ref{fig: 3d separatrix}) that passes through the equilibrium point $N$, finishes at the $r_x$-axis, and is, away from $N$, a trajectory of~\eqref{eq: QCS in Bloch ball appendix} (or equivalently~\eqref{eq: main QCS appendix}) with the control\footnote{The curve $\Gamma$ consists of the point $N$ and two symmetric branches, $\Gamma=\Gamma^+ \sqcup \Gamma^- \sqcup \{N\}$. The branches $\Gamma^\pm$ are trajectories of \eqref{eq: QCS in Bloch ball appendix} with $u=u_b$. Moreover, both of these solutions tend to $N$ as $t\to-\infty$. Thus, $\Gamma^\pm$ are the unstable separatrices of the point $N$.}
	\begin{equation}
	\label{eq: u_b appendix}
	\tag{\ref*{eq: u_b}}
		\boxed{u = u_b = \frac\omega\kappa \frac{(1+r_z)(2 r_x r_z+2 \alpha r_y+\alpha r_yr_z)}{r_y^2+r_z^2+r_z^3}}
	\end{equation}
	The boundary of the reachable set $\mathcal{R}$ is obtained from $\Gamma$ by rotating the (connected) part of the curve $\Gamma$ about the $r_x$-axis (see Fig.~\ref{fig: 3d separatrix}).
\end{theorem}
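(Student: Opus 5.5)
Since $\mathcal R$ is a solid of revolution about the $r_x$-axis (Theorem~\ref{thm: lacunas}), I will pass to cylindrical coordinates $(r_x,R,\theta)$ with $r_y=R\cos\theta$, $r_z=R\sin\theta$. The control $u$ then enters only through $\dot\theta$, because $f_1$ generates rotation about the $r_x$-axis. Eliminating $u$ leaves a two-dimensional differential inclusion for $(r_x,R)$ in which $\theta$ plays the role of a new, unbounded control. This works because impulses of $u$ realize arbitrary jumps of $\theta$ (closure level, Theorem~\ref{thm: cl int partial}). Explicitly,
\[
\dot r_x=-R\cos\theta-\tfrac{\alpha}{2}r_x,\qquad
\dot R=r_x\cos\theta-\tfrac{\alpha}{2}R\cos^2\theta-\alpha R\sin^2\theta-\alpha\sin\theta .
\]
With time rescaled by $\omega$, the orbit through $(r_x,R)$ is the convex hull of $\{(\dot r_x,\dot R):\theta\in S^1\}$. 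So the boundary of the two-dimensional profile $P=\{(r_x,R)\}$ of $\mathcal R$ must be a concatenation of curves along which the support function of this velocity set vanishes.

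The key steps, in order, are as follows.

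First, apply the Pontryagin maximum principle, or a direct support-function argument, to the boundary of $P$. Along $\partial P$, with outward normal $p=(p_x,p_R)$, we need $\max_\theta\langle p,v(\theta)\rangle=0$. The maximizing $\theta$ must satisfy $\partial_\theta\langle p,v\rangle=0$, and the boundary is invariant, so $\langle p,v(\theta^*)\rangle=0$. These two equations determine $p$ up to scale as a function of $(r_x,R,\theta)$. They yield a feedback law $\theta^*(r_x,R)$ in which $\dot\theta$, computed along the resulting flow, corresponds to a specific $u$. I will verify by direct differentiation that this $u$ is exactly $u_b$ in \eqref{eq: u_b appendix}. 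Equivalently, I expect the adjoint equation to reduce to an invariance of the surface \eqref{eq: S surface}, and I will check that this surface is invariant under the flow with $u=u_b$.

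Second, construct $\Gamma$. Linearize the closed-loop system with $u=u_b$ at $N$, after desingularizing it, e.g.\ by multiplying by the denominator $r_y^2+r_z^2+r_z^3$, which vanishes at $N$. This should give a hyperbolic-type point with a one-dimensional unstable direction whose slope involves $\beta=\frac16(\alpha+\sqrt{12+\alpha^2})$, the root of $3\beta^2-\alpha\beta-1=0$. The analytic unstable manifold theorem then gives branches $\Gamma^\pm$, related by the symmetry $(r_x,r_y,r_z)\mapsto(-r_x,-r_y,r_z)$, together with the expansions \eqref{eq: R Taylor}--\eqref{eq: theta Taylor}. Analyticity of $\Gamma$ through $N$ follows because the two branches form a single analytic invariant curve of the desingularized field.

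Next I will show that each branch stays in $\{r_z\le0\}$ and reaches the $r_x$-axis in finite time. I plan to use the invariant surface \eqref{eq: S surface} as a first integral-type constraint, together with monotonicity of a suitable quantity such as $\theta$ or $r_x$ along $\Gamma^+$.

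Third, identify $\partial\mathcal R$ with the rotation of $\Gamma$. For the inclusion, note that points of $\Gamma$ are asymptotically reachable, being the limit of the protocol from $N$, so the rotated surface lies in $\mathcal R$. For the reverse inclusion, I will show that the region bounded by the rotated $\Gamma$ is invariant: on its boundary, every admissible velocity points weakly inward. This is a support-function inequality $\max_\theta\langle p,v(\theta)\rangle\le 0$ along $\Gamma$, which by construction holds with equality at $\theta^*$. I need it to hold globally in $\theta$ and not merely as a critical point. Since $\langle p,v(\theta)\rangle$ is a trigonometric quadratic in $\theta$, this reduces to an explicit inequality on the surface \eqref{eq: S surface}.

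The main obstacle is this global step. I must show that the first-order extremal actually maximizes, rather than being a saddle or minimum in $\theta$, along the whole of $\Gamma$, and that no other extremal arcs contribute to $\partial P$. The torus $\mathcal T$ suggests where this could fail: inside $\mathcal T$ the critical point is not a maximum. The argument will therefore need $\Gamma$ to stay outside $\mathcal T$ except at $N$, where they are tangent. I will first try to prove this via the explicit figure-eight equation \eqref{eq: 8-figure two equations} compared against \eqref{eq: S surface}. The conical singularities at the endpoints on the $r_x$-axis then come from $\Gamma$ meeting the axis transversally.
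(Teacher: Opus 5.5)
Your overall architecture matches the paper's: reduce to the $(r_x,R)$-plane with $\theta$ as control, get \eqref{eq: theta main equation} from $\det[V,\partial_\theta V]=0$, force $\dot S=0$ to obtain $u_b$, take $\Gamma$ to be the unstable separatrices of $N$ on $\{S=0\}$, then sandwich $\mathcal R$. However, several steps fail or are missing as written.

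\textbf{The local analysis at $N$ fails as planned.} On $\{S=0\}$, use coordinates $(r_x,\psi)$ with $\psi=\theta+\frac\pi2$. The closed-loop field is $\frac1\omega\dot r_x=-\frac\alpha2 r_x-\psi+\dots$ and $\frac1\omega\dot\psi=-(r_x-\frac\alpha2\psi)\frac{r_x^2-\psi^2}{r_x^2+3\psi^2}+\dots$. Its leading part is positively homogeneous of degree one, so the field is only Lipschitz at $N$. On the surface, $r_y^2+r_z^2+r_z^3=\frac12(r_x^2+3\psi^2)+\dots$. Multiplying by this denominator therefore gives a field with cubic leading part, whose linearization at $N$ is identically zero (the same holds for the three-dimensional desingularized field). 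There is no hyperbolic linear part and no unstable-manifold theorem to invoke.

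The missing idea is the polar blow-up $r_x=\eta\cos\varphi$, $\psi=\eta\sin\varphi$. On the exceptional circle the equilibria are the roots of $3\tan^2\varphi+\alpha\tan\varphi-1=0$, and all four are hyperbolic. $\beta$ enters as the characteristic direction $\tan\varphi_0=-\beta$ of the two points whose unstable manifolds are transverse to the circle, not as an eigenvector slope. The symmetry $(\eta,\varphi)\mapsto(-\eta,\varphi+\pi)$ glues these two manifolds into one analytic curve through $N$.

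\textbf{Showing $\Gamma\subset\mathcal R$.} Invoking ``the limit of the protocol'' is circular, because the protocol is justified by this very theorem. The paper uses the saddle picture instead. Since $E$ minus the poles lies in $\Int\bar{\mathcal R}$, there are reachable points on $\{S=0\}$ arbitrarily close to $N$ and off the stable separatrices. The $u_b$-flow carries them along $\Gamma^\pm$, and closedness of $\bar{\mathcal R}$ finishes the argument.

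\textbf{Reaching the axis.} $r_x$ is not monotone: $\frac1\omega\dot r_x\approx(\beta-\frac\alpha2)r_x>0$ near $N$, but $\frac1\omega\dot r_x=-\frac\alpha2 r_x$ on the axis. The paper uses $R$ instead. It proves that on $\{S=0\}$ with $u=u_b$, $\dot R=0$ only over the figure-eight $E$. It then shows the projected separatrix can neither cross $E$ in $\{R>0\}$ (the cone generators would have to swap) nor pass through the origin.

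Also, do not try to keep the three-dimensional trajectory in $\{r_z\le0\}$. On the $r_x$-axis, $\dot r_z=-\gamma<0$ for bounded $u$, and $u_b$ is bounded there on $\{S=0\}$. So each branch must reach the axis from $\{r_z>0\}$. Only the projection to the half-plane $R\ge0$ matters for the rotation, and that is all the paper's argument controls.

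\textbf{The interior of the solid.} In Step 3 you show only that the rotated $\Gamma$ lies in $\mathcal R$ and that $\mathcal R$ lies in the solid $K$ it bounds. That gives rotated $\Gamma\subset\partial\mathcal R$, not equality. $\mathcal R$ could still omit part of $\Int K$ and so have additional boundary, which is exactly your unresolved ``other extremal arcs''. The paper avoids classifying extremals altogether. For $\theta\equiv0$ the $(r_x,R)$-system is linear, and its trajectories are logarithmic spirals converging to the origin. Following the spiral through any point inside $\gamma$ backward in time until it meets $\gamma\subset\bar{\mathcal R}$ shows that all of $K$ lies in $\bar{\mathcal R}$.

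\textbf{Your stated main obstacle is dissolved by convexity of $U(q)$.} For $R\ne0$, $U(q)$ is strictly convex with smooth boundary and $\partial_\theta V\ne0$. Hence, when $0\notin U(q)$, the solutions of \eqref{eq: theta main equation} are exactly the two generators of $\R_+U(q)$, and no saddle case arises. Over the interior of $E$ there are no solutions at all, so $\{S=0\}$ never projects there. Along an arc in the open exterior of $E$ the cone is never a half-plane, so the side of $\gamma$ toward which it points is locally constant and needs to be checked only once. No algebraic comparison of \eqref{eq: S surface} with \eqref{eq: 8-figure two equations} is needed.
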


Moreover, we prove that the curve $\Gamma$ lies on the surface
\begin{equation}
\label{eq: S surface appendix}
\tag{\ref*{eq: S surface}}
	\frac{r_z}{1+r_z} + \frac{2r_y^2+2r_z^2+\alpha r_x r_y}{r_x^2+r_y^2} = 0.
\end{equation}
We show that in a neighborhood of $N$, the curve $\Gamma$ in cylindrical coordinates $(r_x,r_y,r_z)=(r_x,R\cos\theta,R\sin\theta)$ is given by the following series expansions:
\begin{align}
\label{eq: R Taylor appendix}
\tag{\ref*{eq: R Taylor}}
	R&=1-\frac12r_x^2 -\frac18\beta^2(3\beta^2+2)r_x^4 + O(r_x^6),
	\\
\label{eq: theta Taylor appendix}
\tag{\ref*{eq: theta Taylor}}
	\theta&=-\frac\pi2 - \beta r_x + O(r_x^3),
\end{align}
where $\beta = \frac16(\alpha+\sqrt{12+\alpha^2})\simeq \frac1{\sqrt{3}}$. Note that $\beta\ge \frac1{\sqrt{3}}$, so
$$
	r_x^2+r_y^2+r_z^2=r_x^2+R^2=1-\frac14(\beta^2(3\beta^2+2)-1)r_x^4+O(r_x^6)\le 1
$$
as expected.

\setcounter{equation}{7}

\section{Reduction of the system dimension}

Following~\cite{LokutsievskiyPechen1}, we use a key idea for reducing the dimensionality of system~\eqref{eq: main QCS appendix}. Since the use of incoherent control $n\ge 0$ does not change the boundary of the reachable set, we will assume without loss of generality that $n\equiv0$. Building on the framework established in~\cite{LokutsievskiyPechen1,LokutsievskiyPechen2}, we switch to cylindrical coordinates, which cleverly decouples the control $u$ from the radial and longitudinal components. Namely, if we introduce the variables\footnote{In the previous papers \cite{LokutsievskiyPechen1} and \cite{LokutsievskiyPechen2}, we have used slightly different notation. Namely, $r_z\mapsto -r_z$, $u\mapsto -u$, and $\theta\mapsto-\theta$.}
$$
	r_y=R\cos\theta, \quad r_z=R\sin\theta,
$$
system~\eqref{eq: main QCS appendix} (or, equivalently,~\eqref{eq: QCS in Bloch ball appendix}) can be rewritten in a form where the control $u$ does not enter the right-hand sides of $\dot r_x$ and $\dot R$:
\begin{equation}
\label{eq: QCS in cylindrical coordinates}
	\dfrac{1}{\omega}
	\begin{pmatrix}
		\dot r_x\\
		\dot R 
	\end{pmatrix}
	=
	\begin{pmatrix}
		-\frac12\alpha & -\cos\theta\\
		\cos\theta  & -\frac12\alpha(1+\sin^2\theta)\\
	\end{pmatrix}
	\begin{pmatrix}
		r_x\\ 
		R
	\end{pmatrix}
	-
	\alpha
	\begin{pmatrix}
		0\\
		\sin\theta
	\end{pmatrix},
\end{equation}
but enters only the right-hand side of $\dot\theta$:
\begin{equation}
\label{eq: QCS in cylindrical coordinates theta}
	\dfrac{1}{\omega}\dot \theta =\frac\kappa\omega u-
	\frac14\alpha\sin2\theta-
	\frac1R\Big(
	r_x\sin\theta+
	\alpha\cos\theta
	\Big).
\end{equation}

Note that the phase space of system~\eqref{eq: QCS in cylindrical coordinates} is the disk $r_x^2 + R^2\le 1$, which can be naturally associated with the disk in the plane $(r_x,r_y=0,r_z=R)$ within the Bloch ball $\BB$. Each point $(r_x,R)$ in this disk represents an entire circle (of the form $r_y^2+r_z^2=R^2$) or a point (if $r_y^2+r_z^2=0$) in the original Bloch ball~$\BB$. Note that we omit the restriction $R\ge 0$ as system~\eqref{eq: QCS in cylindrical coordinates} is symmetric with respect to the usual symmetry $(R,\theta)\mapsto(-R,\theta+\pi)$

\begin{remark}
	Let us give a physical interpretation of the transition to cylindrical coordinates (for precise mathematical results, we refer to the papers~\cite{LokutsievskiyPechen1} and \cite{LokutsievskiyPechen2}). Specifically, applying an unbounded control $u$ in system~\eqref{eq: main QCS appendix} rotates the point $(r_x,r_y,r_z)$ along the circle $r_y^2+r_z^2=R^2$. Therefore, in system~\eqref{eq: QCS in cylindrical coordinates}, $\theta$ can effectively be chosen arbitrarily and thus can be considered a new control. Moreover, it is clear that the reachable set $\mathcal{R}$ is the solid of revolution obtained by rotating the reachable set $\bar{\mathcal{R}}$ of system~\eqref{eq: QCS in cylindrical coordinates} (where $\theta$ acts as the control) about the $r_x$-axis:
	$$
		\bar{\mathcal{R}} = \Cl\left\{
		(r_x^1,R^1)\,\big|\,\exists T\ge 0\text{ and }(r_x(t),R(t),\theta(t)):r_x(0)=0, R(0)=1, r_x(T)=r_x^1,R(T)=R^1
		\right\},
	$$
	where the initial point $N$ corresponds to $r_x(0)=0$ and $R(0)=1$, with $\theta(t)$ acting as a measurable control function that dictates the dynamics of the state $(r_x(t),R(t))$ in~\eqref{eq: QCS in cylindrical coordinates}. It is convenient to identify $\bar{\mathcal{R}}$ with the intersection of $\mathcal{R}$ with the plane $r_y=0$ (see details in~\cite{LokutsievskiyPechen1}):
	$$
		\boxed{\bar{\mathcal{R}} = \mathcal{R} \cap \{r_y=0\} \subset \R^2 = \{(r_x,r_y=0,r_z=R)\}}.
	$$
\end{remark}

Thus, to find the boundary of the reachable set $\mathcal{R}$ in the Bloch ball $\BB$, it is sufficient to find the boundary of the set $\bar{\mathcal{R}}$ in the disk $r_x^2+R^2\le 1$ and rotate it around the $r_x$-axis.

\section{Set of points of local controllability}
\begin{figure}
	\centering
	\begin{minipage}[t]{0.4\textwidth}
		\centering
		\includegraphics[width=\linewidth]{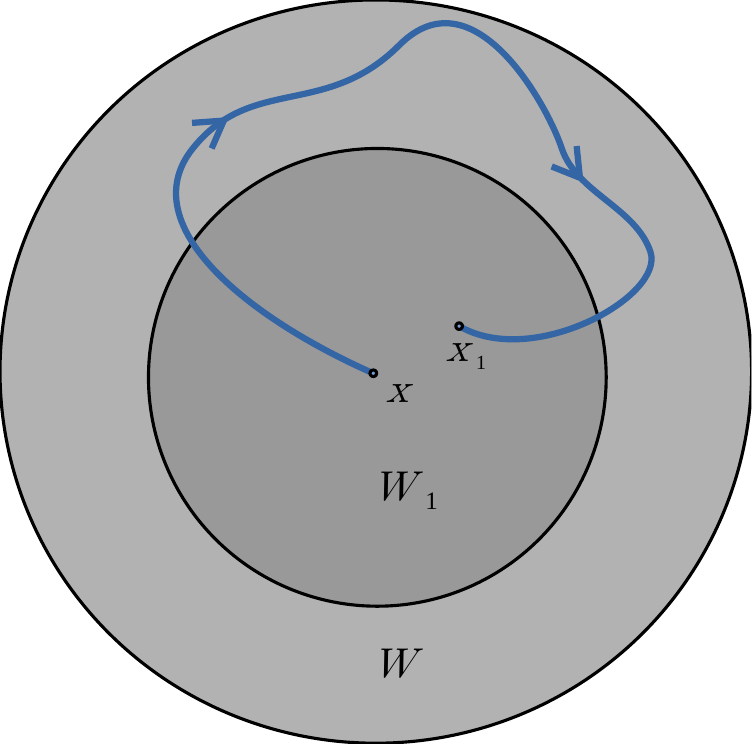}
		\caption{\label{fig: local controllability}STL-local controllability}
	\end{minipage}
	$\qquad\qquad$
	\begin{minipage}[t]{0.4\textwidth}
		\centering
		\includegraphics[width=\linewidth]{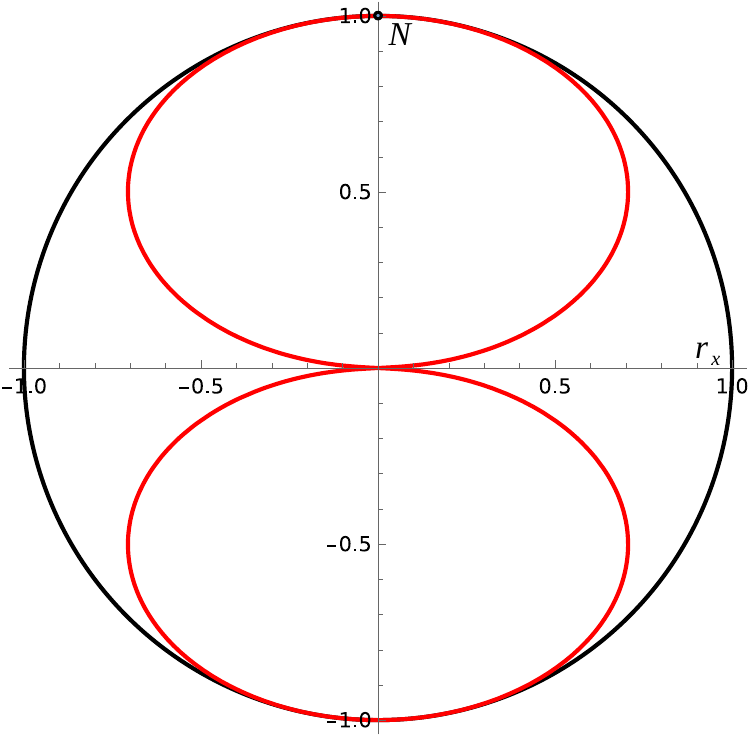}
		\caption{\label{fig: 8 curve}figure-eight curve $E(\varphi)$ for $\alpha=0.1$}
	\end{minipage}
\end{figure}

Recall that a control system is called STL-locally controllable\footnote{STL = small-time localized} at some point $x$ if for any neighborhood $W$ of $x$ and any $T>0$, there exists a smaller neighborhood $W_1\subset W$ of $x$ such that every point $x_1\in W_1$ is reachable from $x$ in time less than $T$ along a trajectory that does not leave $W$ (see Fig.~\ref{fig: local controllability}). For a detailed introduction to the types of local controllability, we refer the reader to~\cite{BoscainControllability}.

In this section, we find the boundary of the set of STL-locally controllable points for system~\eqref{eq: QCS in cylindrical coordinates}. First, we provide a geometric characterization of these points.

As mentioned above, the control $u$ rotates the solution of system~\eqref{eq: QCS in Bloch ball appendix} around the $r_x$-axis. Since the control $u$ is unbounded, the entire Bloch ball $\BB$ is partitioned into equivalence classes with respect to these rotations: circles $\Sigma=\{r_x=\const,r_y^2+r_z^2=R^2=\const>0\}$ and points $\{r_x=\const,r_y=r_z=0\}$. Fix some non-degenerate circle $\Sigma$ and a point $\rho=(r_x,r_y,r_z)\in\Sigma$ on it. The set of admissible velocities at this point (i.e., all possible values of the right-hand side of the control system~\eqref{eq: QCS in Bloch ball appendix} for various $u$ and fixed $(r_x,r_y,r_z)$) is a line $\ell(\rho)$ parallel to the tangent line of the circle $\Sigma$ at the point $\rho$. Choose a vector $v(\rho)\in \ell(\rho)$ of minimal length, and compute the linking number $\sigma$ of the curves $\Sigma$ and $\Sigma_v=\{\rho+\varepsilon v(\rho)\mid\rho\in\Sigma\}$ for a sufficiently small $\varepsilon$. It is easy to see that if $0\not\in\ell(\rho)$ for all $\rho\in\Sigma$, then the curves $\Sigma$ and $\Sigma_v$ do not intersect, and the linking number $\sigma$ is well-defined (as long as the number $\varepsilon$ is sufficiently small). We claim that three cases are possible:

\begin{itemize}
	\item $0\not\in\ell(\rho)$ for all $\rho\in\Sigma$ and $\sigma=1$. Then the vector $\varepsilon v(\rho)$ completes one revolution around~$\Sigma$ as $\rho$ traverses $\Sigma$. In this case, the corresponding point $(r_x,R)$ lies in the interior of the set of STL-locally controllable points.
	
	\item $0\not\in\ell(\rho)$ for all $\rho\in\Sigma$ and $\sigma=0$. Then the vector $\varepsilon v(\rho)$ completes zero revolutions around~$\Sigma$ as $\rho$ traverses $\Sigma$. In this case, the corresponding point $(r_x,R)$ does not lie in the set of STL-locally controllable points.
	
	\item $0\in\ell(\rho)$ for at least one $\rho\in\Sigma$. Then the linking number $\sigma$ is not defined, as the curves $\Sigma$ and $\Sigma_v$ intersect. In this case, the corresponding point $(r_x,R)$ lies on the boundary of the set of STL-locally controllable points.
\end{itemize}

This is the geometric interpretation of the following theorem.

\begin{atheorem}
\label{thm: local controllability}
	Let
	$$
	E(\varphi)=\begin{pmatrix}
		r_x(\varphi)\\ R(\varphi)
	\end{pmatrix} 
	= 
	\frac{2\alpha\sin\varphi}{2 \alpha^2 + (4-\alpha^2) \cos^2\varphi}
	\begin{pmatrix}
		2\cos\varphi\\ -\alpha
	\end{pmatrix},
	\quad \varphi\in\R/2\pi\mathbb{Z}
	$$
	be a parametrically defined closed curve on the $(r_x,R)$ plane, forming a smooth figure-eight with a single self-intersection point (see Fig.~\ref{fig: 8 curve}). This curve bounds the set of STL-locally controllable points of the system\footnote{In fact, the interiors of the sets of ST-local controllability and L-local controllability coincide here, but this is not essential for what follows.}~\eqref{eq: QCS in cylindrical coordinates}. Moreover, the set of STL-locally controllable points is contained in the reachable set $\bar{\mathcal{R}}$.
\end{atheorem}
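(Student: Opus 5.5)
The natural route is to treat $\theta$ in~\eqref{eq: QCS in cylindrical coordinates} as the control, as in the Remark above. By~\eqref{eq: QCS in cylindrical coordinates theta}, the unbounded control $u$ lets us switch $\theta$ arbitrarily fast. So the admissible velocity set at a point $p=(r_x,R)$ is the closed curve
\[
V(p)=\Bigl\{F(p,\theta)=\bigl(-\tfrac12\alpha r_x-R\cos\theta,\; r_x\cos\theta-\tfrac12\alpha R(1+\sin^2\theta)-\alpha\sin\theta\bigr):\theta\in\R/2\pi\mathbb Z\Bigr\}.
\]
By chattering (the relaxation/Filippov argument), the closure of the trajectories is the same as for the convexified system with velocities in $\operatorname{conv}V(p)$. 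The basic criterion I would use is the standard one for such systems: $p$ is STL-locally controllable iff $0\in\Int\operatorname{conv}V(p)$.

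For sufficiency, pick finitely many $\theta_i$ with $0$ a strictly positive combination of the $F(p,\theta_i)$ spanning $\R^2$. Piecewise-constant concatenations then reach a full neighborhood in small time without leaving $W$; this is the usual open-mapping or implicit-function argument on switching times. For necessity, if $0\notin\Int\operatorname{conv}V(p)$, there is $\lambda\neq0$ with $\lambda\cdot F(p,\theta)\ge0$ for all $\theta$. When the inequality is strict, $\lambda\cdot(\rr(t)-p)$ is increasing for small times near $p$, so points behind the line are not reachable. The degenerate tangency case is handled in the next paragraph.

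\textbf{Step 1 (locating the boundary).} As $p$ varies, the curve $V(p)$ deforms continuously. The property $0\in\Int\operatorname{conv}V(p)$ can therefore change only where $0\in V(p)$, i.e.\ where $F(p,\varphi)=0$ for some $\varphi$, or where $0$ lies on a bitangent segment of $\operatorname{conv}V(p)$. The first case is the third bullet of the geometric description above. Solving $F(p,\varphi)=0$ is a linear system in $(r_x,R)$ with determinant $\cos^2\varphi+\frac{\alpha^2}{4}(1+\sin^2\varphi)>0$. The first row gives $r_x=-2R\cos\varphi/\alpha$. Substituting into the second row gives $R=-2\alpha^2\sin\varphi/(2\alpha^2+(4-\alpha^2)\cos^2\varphi)$, which is exactly $E(\varphi)$. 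One then checks directly that $E$ is a smooth figure-eight: $E(\varphi)=E(\varphi+\pi)$ only at $\varphi\in\{0,\pi\}$, which is the origin and the unique self-intersection. Also $E(-\pi/2)=N$.

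\textbf{Step 2 (excluding the bitangent case, the main obstacle).} This is where I expect the real work. I would show that $V(p)$ is the boundary of a convex region, or at least that $0$ is never in the relative interior of a bitangent segment of $\operatorname{conv}V(p)$ unless $0\in V(p)$. The tool is a curvature computation: $V(p)$ is a trigonometric curve affine in $(\cos\theta,\sin\theta,\sin^2\theta)$, and the sign of $F_\theta\times F_{\theta\theta}$ should be constant. If that fails, I would argue directly via the winding number $\sigma$ of $V(p)$ about $0$ (the linking number in the text). Off $E$ the number $\sigma$ is locally constant, and it equals $1$ or $0$ on the two sides of $E$. Evaluating at explicit test points settles which side is which: take the origin direction inside a lobe, where $V$ is a curve around $0$, and a point near the Bloch sphere outside the lobes, where $\lambda=(r_x,R)$ gives $\lambda\cdot F<0$ via the dissipation term. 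Finally I would show that $\sigma=0$ forces a strict separating $\lambda$, which yields non-controllability and shows that $E$ is exactly the boundary.

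\textbf{Step 3 (inclusion in $\bar{\mathcal R}$).} At points with $0\in\Int\operatorname{conv}V(p)$ we also have $0\in\Int\operatorname{conv}(-V(p))$. So the system is small-time locally controllable both forward and backward, and mutual reachability is an equivalence relation with open classes on the interior of each lobe. Each lobe is connected, so it forms a single class. Each lobe meets $\Int\bar{\mathcal R}$: the origin belongs to $\Int\bar{\mathcal R}$ by Theorem~\ref{thm: lacunas}, and it is adjacent to both lobes. Hence each lobe interior lies in $\bar{\mathcal R}$. The boundary points on $E$ then lie in $\bar{\mathcal R}$ because $\bar{\mathcal R}$ is closed.
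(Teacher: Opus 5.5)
Your plan is correct and has the same skeleton as the paper: the convexified velocity set, the criterion $0\in\Int\operatorname{conv}V(p)$, the separating locus $\{F(p,\varphi)=0\}$ solved to give $E$, test points to label the sides, and then inclusion in $\bar{\mathcal R}$. Two sub-steps differ.

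\textbf{Convexity.} The paper simply imports from earlier work that $V(p)$ bounds a convex set, whereas you set out to prove it. Your curvature route succeeds:
\[
F_\theta\times F_{\theta\theta}=\alpha R\,(1+R\sin^3\theta)
\]
has the sign of $R$ for $0<|R|<1$. The first component $R\sin\theta$ of $F_\theta$ vanishes only at $\theta\in\{0,\pi\}$, where $F_\theta=(0,\mp\alpha)\neq0$. So the tangent turns exactly once and $V(p)$ is a strictly convex oval; for $R=0$ it degenerates to a vertical segment. The winding-number fallback is therefore unnecessary, which is just as well, since $\sigma=0$ alone would not exclude $0\in\operatorname{conv}V(p)$.

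\textbf{Inclusion in $\bar{\mathcal R}$.} The paper uses the explicit control $\theta\equiv\frac\pi2$. Under it, $(0,1)$ slides down the $R$-axis with $\frac1\omega\dot R=-\alpha(R+1)$ and crosses both lobes, for every $0<\alpha<1$. Your appeal to Theorem~\ref{thm: lacunas} gives a neighbourhood of the origin only when $c_1\alpha<1$. Conversely, your forward/backward STL-controllability equivalence-class argument makes rigorous the paper's brief claim that each lobe lies entirely inside or entirely outside $\bar{\mathcal R}$. The cleanest proof combines your class argument with the paper's trajectory.

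Two minor repairs are needed.
\begin{itemize}
\item Checking $E(\varphi)=E(\varphi+\pi)$ covers only one family of pairs, since $E(\varphi+\pi)$ is just the mirror image of $E(\varphi)$ in the $r_x$-axis. For arbitrary $\varphi\neq\tilde\varphi$, write $E(\varphi)=g(\varphi)\,(2\cos\varphi,-\alpha)$ and argue as the paper does. Equal $R$-components force $g(\varphi)=g(\tilde\varphi)$, and then $\cos\varphi=\cos\tilde\varphi$ and $\sin\varphi=\sin\tilde\varphi$ unless $g=0$.
\item One has
\[
|E(\varphi)|^2=1-\frac{(4+\alpha^2)^2\cos^4\varphi}{\bigl(2\alpha^2+(4-\alpha^2)\cos^2\varphi\bigr)^2},
\]
so $E$ touches the unit circle exactly at the poles. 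Inside the disk, the exterior of $E$ therefore has a left and a right component, and each needs a test point. Points of the $r_x$-axis handle both at once: there $V(r_x,0)$ is a vertical segment at abscissa $-\frac12\alpha r_x\neq0$.
\end{itemize}
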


\begin{proof}	
	Fix a point $(r_x,R)$ with $r_x^2 + R^2 \le 1$. Then for $\theta\in[0,2\pi]$, the right-hand side of \eqref{eq: QCS in cylindrical coordinates} is a closed curve. This curve is the boundary of some convex compact set $U(r_x,R)$ (see \cite{LokutsievskiyPechen1}), which we will call the set of admissible velocities. According to the saturation method, the reachable sets of the control system~\eqref{eq: QCS in cylindrical coordinates} (with control $\theta$) and the extended system $(\dot r_x,\dot R)\in U(r_x,R)$ coincide (see~\cite{AgrachevSachkov}). Therefore, if $0\in \Int U(r_x,R)$, then system~\eqref{eq: QCS in cylindrical coordinates} is STL-locally controllable at the point $(r_x,R)$. If $0\not\in U(r_x,R)$, then it can be strictly separated from $U(r_x,R)$, and therefore such a point is not a point of STL-local controllability. If $0\in\partial U(r_x,R)$, then such a point is actually a point of L-local, but not of STL-local controllability (see~\cite[Ex.~5]{BoscainControllability}), but this is not important for what follows.

	Thus, the points of the disk $r_x^2+R^2\le 1$ are divided into two open sets depending on the position of the zero velocity relative to the set of admissible velocities $U(r_x,R)$ (either $0\in\Int U(r_x,R)$ or $0\not\in U(r_x,R)$), which are separated by the curve of points $(r_x,R)$ such that $0\in\partial U(r_x,R)$. This curve is easy to find: if $0\in\partial U(r_x,R)$, then there exists an angle $\theta$ such that the right-hand side of system~\eqref{eq: QCS in cylindrical coordinates} becomes zero, i.e.,
	$$
	\begin{pmatrix}
		-\frac12\alpha & -\cos\theta\\
		\cos\theta  & -\frac12\alpha(1+\sin^2\theta)\\
	\end{pmatrix}
	\begin{pmatrix}
		r_x\\ 
		R
	\end{pmatrix}
	-
	\alpha
	\begin{pmatrix}
		0\\
		\sin\theta
	\end{pmatrix}
	=
	\begin{pmatrix}
		0\\
		0
	\end{pmatrix}.
	$$
	From this, we find the expressions for the curve $E$ stated in the theorem. We replace the parameter $\theta$ with $\varphi$ to avoid confusion with the control parameter.
	
	Let us show that the curve $E(\varphi)$ is smooth. For this, denote the scalar factor in the definition of the curve $E(\varphi)$ by $g(\varphi) = 2\alpha\sin\varphi/(2 \alpha^2 + (4-\alpha^2) \cos^2 \varphi)$. If $\varphi$ is a point of non-smoothness, then
	$$
	g(\varphi)\begin{pmatrix}
		-2\sin\varphi\\0
	\end{pmatrix} 
	+ 
	g'(\varphi)\begin{pmatrix}
		2\cos\varphi\\ -\alpha
	\end{pmatrix} 
	=
	\begin{pmatrix}
		0\\0
	\end{pmatrix}.
	$$
	Since $\alpha>0$, we get $g'(\varphi)=0$ and, consequently, $g(\varphi)\sin\varphi=0$. From the definition of the function $g(\varphi)$, we obtain that $\sin\varphi=0$, i.e., $\varphi=0$ or $\pi$. Direct calculation shows that $g'(0)\ne 0$ and $g'(\pi)\ne 0$, i.e., the specified curve has no points of non-smoothness.
	
	Let us now find the self-intersection points. Suppose the curve $E(\varphi)$ passes through a common point $(r_x,R)$ for two different parameters $\varphi\ne\tilde\varphi$. Then
	$$
	g(\varphi)\begin{pmatrix}
		2\cos\varphi\\-\alpha
	\end{pmatrix} 
	=
	g(\tilde\varphi)\begin{pmatrix}
		2\cos\tilde\varphi\\-\alpha
	\end{pmatrix}.
	$$
	Since $\alpha>0$, we get $g(\varphi)=g(\tilde\varphi)$. Consider two cases: $g(\varphi)\ne 0$ and $g(\varphi)=0$. In the first case, where $g(\varphi)\ne 0$, we obtain $\cos\varphi=\cos\tilde\varphi$. Thus, the equality $g(\varphi)=g(\tilde\varphi)$ implies $\sin\varphi=\sin\tilde\varphi$, and, consequently, in this case $\varphi=\tilde\varphi$, and there is no self-intersection point. In the second case, where $g(\varphi)=g(\tilde\varphi)=0$, we get $\sin\varphi=\sin\tilde\varphi=0$, that is, $\varphi=0$ and $\tilde\varphi=\pi$ (or vice versa), and the curve has a self-intersection at the origin.
	
	Let us show that if a point $(r_x,R)$ is located inside the region bounded by the curve $E(\varphi)$, then the zero velocity lies inside the set $U(r_x,R)$ and vice versa. The curve $E(\varphi)$ divides the plane into three connected components, as it has exactly one self-intersection point (see Fig.~\ref{fig: 8 curve}). Each connected component either consists entirely of STL-locally controllable points or contains no such points. The interior of the specified curve is the union of two open connected regions. The first region contains the vertical segment joining $(0,0)$ and $(0,1)$, and the second contains the vertical segment joining $(0,0)$ and $(0,-1)$. Both interior regions consist of STL-locally controllable points. Indeed, consider the $R$-axis ($r_x=0$) with $|R|<1$: for $\theta=\pm\pi/2$, the velocity vector on the right-hand side of~\eqref{eq: QCS in cylindrical coordinates} is non-zero and vertical, pointing up or down depending on the sign of the product $R\theta$. Thus, $0\in U(0,R)$ due to convexity. Since $0\not\in\partial U(0,R)$ at these points, we obtain $0\in\Int U(0,R)$. The exterior region contains no STL-locally controllable points, because on the $r_x$-axis for $r_x\ne 0$, the set $U(r_x,0)$ is a vertical segment that does not contain 0.
	
	It remains to show that both interior connected components lie in $\bar{\mathcal{R}}$. It is clear that each such component either lies entirely in $\bar{\mathcal R}$ or entirely outside it (since they consist of points of local controllability). Indeed, setting $\theta=\frac\pi2$ and $r_x=0$ in system~\eqref{eq: QCS in cylindrical coordinates}, we have $\dot r_x=0$ and $\dot R = -\alpha(R+1)$. Thus, the control $\theta=\frac\pi2$ moves the point $(0,1)$ vertically downwards, approaching the point $(0,-1)$ exponentially fast. This motion passes through both interior components. Hence, they both lie in $\bar{\mathcal{R}}$.
\end{proof}

\begin{remark}
\label{rm: fixed points}
	The curve $E(\varphi)$ has the following meaning from the point of view of practical control: if we set $\theta(t)=\varphi=\const$ in system~\eqref{eq: QCS in cylindrical coordinates}, then the resulting affine differential equation for $(r_x,R)$ will have a unique equilibrium point $E(\varphi)$, and this point will be exponentially attracting.
\end{remark}

\section{Exact shape of the boundary of the reachable set}

The boundary of the reachable set $\bar{\mathcal R}$ from the point $(0,1)$ in the two-dimensional disk $\{r_x^2+R^2\le1\}$ is described by the following system of differential equations:

\begin{figure}
	\centering
	\includegraphics[width=0.3\linewidth]{plane_01}
	\includegraphics[width=0.3\linewidth]{plane_03}
    \includegraphics[width=0.3\linewidth]{plane_05}

	\caption{\label{fig: 2d separatrix}Reachable sets in the disk $r_x^2+R^2\le 1$ for $\alpha=0.1, 0.3, 0.5$ from left to right.}
\end{figure}

\begin{atheorem}
\label{thm: reachable set boundary}
	The reachable set $\bar{\mathcal R}$ contains the figure-eight curve $E$ and its interior (see Theorem~\ref{thm: local controllability}). The set $\bar{\mathcal{R}}$ is symmetric with respect to the reflections $r_x\to -r_x$ and $R\to-R$. The boundary $\partial\bar{\mathcal {R}}$ is a closed curve, has two points of non-smoothness on the $r_x$-axis and is analytic everywhere else (see Fig.~\ref{fig: 2d separatrix}). At all points except the two sigular ones, the boundary $\partial\bar{\mathcal {R}}$ satisfies the differential equation~\eqref{eq: QCS in cylindrical coordinates}, where the angle $\theta$ is a solution of the equation
	\begin{equation}
		\label{eq: theta main equation}
		(1+R\sin\theta)(2R+\alpha r_x \cos\theta) + (R^2 \cos^2\theta+r_x^2)\sin\theta = 0.
	\end{equation}
	Near the point $(r_x,R)=(0,1)$, the boundary $\partial\bar{\mathcal{R}}$ has the asymptotic expansions~\eqref{eq: R Taylor appendix} and~\eqref{eq: theta Taylor appendix}.
\end{atheorem}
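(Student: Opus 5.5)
The plan is to work entirely in the planar reduced system \eqref{eq: QCS in cylindrical coordinates}, with $\theta$ as the control. By the saturation argument used in the proof of Theorem~\ref{thm: local controllability}, this system can be replaced by the convexified differential inclusion $(\dot r_x,\dot R)\in U(r_x,R)$. Write $v(\theta;r_x,R)$ for the right-hand side of \eqref{eq: QCS in cylindrical coordinates}, so that $\partial U=\{v(\theta)\}$.

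\textbf{Preliminaries: inclusion and symmetries.} First I record the symmetries. The maps $(r_x,R,\theta)\mapsto(-r_x,R,\pi-\theta)$ and $(R,\theta)\mapsto(-R,\theta+\pi)$ leave \eqref{eq: QCS in cylindrical coordinates} invariant and fix the initial point $(0,1)$, or map it to $(0,-1)$, which is reachable. This gives the two reflection symmetries of $\bar{\mathcal R}$. Theorem~\ref{thm: local controllability} already provides that the closed figure-eight region is contained in $\bar{\mathcal R}$. Hence every point of $\partial\bar{\mathcal R}$ different from $N$ and its mirror image satisfies $0\notin U$.

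\textbf{Step 1: derive \eqref{eq: theta main equation} from semipermeability.} At such a boundary point I use a Pontryagin/barrier argument: $\partial\bar{\mathcal R}$ is a semipermeable curve. If $p$ is an outward normal, then $\max_{w\in U}\langle p,w\rangle=0$, and the boundary is traversed with a velocity $v(\theta)\in\partial U$ satisfying $\langle p,v(\theta)\rangle=0$. In words, $v(\theta)$ spans a supporting line of the convex set $U$ passing through the origin. Since $\partial U$ is the smooth curve $\theta\mapsto v(\theta)$, this tangency condition reads
\[
\det\bigl(v(\theta),\partial_\theta v(\theta)\bigr)=0 .
\]
I expect that expanding this determinant and removing a nonvanishing factor yields exactly \eqref{eq: theta main equation}. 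I will verify this by direct computation. Of the two tangent lines from $0$ to $U$, I keep the one for which $\bar{\mathcal R}$ lies on the inner side; this is fixed by continuity from the choice near $N$. Where $0\notin U$ and $\partial U$ has nonzero curvature (to be checked from the explicit form of $v$), the implicit function theorem gives $\theta=\theta(r_x,R)$ analytically. Therefore the boundary is an integral curve of an analytic planar vector field. Lifting back to 3D, the requirement that $\theta$ evolve along this curve, inserted into \eqref{eq: QCS in cylindrical coordinates theta}, should produce the feedback \eqref{eq: u_b appendix}.

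\textbf{Step 2: local analysis at $N=(0,1)$.} The point $N$ lies on $E$, where $0\in\partial U$, so the field is singular there. I will substitute the ansatz $R=1-\tfrac12 r_x^2+c\,r_x^4+\dots$ and $\theta=-\tfrac\pi2-\beta r_x+\dots$ into \eqref{eq: theta main equation} and the ODE and match orders. The lowest order should give $3\beta^2-\alpha\beta-1=0$; the positive root is $\beta=\tfrac16(\alpha+\sqrt{12+\alpha^2})$. Higher orders then give \eqref{eq: R Taylor appendix}. To get genuine analyticity rather than a formal series, I would treat $N$ as a hyperbolic-type point after a blow-up or time rescaling, and invoke the analytic unstable-manifold theorem. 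This produces the two separatrix branches, which are exchanged by the $r_x\mapsto-r_x$ symmetry.

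\textbf{Step 3: global behaviour and identification of the boundary (main obstacle).} The hardest step is global. I must show that each separatrix branch stays outside the figure-eight (where the field is regular), stays inside the unit disk, and reaches the axis $R=0$ in finite time transversally. The strategy is to use the invariant surface \eqref{eq: S surface appendix} to convert this into a one-dimensional monotonicity statement, for instance that $R$ or an angle along the branch is monotone. Reflecting the branches in $R\mapsto-R$ then closes up the curve, with corners on the $r_x$-axis. Finally I identify this curve with $\partial\bar{\mathcal R}$ in two directions. Semipermeability shows that no admissible trajectory starting from $N$ can cross it outward. Conversely, every point enclosed by the curve is reached: points on the curve are approached by trajectories following the separatrix, and interior points are then obtained by moving inward through the figure-eight region and using local controllability.
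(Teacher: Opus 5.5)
Your overall architecture is the paper's: a generator of $C(q)=\R_+U(q)$ tangent to the boundary gives \eqref{eq: theta main equation}; a blow-up at $N$ gives a saddle whose two unstable separatrices carry the asymptotics; you continue them to $\{R=0\}$, reflect, and identify by a barrier argument plus filling. But the steps where the paper does the real work are missing.

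The central gap is your sentence ``points on the curve are approached by trajectories following the separatrix''. $N$ is an equilibrium of the $u_b$-flow on $\{S=0\}$, and the separatrices leave it only as $t\to-\infty$, so no trajectory from $N$ follows them. You must prove that their projections lie in $\bar{\mathcal R}$, and the paper does this first:
\begin{itemize}
\item reachable points of $E_+$ exist arbitrarily close to $N$;
\item lifted to $\{S=0\}$ (any $\theta$ is available, since an impulsive $u$ rotates $\theta$ freely), they avoid the stable separatrices, which lie outside $\BB$;
\item by the saddle structure they leave every small neighbourhood of $N$ near $s_l$ or $s_r$;
\item closedness and forward invariance of $\bar{\mathcal R}$ then put the whole separatrices in $\bar{\mathcal R}$.
\end{itemize}
This ordering is also what makes your Step 3 tractable; as written it is only a wish list (``monotonicity via the invariant surface''). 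Once the separatrix is in $\bar{\mathcal R}$ it cannot leave the disk, which excludes $\dot R>0$. The remaining ingredients are:
\begin{itemize}
\item Lemma~\ref{lm: sign dot R}: on $\{S=0\}$, $\dot R=0$ only at points of $E$ with $\theta=\varphi$;
\item a cone argument that the projected branch cannot touch $E$ in $\{R>0\}$. At $E(\varphi)$ the cone is a half-plane whose bounding line is transversal to $E$, so the two generators would have to swap. At the origin all admissible velocities are vertical.
\end{itemize}
Together these give $\dot R<0$ bounded away from $0$, hence a finite-time hit of $\{R=0\}$.

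Two further steps fail as stated. First, Theorem~\ref{thm: local controllability} puts the closed figure-eight region in $\bar{\mathcal R}$, not in $\Int\bar{\mathcal R}$. Your ``hence $0\notin U$ at every boundary point except the poles'' therefore does not follow, since points of $E$ could lie on $\partial\bar{\mathcal R}$. The paper needs Lemma~\ref{lemma: boundary is outside the eight} here. An explicit cross-product computation shows that the line bounding the half-plane $C(E(\varphi))$ is transversal to $E$ except at $\varphi=\pm\frac\pi2$, and the origin is interior by Theorem~\ref{thm: lacunas}. This lemma is also the source of the reachable points in $E_+$ near $N$ used above.

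Second, your filling step does not work. Points between $E$ and the separatrix curve have $0\notin U$ and are not locally controllable. So ``moving inward through the figure-eight region and using local controllability'' does not show that they are reached. The paper instead runs the $\theta=0$ logarithmic spiral backward from an interior point $q\ne 0$. The spiral escapes to infinity, so it meets the curve at some $q'\in\bar{\mathcal R}$, and $q$ is reached from $q'$.
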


The proof is based on several lemmas and is closely related to ideas presented in~\cite[Chapter 2]{Davydov}.

\begin{lemma}
\label{lemma: boundary is outside the eight}
	The interior of the reachable set $\Int\bar{\mathcal{R}}$ contains the interior of the figure-eight curve $E$ and the entire curve $E$ except for two points: the north and south poles $(r_x,R)=(0,\pm 1)$.
\end{lemma}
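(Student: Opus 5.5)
We need to show that every point of $E$ other than $(0,\pm1)$, and every point enclosed by $E$, lies in $\Int\bar{\mathcal R}$. By Theorem~\ref{thm: local controllability}, the two open lobes bounded by $E$ consist of STL-locally controllable points and are contained in $\bar{\mathcal R}$. Since these lobes are open, they lie in $\Int\bar{\mathcal R}$. The origin (the self-intersection point) lies on the segment $\{r_x=0,\,|R|<1\}$. At $r_x=0$ the controls $\theta=\pm\pi/2$ give nonzero vertical velocities of opposite signs, and $\theta=0$ gives a horizontal velocity $(-R,\,-\tfrac12\alpha R)\cdot$ component. More simply, the velocity set $U(0,0)$ has $0$ in its interior: at the origin, the right-hand side of~\eqref{eq: QCS in cylindrical coordinates} is $(0,-\alpha\sin\theta)$, so $U(0,0)$ is a segment, which is degenerate. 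I will therefore instead treat the origin as a point of $E$ and handle it in the same way as the other points of $E$ (see below). The remaining work concerns the points of $E$ themselves.

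Fix $p=E(\varphi_0)$ with $p\ne(0,\pm1)$. Here $0\in\partial U(p)$, attained by the control $\theta=\varphi_0$. My plan is to show that $p$ lies in the interior of the closure of the union of the lobes together with a neighbourhood of $p$ swept by trajectories leaving the lobes. Concretely:

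\textbf{(i)} By Remark~\ref{rm: fixed points}, the constant control $\theta\equiv\varphi_0$ makes $p$ an exponentially attracting equilibrium of an affine system. Its basin is the whole plane, in particular a full neighbourhood of $p$. Every point inside a lobe near $p$ is reachable, so trajectories with $\theta\equiv\varphi_0$ started there converge to $p$. This gives $p\in\bar{\mathcal R}$, but not yet interiority.

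\textbf{(ii)} To get interiority, I use the family of equilibria $E(\varphi)$ for $\varphi$ near $\varphi_0$. Each $E(\varphi)$ is reachable asymptotically. Moreover, the velocity $F(q,\theta)$ at $q=E(\varphi)$ with $\theta$ varying near $\varphi$ has derivative $\partial_\theta F$. I will show that $\partial_\theta F(p,\varphi_0)$ is transversal to $E'(\varphi_0)$. Then the map $(\varphi,s)\mapsto$ (flow for time $s$ from a point near $E(\varphi)$ with control $\varphi+\text{small}$) covers a neighbourhood. An alternative route is to check that $0$ is an interior point of $U(q)$ for all $q$ on one side of $E$ near $p$, and that points on the other side are reached from the first side in small time. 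This works as follows: the set $U(p)$ is strictly convex with $0$ on its boundary, and it has a nonzero outward-pointing velocity component $v$ transversal to $E$ at $p$. Flowing from nearby interior-lobe points with a control whose velocity is close to $v$ crosses $E$ and covers a small one-sided neighbourhood outside the lobe. Combined with the lobe itself, this covers a full neighbourhood of $p$.

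The key computation is to show that the boundary of $U(p)$ is not tangent to $E$ at $p$ in a degenerate way. Equivalently, the velocities near $0$ in $U(p)$ point transversally across $E$, which amounts to $\det(E'(\varphi_0),\partial_\theta F(p,\varphi_0))\ne0$. I expect this determinant to be a multiple of $(1-R^2-r_x^2)$, or to vanish exactly when $E$ touches the unit circle. That happens only at the poles $(0,\pm1)$, which are the tangency points of $\mathcal T$ with the Bloch sphere, and this explains why they are excluded. \textbf{Verifying this nonvanishing, and handling the origin where $U$ degenerates to a segment,} is the main obstacle. At the origin, I would instead use the fact that $U$ is a nondegenerate convex set at nearby points $(r_x,0)$ and $(0,R)$. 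I would then apply a Krener-type argument: the interior of $\bar{\mathcal R}$ is dense in $\bar{\mathcal R}$ (Theorem~\ref{thm: cl int partial}), and the two lobes touch at the origin from opposite sides. The Lie bracket of $\partial_\theta F$ with $F$ at the origin supplies the missing horizontal direction, which gives a neighbourhood.
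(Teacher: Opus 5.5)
\paragraph{Strategy.} Your plan for the points of $E$ is the one the paper uses. Take $p=E(\varphi_0)$ with $R\neq0$; on $E$ this excludes only the origin. Then $U(p)$ is strictly convex with $0\in\partial U(p)$. So $C(p)=\R_+U(p)$ is a closed half-plane bounded by the line $\R\,\partial_\theta F(p,\varphi_0)$.

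If that line is transversal to $E'(\varphi_0)$, the argument goes through:
\begin{itemize}
\item $\Int C(p)$ contains a direction pointing strictly out of the lobe.
\item A small open cone $H$ around that direction lies in $C(q)$ for all $q$ near $p$.
\item Straight-line motions from lobe points $q'$ near $p$ into $q'+H$, exactly as in Step~1 of Lemma~\ref{lm: boundary is lipschitz}, cover a full neighbourhood of $p$.
\end{itemize}

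\paragraph{First gap: the transversality computation is missing.} You stop at the decisive step: the nonvanishing of $\det\bigl(E'(\varphi_0),\partial_\theta F(p,\varphi_0)\bigr)$ is only conjectured. Without it, nothing separates generic points of $E$ from the poles, where the supporting line is tangent to $E$ and the argument genuinely fails.

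The computation is short. Put $D=2\alpha^2+(4-\alpha^2)\cos^2\varphi$. At $q=E(\varphi)$, $\theta=\varphi$, one finds $\frac1\omega\partial_\theta F=-\frac{\alpha}{D}\bigl(2\alpha\sin^2\varphi,\ \cos\varphi\,(4+\alpha^2\cos^2\varphi)\bigr)$. Its determinant with $E'(\varphi)$ equals, up to sign, $4\alpha^2(4+\alpha^2)\cos^3\varphi\,(4+\alpha^2\cos2\varphi)/D^3$. This vanishes only when $\cos\varphi=0$, i.e.\ at $(0,\pm1)$; this is the paper's formula.

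Your guess is not literally right. One has $1-|E(\varphi)|^2=(4+\alpha^2)^2\cos^4\varphi/D^2$, which is not a factor of the determinant. However, the zero sets coincide, so your explanation of why the poles are excluded is correct.

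\paragraph{Second gap: the origin.} Density of $\Int\bar{\mathcal R}$ in $\bar{\mathcal R}$ and a Krener-type argument only give interior points near the origin. They do not show the origin itself is interior, and the bracket remark is not yet an argument.

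The paper simply invokes Theorem~\ref{thm: lacunas}: a ball of radius $1-c_1\alpha$ about the centre of the Bloch ball lies in $\mathcal R$.

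If you prefer your route, turn the bracket into the linear test:
\begin{itemize}
\item $(0,0)$ is an equilibrium for $\theta\equiv0$.
\item The linearization there is $\dot q=\omega(Aq+B\,\delta\theta)$ with $A=\begin{pmatrix}-\alpha/2&-1\\1&-\alpha/2\end{pmatrix}$ and $B=(0,-\alpha)^T$.
\item $\det[B,AB]=\alpha^2\neq0$, so the reduced system is STL-locally controllable at the origin.
\item The control $\theta\equiv\pi/2$ brings $(0,1)$ to the origin in finite time $\ln2/(\alpha\omega)$, so a neighbourhood of the origin lies in $\bar{\mathcal R}$.
\end{itemize}

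Your step (i) and the $(\varphi,s)$-flow map in (ii) are not needed.
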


\begin{proof}
	Indeed, the region inside $E$ lies in $\bar{\mathcal{R}}$ by Theorem~\ref{thm: local controllability}. Let us show that every point of $E$ (except for the points $(r_x,R)=(0,\pm1)$) lies in the interior of $\bar{\mathcal{R}}$. Recall that we denoted by $U(r_x,R)$ the convex set bounded by the curve given by the right-hand side of~\eqref{eq: QCS in cylindrical coordinates} for $\theta\in[0,2\pi]$. Thus, for any $\varphi\ne0,\pi$ at the point $E(\varphi)$, the cone $C(r_x,R)=\R_+U(r_x,R)$ is a half-space by the definition of the curve $E$. Note that for $R\ne 0$, the set $U(r_x,R)$ is strictly convex with a smooth boundary (see \cite{LokutsievskiyPechen1}), and if $(r_x,R)=E(\varphi)$, then $0\in\partial U(r_x,R)$. Let us check that the line defining the half-space $C(E(\varphi))$ is transversal to the curve $E$ itself at the point $E(\varphi)$ --- this is obviously sufficient to prove the required statement $E(\varphi)\in\Int\bar{\mathcal{R}}$ for $\varphi\ne 0$ and $\pi$. If $\varphi=0$ or $\pi$, then $E(\varphi)=(0,0)\in\Int\bar{\mathcal R}$ by Theorem~\ref{thm: lacunas} in the text of the main paper (see also \cite{LokutsievskiyPechen1}).
	
	Thus, at the point $E(\varphi)$, the line defining $C(E(\varphi))$ is tangent to the set $U(E(\varphi))$ at the point $\theta=\varphi$, and therefore is directed along the vector:
	$$
	\frac1\omega\begin{pmatrix}
		\partial\dot r_x/\partial\theta\\\partial\dot R/\partial\theta
	\end{pmatrix}
	=
	\begin{pmatrix}
		0 & \sin\theta\\
		-\sin\theta  & -\frac12\alpha\sin2\theta\\
	\end{pmatrix}
	E(\theta)
	-
	\alpha
	\begin{pmatrix}
		0\\
		\cos\theta
	\end{pmatrix}.
	$$
	It is easy to check that this vector is parallel to $E'(\theta)$ only when $\theta=\pm\frac\pi2$. Indeed,
	$$
	\frac1\omega\left[
	\begin{pmatrix}
		\partial\dot r_x/\partial\theta\\\partial\dot R/\partial\theta
	\end{pmatrix} \times E'(\theta)\right]=
	-
	\frac{4 \alpha^2 (4+\alpha^2) \cos^3\theta (4+\alpha^2 \cos 2\theta)}{((4-\alpha ^2) \cos^2\theta+2 \alpha^2)^3}
	$$
	Therefore, for $\theta\ne\pm\frac\pi2$, the right-hand side is not zero, and hence the line is transversal to the figure-eight curve $E$.
	
\end{proof}

Thus, the boundary of the reachable set $\partial\bar{\mathcal{R}}$ can intersect the figure-eight curve $E$ only at the poles $(r_x,R)=(0,\pm1)$.

In the following lemma, we will prove that the boundary $\partial\bar{\mathcal{R}}$ is a locally Lipschitz curve. This statement is very important because every locally Lipschitz curve is differentiable almost everywhere and can be uniquely determined if it satisfies a known ordinary differential equation in canonical form.

\begin{lemma}
\label{lm: boundary is lipschitz}
	Let $q=(r_x,R)\in \partial\bar{\mathcal{R}}$ be a boundary point of the reachable set $\bar{\mathcal{R}}$ that is not a pole, i.e.~$q\ne (0,\pm 1)$. Then there exists a convex neighborhood $F$ of the point $q$ in which one can introduce coordinates such that the boundary $\partial\bar{\mathcal{R}}$ is the graph of a Lipschitz function. Moreover, the boundary $\partial\bar{\mathcal{R}}$ divides $F$ into two connected open sets $F_1$ and $F_2$ (that is, $F=F_1\sqcup F_2\sqcup (\partial\bar{\mathcal{R}}\cap F)$) such that $F_1\subset \Int\bar{\mathcal{R}}$ and $F_2\cap \bar{\mathcal{R}}=\emptyset$.
\end{lemma}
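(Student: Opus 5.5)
The argument rests on the saturation picture from the proof of Theorem~\ref{thm: local controllability}. The reachable set $\bar{\mathcal R}$ coincides (up to closure) with that of the convexified system $(\dot r_x,\dot R)\in U(r_x,R)$. Here $U$ is a compact convex set depending continuously (indeed analytically) on the point, and it is strictly convex for $R\ne0$. The plan is to first show that at a non-pole boundary point $q$ one has $0\notin U(q)$. Then $U$ has a uniform ``forward direction'' near $q$, and a classical cone argument for differential inclusions (as in \cite[Chapter 2]{Davydov}) turns the boundary into a Lipschitz graph.

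\emph{Step 1: $0\notin U(q)$.} By Lemma~\ref{lemma: boundary is outside the eight}, the closed region bounded by $E$, minus the poles, lies in $\Int\bar{\mathcal R}$. Hence a non-pole boundary point $q$ lies strictly outside the figure-eight. By Theorem~\ref{thm: local controllability} and its proof, this means $0\notin U(q)$. Since $U(q)$ is compact and convex, there is a unit vector $e$ and $c>0$ with $\langle v,e\rangle\ge 2c$ for all $v\in U(q)$. By continuity of $U$ in the Hausdorff metric, we can choose a convex neighborhood $F$ of $q$ (a small box) on which $\langle v,e\rangle\ge c$ for all $v\in U(p)$, $p\in F$. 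I will also use that $U(q)$ has nonempty interior when $R\ne0$. The points with $R=0$, i.e.\ the $r_x$-axis, need separate care, since there $U$ degenerates to a vertical segment. I would handle the axis using the symmetry $R\mapsto -R$ and the fact that the segment still has two distinct endpoints, which gives a nontrivial cone after a short time. This degenerate case is the spot where I expect the main technical obstacle.

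\emph{Step 2: cone property.} Take coordinates $(s,y)$ in $F$ with $s=\langle p,e\rangle$ and $y$ orthogonal to $e$. Let $K$ be the cone of directions $\{v:\ |y(v)|\le L\, s(v)\}$ with $L=\sup_{p\in F,\,v\in U(p)}|y(v)|/c<\infty$; it contains every admissible velocity in $F$. Let $K_0\subset K$ be the smaller cone generated by $U(q)$ shrunk slightly, chosen so that $\Int K_0\neq\emptyset$ and $K_0\subset \R_+U(p)$ for all $p\in F$. Now suppose $p\in\Int\bar{\mathcal R}\cap F$. Trajectories of the convexified system with constant velocity directions in $K_0$, run for short times, show that the forward cone $(p+K_0)\cap F$ is contained in $\bar{\mathcal R}$, and even in its interior by openness. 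Conversely, every trajectory entering $F$ moves within backward-invariant translates of $K$. Thus if $p\notin\bar{\mathcal R}$, then $(p-K_0)\cap F$ misses $\bar{\mathcal R}$. Making this precise needs a Filippov-type estimate so that nearby trajectories stay inside a slightly widened cone; this is routine for this Lipschitz right-hand side.

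\emph{Step 3: graph and splitting.} For each fixed $y$, the line $\{(s,y)\}$ in $F$ then meets $\bar{\mathcal R}$ in a set closed upward in $s$. So it meets $\partial\bar{\mathcal R}$ in exactly one point $s=h(y)$, possibly after shrinking $F$ around $q$ so that each such line does cross the boundary. The two-sided cone property $(p\pm K_0)$ then makes $h$ Lipschitz with constant determined by the aperture of $K_0$: if $|h(y_1)-h(y_2)|$ exceeded that slope times $|y_1-y_2|$, one boundary point would lie inside the open forward or backward cone of the other, which is a contradiction. Finally set $F_1=\{s>h(y)\}\subset\Int\bar{\mathcal R}$ and $F_2=\{s<h(y)\}$, which is disjoint from $\bar{\mathcal R}$. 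Both are connected, being epigraph and hypograph of a continuous function over an interval intersected with a convex box, and they give the required decomposition $F=F_1\sqcup F_2\sqcup(\partial\bar{\mathcal R}\cap F)$.
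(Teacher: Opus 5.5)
Your route is the paper's. Lemma~\ref{lemma: boundary is outside the eight} puts $q$ in the exterior $E_+$ of the figure-eight, so $0\notin U(q)$. A fixed open subcone $H$ (your $K_0$) of $C(q)=\R_+U(q)$ lies in $C(q_1)$ for all $q_1$ in a convex neighborhood $F$. Straight-line motion then gives $(q'+H)\cap F\subset\Int\bar{\mathcal R}$, and the boundary becomes a graph over a transversal line.

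Two of your worries are unnecessary:
\begin{itemize}
\item The axis $R=0$ is not an obstacle. What you need is that the \emph{cone} $\R_+U(q)$ has nonempty interior, not $U(q)$ itself. At $q=(r_x,0)$ the endpoints $\omega(-\frac12\alpha r_x,\pm\sqrt{r_x^2+\alpha^2})$ of the segment $U(q)$ are linearly independent, since $r_x\neq0$ (the origin lies in $\Int\bar{\mathcal R}$). So the cone is proper with nonempty interior immediately, with no symmetry or ``after a short time'' argument. The paper simply works with $C(q)$ throughout.
\item The backward-cone property needs neither the cone $K$ nor a Filippov estimate. It is the contrapositive of the forward one: if $z\in(p-H)\cap F\cap\bar{\mathcal R}$, then $p\in z+H$, so $p\in\Int\bar{\mathcal R}$ by the forward property at $z$.
\end{itemize}
You must, however, state the forward property for all $p\in\bar{\mathcal R}\cap F$, not only $p\in\Int\bar{\mathcal R}$. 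Likewise, state the backward one for all $p\notin\Int\bar{\mathcal R}$, not only $p\notin\bar{\mathcal R}$. Step~3 applies both at boundary points; your straight-line argument already yields the forward version for every $p\in\bar{\mathcal R}\cap F$.

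The step that does not work as written is the choice of graph direction. You take $e$ to be any vector separating $U(q)$ from $0$, but Step~3 needs $e\in\Int K_0$ in two places:
\begin{itemize}
\item Upward-closedness of $\bar{\mathcal R}$ along the lines $\{y=\mathrm{const}\}$ holds only if $e$ is a forward direction.
\item In the Lipschitz contradiction, a large jump $|h(y_1)-h(y_2)|$ over a small $|y_1-y_2|$ gives a displacement close to $\pm e$. You need that displacement to lie in $\pm\Int K_0$.
\end{itemize}
A separating normal need not even belong to $\R_+U(q)$ once that cone has aperture below $\pi/2$; then neither claim follows. The fix is to take $e$ along the bisector of $K_0$. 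It still separates, because the cone is proper. This is exactly the paper's choice: its ``vertical'' direction $-\frac{\vec u+\vec v}{2}$ is parallel to the bisector of $H$.

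With this $e$, shrink $F$ so that every such line meets both $(q+H)\cap F\subset\Int\bar{\mathcal R}$ and $(q-H)\cap F$, which misses $\bar{\mathcal R}$. Connectedness of the connecting segment then gives a boundary point on each line, and the rest of your Step~3 goes through.
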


\begin{proof}
	By Lemma~\ref{lemma: boundary is outside the eight}, the boundary $\partial\bar{\mathcal{R}}$ lies outside the closed set bounded by the curve $E(\varphi)$ (with the exception of the two points $(r_x,R)=(0,\pm 1)$). Thus, $q=(r_x,R) \in E_+$, where $E_+=\Int E_+$ is the exterior of the set bounded by the figure-eight curve $E(\varphi)$.     
	
	Consider the set of admissible velocities $U(q)$ at the point $q$ and the cone $C(q)=\R_+U(q)$. Since $q\in E_+$, the cone $C(q)$ cannot be the entire plane, a half-plane, a line, a half-line, or a point. In other words, $C(q)$ is a proper cone with a non-empty interior. Due to the continuous dependence of the set $U(q)$ on the point $q$, there exist a convex neighborhood $F$ of the point $q$ and a proper subcone $H\subset C(q)$ with a non-empty interior, such that for all $q_1 \in F$, it holds that $H \subset C(q_1)$ (see Fig.~\ref{fig: cone H}). For convenience, we will consider the cone $H$ to be open, i.e., $H=\Int H$. 
	
	Further, we will repeatedly use the following simple property of the reachable set $\bar{\mathcal R}$: if $q'\in\bar{\mathcal R}$ and an admissible trajectory of the control system~\eqref{eq: QCS in cylindrical coordinates} leads from $q'$ to some point $q''$, then $q''\in\bar{\mathcal R}$.
	
	\begin{figure}
		\centering
		\includegraphics[width = 6cm]{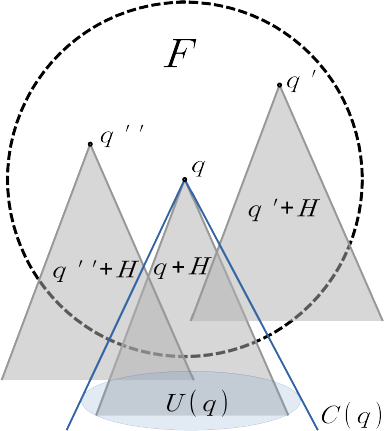}
		\caption{\label{fig: cone H}Cone $H$ of admissible velocities in the neighborhood $F$ of the point $q$.}
	\end{figure}
	
	\textbf{1.} We claim that if a point $q'\in F$ belongs to $\bar{\mathcal{R}}$, then $(q'+H)\cap F \subset\Int\bar{\mathcal{R}}$. Indeed, from the point $q'$, one can reach any point $q''\in(q'+H)\cap F$ by simple motion along a straight line, since $H\subset C(q_1)$ for all $q_1\in F$ and the set $F$ is convex. Thus, $q''\in \bar{\mathcal{R}}$, i.e., $(q'+H)\cap F\subset \bar{\mathcal{R}}$. It remains to note that the set $(q'+H)\cap F$ is open. 
	
	\textbf{2.} We claim that if a point $q'\in F$ does not belong to $\Int\bar{\mathcal{R}}$, then $\big((q'-H)\cap F\big)\cap\bar{\mathcal{R}}=\emptyset$. Indeed, if there existed a point $q''\in \big((q'-H)\cap F\big)\cap\bar{\mathcal{R}}$, then from the two facts $q''\in F\cap\bar{\mathcal{R}}$ and $q'\in q''+H$, we would obtain $q'\in\Int\bar{\mathcal{R}}$ by Step 1, a contradiction.
	
	\textbf{3.} We claim that if a point $q'\in F$ lies on the boundary $\partial\bar{\mathcal{R}}$, then both Steps 1 and 2 hold, and thus neither of the sets $(q'\pm H)\cap F$ contains other boundary points of $\partial\bar{\mathcal{R}}$ (see Fig.~\ref{fig: points alpha beta gamma}). Indeed, by Step 1, the set $(q'+H)\cap F$ is contained in the interior $\Int\bar{\mathcal{R}}$ and thus does not contain boundary points of $\bar{\mathcal{R}}$. By Step 2, the set $(q'-H)\cap F$ does not intersect $\bar{\mathcal{R}}$; since it is open, it therefore cannot contain boundary points of $\bar{\mathcal{R}}$.
	
	\begin{figure}[h]
		\centering
		\includegraphics[height=7cm]{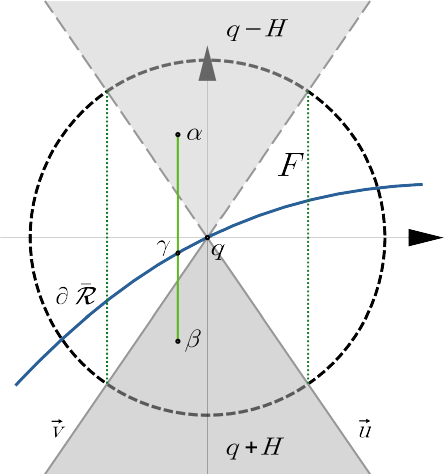} 
		
		\caption{\label{fig: points alpha beta gamma}Positions of points $\alpha$, $\beta$, and $\gamma$ in the neighborhood $F$.}
		
	\end{figure}
	
	Let us now show that $\partial\bar{\mathcal{R}}\cap F$ is the graph of some function in a suitable coordinate system. To do this, we fix the generators $\vec u, \vec v$ of the cone $H$. Then the generators of the cone $-H$ are the vectors $-\vec u, -\vec v$. Choose the following coordinate system: the origin is the point $q$, and the axes are parallel to the vectors $\frac{\vec v - \vec u}{2}$ and $-\frac{\vec u + \vec v}{2}$. The direction $-\frac{\vec u + \vec v}{2}$ will be called vertical.
	
	Shrink $F$ (preserving convexity) if necessary, so that any vertical line either does not intersect $F$, or intersects both $F\cap (q+H)$ and $F\cap(q-H)$ (see Fig.~\ref{fig: points alpha beta gamma}). In this coordinate system, choose any vertical line intersecting $F$. Then on this line, there will be two points $\alpha,\beta\in F$ such that $\alpha\in q-H$ and $\beta\in q+H$. Consequently, by Steps 1 and 2, we have $\alpha\not\in\bar{\mathcal{R}}$ and $\beta\in\Int\bar{\mathcal{R}}$.
	
	Connect $\alpha$ and $\beta$ with a vertical segment. Since both sets $\Int\bar{\mathcal R}$ and $F\setminus\bar{\mathcal{R}}$ are open and do not intersect, by the connectedness of the segment, there must be a point $\gamma$ on it that does not belong to either of these sets and, consequently, lies on the boundary, $\gamma \in \partial \bar{\mathcal{R}}$ (see Fig.~\ref{fig: points alpha beta gamma}). Therefore, on any vertical line intersecting $F$, there is at least one point from $\partial \bar{\mathcal{R}}$. Such a point is unique by Step 3. Thus, in the neighborhood $F$ of $q$, in the specified coordinate system, the boundary $\partial \bar{\mathcal{R}}$ is the graph of some function. By Step 3, this function must be a Lipschitz function, dividing $F$ into two regions $F_1\subset \Int \bar{\mathcal R}$ and $F_2\subset F\setminus \bar{\mathcal{R}}$: the region $F_1$ below its graph, and $F_2$ above.
	
\end{proof}

\begin{corollary}
\label{cor: boundary is locally lipschitz}
	The boundary $\partial\bar{\mathcal R}$ excluding the poles $(r_x,R)=(0,\pm 1)$ is a locally Lipschitz curve.
\end{corollary}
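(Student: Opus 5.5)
The corollary follows from Lemma~\ref{lm: boundary is lipschitz} by a local-to-global patching argument. Fix a point $q\in\partial\bar{\mathcal R}\setminus\{(0,\pm1)\}$. Lemma~\ref{lm: boundary is lipschitz} supplies three things: a convex neighborhood $F_q$ of $q$, affine coordinates on $F_q$ (built from the generators $\vec u,\vec v$ of the cone $H$), and a Lipschitz function $h_q$ such that $\partial\bar{\mathcal R}\cap F_q$ is exactly the graph of $h_q$. This graph is parametrized by the horizontal coordinate $s\mapsto (s,h_q(s))$. Since the change of coordinates is affine and the composition of a Lipschitz map with an affine map is Lipschitz, this parametrization is Lipschitz in the original $(r_x,R)$ coordinates. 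Hence, near each non-polar point, the boundary is a Lipschitz arc, and it is a graph and not merely a set containing one.

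The second step is to check that these local charts fit together into a single one-dimensional curve. On the overlap $F_q\cap F_{q'}$, both descriptions give the same set $\partial\bar{\mathcal R}\cap F_q\cap F_{q'}$. Therefore the transition map between the two parametrizations is a homeomorphism between intervals.

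I would also want it to be bi-Lipschitz. The reason is that the vertical direction $-(\vec u+\vec v)/2$ lies strictly inside the cone $H$. Step~3 of the lemma then says the graph never enters $q'\pm H$. This means the graph's slope stays uniformly bounded away from the vertical, so the horizontal coordinate and arclength are comparable. Consequently, on the punctured set the boundary is a topological one-manifold locally parametrized by bi-Lipschitz charts, i.e.\ a locally Lipschitz curve.

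Finally, the word ``locally'' in the statement needs a remark, because the neighborhoods $F_q$ may shrink as $q$ approaches a pole. There the cone $C(q)$ degenerates to a half-plane (the poles lie on $E$), and no uniform cone $H$ exists. So I would claim only the local statement on any compact subset of $\partial\bar{\mathcal R}\setminus\{(0,\pm1)\}$. Such a compact subset is covered by finitely many $F_q$, which gives a uniform Lipschitz bound there.

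The main subtle point is ruling out that the local graph pieces could fail to form a single arc near a point, for example by two branches meeting inside one $F_q$. That is excluded by the uniqueness of the boundary point on each vertical line, established in Step~3 of the preceding lemma.
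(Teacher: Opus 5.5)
Your proposal is correct and follows the paper, which gives no separate proof of this corollary: it is read off directly from Lemma~\ref{lm: boundary is lipschitz}, since near each non-polar point $\partial\bar{\mathcal R}$ is, in affine coordinates adapted to the cone $H$, the graph of a function whose Lipschitz bound comes from Step~3 (no other boundary point lies in $(q'\pm H)\cap F$). Your patching and bi-Lipschitz remarks are therefore extra but harmless. One cosmetic slip: with the paper's convention the vertical direction $-\frac{\vec u+\vec v}{2}$ lies in $-H$ rather than $H$, which changes nothing, since only the line along the cone's axis matters.
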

A locally Lipschitz curve is locally absolutely continuous and therefore differentiable almost everywhere. Let us now write down the ordinary differential equation that the curve $\partial\bar{\mathcal{R}}$ satisfies.

\begin{lemma}
\label{lm: boundary obey ODE}
	

    For almost every point $q=(r_x,R)\in\partial\bar{\mathcal{R}}$, the boundary curve $\partial\bar{\mathcal{R}}$ is tangent to one of the generators of the cone spanned by the set of admissible velocities of system~\eqref{eq: QCS in cylindrical coordinates} at $q$. The velocity vector along this generator is parameterized by an angle $\theta$, which must be a solution to equation~\eqref{eq: theta main equation}.
\end{lemma}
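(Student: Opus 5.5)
The plan is to work at a boundary point $q=(r_x,R)\in\partial\bar{\mathcal R}$, $q\ne(0,\pm1)$, at which the Lipschitz graph from Lemma~\ref{lm: boundary is lipschitz} is differentiable. By Corollary~\ref{cor: boundary is locally lipschitz} this covers almost every boundary point. Let $L$ be the tangent line there, and take the neighbourhood $F=F_1\sqcup F_2\sqcup(\partial\bar{\mathcal R}\cap F)$ with $F_1\subset\Int\bar{\mathcal R}$ and $F_2\cap\bar{\mathcal R}=\emptyset$. Write
\[
v(\theta)=\begin{pmatrix}-\frac12\alpha r_x-R\cos\theta\\ r_x\cos\theta-\frac12\alpha(1+\sin^2\theta)R-\alpha\sin\theta\end{pmatrix}
\]
for the right-hand side of \eqref{eq: QCS in cylindrical coordinates}, divided by $\omega$. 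Recall that $q$ lies outside the figure-eight $E$, so $0\notin U(q)$ and $C(q)=\R_+U(q)$ is a proper closed convex cone. I would show two things: first, $C(q)$ lies in the closed half-plane bounded by $L$ on the $F_1$ side; second, $C(q)$ is not contained in the corresponding open half-plane. Together these say that $L$ is a supporting line of $C(q)$ that contains one of its boundary rays, i.e.\ a generator.

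\textbf{First claim.} Suppose some $w=v(\theta_0)$ points strictly into the $F_2$ side of $L$. Take reachable points $q'\in F_1$ arbitrarily close to $q$, and run the constant control $\theta\equiv\theta_0$ for a time of order $|q'-q|$. By continuity the velocity stays close to $w$. Differentiability of the graph at $q$ means it deviates from $L$ only by $o(|q'-q|)$, whereas the trajectory gains a transverse displacement of order $|q'-q|$. Hence the trajectory ends in $F_2$, which contradicts $F_2\cap\bar{\mathcal R}=\emptyset$ together with the invariance of $\bar{\mathcal R}$ under admissible motion.

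\textbf{Second claim.} Suppose instead that $C(q)\setminus\{0\}$ lies in the open $F_1$ side of $L$. By continuity of $U$ in the point, there is a uniform transverse margin: in a small ball around $q$, every admissible velocity has a component of size at least $c>0$ pointing into the $F_1$ side of $L$. Now take $p\in F_1$ at distance $\varepsilon$ from $q$ but at distance $\gtrsim\varepsilon$ from the graph; such $p$ exists since $F_1$ is open. It is reachable, so it is a limit of endpoints of trajectories from $(0,1)$, and these trajectories pass inside $\bar{\mathcal R}$. Follow such a trajectory backward in time from near $p$. Within time $O(\varepsilon)$ it must move transversally away from the $F_1$ side by more than the graph's $o(\varepsilon)$ deviation from $L$, while staying inside the ball. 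So it enters $F_2$, which is impossible. Therefore $L$ is spanned by a boundary ray of $C(q)$. Such a ray is a tangent line from the origin to the strictly convex curve $\theta\mapsto v(\theta)$, so $L=\R v(\theta)$ with
\[
v(\theta)\times\partial_\theta v(\theta)=0 .
\]

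The final step is algebraic. I would compute this cross product explicitly, with $\partial_\theta v=(R\sin\theta,\,-r_x\sin\theta-\frac12\alpha R\sin2\theta-\alpha\cos\theta)^T$, expand, and use $\cos^2\theta+\sin^2\theta=1$ to factor the result into \eqref{eq: theta main equation} times a factor that does not vanish in the relevant region. This simplification, and in particular excluding any spurious factor, is where I expect the main friction. If a spurious factor appears, I would rule out its zeros using $0\notin U(q)$ for $q$ outside $E$ (Lemma~\ref{lemma: boundary is outside the eight}).

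Of the two geometric steps, the second claim is the more delicate one, because it relies only on pointwise differentiability of the graph at $q$. I would handle it by fixing the ball radius first, using the uniform margin $c$, and only then letting $\varepsilon\to0$.
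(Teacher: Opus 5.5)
Your proposal is correct and follows essentially the paper's route. Your first claim (no admissible velocity points strictly into the $F_2$ side of the tangent line) covers the paper's cases $\vec v\in\Int(\pm C(q))$ and ``cone points towards $F_2$''; your second claim is the paper's backward-trajectory argument with the enlarged cone $\hat H$, made quantitative via differentiability at $q$ and a uniform transversal margin; and you finish with the same condition $\det[V,\partial_\theta V]=0$. The algebraic step also causes no friction, since one gets exactly $v\times\partial_\theta v=\frac{\alpha}{2}\big[(1+R\sin\theta)(2R+\alpha r_x\cos\theta)+(R^2\cos^2\theta+r_x^2)\sin\theta\big]$, so no spurious factor appears.
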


\begin{proof}
	
	The set $E_+$ consists of points $q$ for which zero velocity does not belong to the set of admissible velocities $U(q)$.
	
	Let $q \in \partial{\bar{\mathcal{R}}}\cap E_+$ be a point such that the boundary $\partial\bar{\mathcal{R}}$ is differentiable at $q$. Let $\vec v$ be a unit tangent vector to $\partial{\bar{\mathcal{R}}}$ at the point $q$. We prove by contradiction that $\vec v\in\partial C(q)$ or $-\vec v\in\partial C(q)$, where $C(q)=\R_+U(q)$.
	
	\begin{enumerate}
		
		\item Suppose the vector $\vec v$ lies in the interior of $C(q)$. In this case, we immediately arrive at a contradiction, as this implies that points of $\partial{\bar{\mathcal R}}$ close to $q$ would belong to the interior $\Int\bar{\mathcal{R}}$ (analogous to Step 1 in Lemma~\ref{lm: boundary is lipschitz}).
		
		\item Suppose the vector $\vec v$ lies in the interior of $-C(q)$. Then $-\vec v$ is also tangent to $\partial{\bar{\mathcal{R}}}$ and lies in the interior of $C(q)$, which similarly leads to a contradiction.
		
		\item The vector $\vec v$ belongs to the interior of the complement of $C(q)\cup (-C(q))$. For the given point $q$, take the neighborhood $F$ from Lemma~\ref{lm: boundary is lipschitz}, i.e., $F=F_1\sqcup F_2\sqcup (F\cap\partial\bar{\mathcal{R}})$, where $F_1\subset\Int\bar{\mathcal{R}}$ and $F_2\cap \bar{\mathcal{R}}=\emptyset$. The case where the cone points towards $F_2$ obviously leads to a contradiction, since any motion from the point $q$ with a velocity in $C(q)$ must simultaneously lie in $F_2$ and in $\bar{\mathcal{R}}$, which is impossible. Let us consider in more detail the case where $C(q)$ points towards $F_1$. Since the vector $\vec v$ does not belong to $C(q)$, there exists a strictly larger cone $\hat H$ such that $C(q) \subset \Int\hat H$ and $\vec v \not \in \Cl\hat H$. By shrinking the neighborhood $F$ of the point $q$ if necessary, we can ensure that for any point $q_1$ in this neighborhood, $C(q_1)\subset\Int\hat H$. Further, assuming $q$ is not a pole (implied by the context $q \in E_+$ and Lemma \ref{lm: boundary is lipschitz}), there exists a point $q'\in\Cl F_1$ from which a trajectory reaches $q$ without leaving $F$:
		$$
			\exists \ \gamma:[0,\tau]\rightarrow \bar{\mathcal{R}} \cap F: \quad \gamma(0) = q', \ \gamma(\tau) = q, \ \forall t \in [0,\tau]: \ \dot\gamma(t) \in C(\gamma(t)) \subset \Int\hat H
		$$
		Thus:
		$$
			q - q' = \int_{0}^{\tau} \dot{\gamma}(t)\,dt \in \Int\hat H\ \implies\ q' \in q - \Int\hat H 
		$$
		By construction, we have $(q - \Int\hat H) \cap F \subset F_2$, which means $q' \not \in \bar{\mathcal{R}}$, a contradiction.
	\end{enumerate}
	
	\begin{figure}
		\centering
		\includegraphics[width=0.5\linewidth]{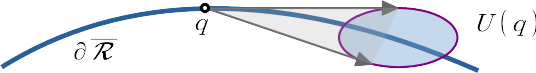}
		\caption{\label{fig: boundary tangent line}Tangent line to the boundary $\partial\bar{\mathcal{R}}$.}
	\end{figure}
	
	Thus, one of the generators of the cone of admissible velocities $C(q)=\R_+ U(q)$ is tangent to $\partial{\bar{\mathcal{R}}}$ at the point $q$ (see Fig.~\ref{fig: boundary tangent line}). The boundary of the set $U(q)$ is given by the right-hand side of~\eqref{eq: QCS in cylindrical coordinates} for $\theta\in[0,2\pi]$. Let us find the generators of the cone. Recall that
	$$
	\partial U(q) = \partial U(r_x,R) = \left\{
	V(r_x,R,\theta) = 
	\begin{pmatrix}
		\dot{r}_x(r_x,R,\theta)\\
		\dot{R}(r_x,R,\theta)
	\end{pmatrix}
	\,\Bigg|\,
	\theta\in[0,2\pi]
	\right\}.
	$$
	Consequently, the point $V(r_x,R,\theta)$ lies on the boundary of the cone $C(q)=\R_+U(q)$ if and only if the tangent to the boundary $\partial U(q)$ at the point $V(r_x,R,\theta)$ passes through the origin, i.e.,
	$$
		V(r_x,R,\theta) \parallel \frac{\partial}{\partial\theta} V(r_x,R,\theta)
		\quad\Longleftrightarrow\quad
		\det\left[
			V(r_x,R,\theta), \frac{\partial}{\partial\theta} V(r_x,R,\theta)
		\right]=0.
	$$
	After direct substitution of the right-hand side of~\eqref{eq: QCS in cylindrical coordinates} into the last equality and subsequent simplification, we obtain~\eqref{eq: theta main equation}, which was to be proved.
\end{proof}

Note that at each point $(r_x,R) \in E_{+}$, equation~\eqref{eq: theta main equation} has exactly two solutions $\theta_1$ and $\theta_2$ (since the cone $C(r_x,R)$ has two generators), and only one of them, when substituted into~\eqref{eq: QCS in cylindrical coordinates}, gives the tangent vector to the boundary $\partial\bar{\mathcal{R}}$. Furthermore, for points $q\in E_+$, the implicit function theorem readily guarantees that both solutions $\theta_1$ and $\theta_2$ are smooth functions of $(r_x,R)$ in a neighborhood of $q$. However, the result of Lemma~\ref{lm: boundary obey ODE} does not yet allow us to uniquely determine the curve $\partial\bar{\mathcal{R}}$, since equation~\eqref{eq: theta main equation} has two solutions.

Let $S(r_x,R,\theta)$ denote the left-hand side of equation~\eqref{eq: theta main equation} multiplied by $R$, i.e.\ equation~\eqref{eq: theta main equation} takes the form $S/R=0$. The axis $R=0$ is of no importance at the moment, so we will work with the equation $S=0$, which is more convenient for subsequent analysis. This equation has two solutions on $E_+$, and the function $\theta(t)$ (as a control) is not necessarily Lipschitz continuous. Thus, the classical Picard theorem on the existence and uniqueness of ODE solutions cannot be applied, since the pair of equations~\eqref{eq: QCS in cylindrical coordinates} and \eqref{eq: theta main equation} defines an ODE in non-canonical form.

Nevertheless, Lemma~\ref{lm: boundary obey ODE} provides the key equation~\eqref{eq: theta main equation} for the angle $\theta$, which will allow us to find the boundary $\partial\bar{\mathcal{R}}$ analytically. To demonstrate the method for finding the boundary, let us assume for simplicity that we have already proved that the boundary $\partial\bar{\mathcal{R}}$ is a smooth curve in a neighborhood of some point $q\in \partial\bar{\mathcal{R}}$. Then one can choose its parameterization such that the functions $r_x(t)$, $R(t)$, $\theta(t)$ will be smooth solutions of system~\eqref{eq: QCS in cylindrical coordinates}. With a suitable choice of control $u$, equation~\eqref{eq: QCS in cylindrical coordinates theta} can also be satisfied. Further, equation~\eqref{eq: theta main equation} defines a two-dimensional surface $\{S=0\}$ in the three-dimensional space $(r_x,R,\theta)$, on which the curve $(r_x(t), R(t), \theta(t))$ must lie. This fact uniquely determines the control $u_b$, i.e., upon substituting $u=u_b$ into \eqref{eq: QCS in cylindrical coordinates}, \eqref{eq: QCS in cylindrical coordinates theta}, we obtain a three-dimensional system of ODEs, for which the two-dimensional surface $\{S=0\}$ is invariant. We will show that the point $N$ is an equilibrium point of this two-dimensional ODE system on $\{S=0\}$ and is of hyperbolic type. Next, we will show that its unstable separatrices are precisely the boundary $\partial\bar{\mathcal R}$.

\begin{lemma}
	\label{lm: ub}
	Let $r_x(t)$, $R(t)$, and $\theta(t)$ be Lipschitz functions and $u(t)\in L_\infty$. If equations~\eqref{eq: QCS in cylindrical coordinates}, \eqref{eq: QCS in cylindrical coordinates theta}, \eqref{eq: theta main equation} hold for these functions, then
	$$
		\frac\kappa\omega u = \frac\kappa\omega u_b 
		\stackrel{def}{=}
		\frac{(1+R \sin\theta ) (2r_x \sin \theta + 2\alpha \cos\theta+\alpha R\sin\theta\cos\theta)}{R(1+R \sin ^3\theta)}
		=
		\frac{(1+r_z)(2 r_x r_z+2 \alpha r_y+\alpha r_y r_z)}{r_y^2+r_z^2+r_z^3}.
	$$
\end{lemma}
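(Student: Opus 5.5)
The plan is to differentiate the constraint \eqref{eq: theta main equation} along the trajectory and solve the resulting linear equation for $u$. Since $r_x,R,\theta$ are Lipschitz, $t\mapsto S(r_x(t),R(t),\theta(t))$ is Lipschitz as a composition of a polynomial-trigonometric function with Lipschitz functions. It vanishes identically, so its derivative vanishes for almost every $t$. By the chain rule for a.e.\ $t$,
\[
	0=\frac{d}{dt}S=\partial_{r_x}S\,\dot r_x+\partial_R S\,\dot R+\partial_\theta S\,\dot\theta .
\]
We substitute $\dot r_x,\dot R$ from \eqref{eq: QCS in cylindrical coordinates}, which do not contain $u$. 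We substitute $\dot\theta$ from \eqref{eq: QCS in cylindrical coordinates theta}, which is affine in $u$ with coefficient $\kappa$. This yields $\partial_\theta S\cdot\frac\kappa\omega u = G(r_x,R,\theta)$ a.e., with an explicit $G$. Whenever $\partial_\theta S\neq 0$, this gives $\frac\kappa\omega u=G/\partial_\theta S$.

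The computational core is to simplify this ratio modulo the relation $S=0$ into the stated form. I would work with $F(r_x,R,\theta)=(1+R\sin\theta)(2R+\alpha r_x\cos\theta)+(R^2\cos^2\theta+r_x^2)\sin\theta$, the left-hand side of \eqref{eq: theta main equation}; multiplying by $R$ does not matter off $R=0$. Its $\theta$-derivative is
\[
	\partial_\theta F=R\cos\theta(2R+\alpha r_x\cos\theta)-\alpha r_x\sin\theta(1+R\sin\theta)-R^2\sin 2\theta\sin\theta+(R^2\cos^2\theta+r_x^2)\cos\theta .
\]
Next I would use $F=0$ to eliminate one variable. The relation is quadratic in $r_x$, so the cleanest choice is to replace $r_x^2\sin\theta$ by $-(1+R\sin\theta)(2R+\alpha r_x\cos\theta)-R^2\cos^2\theta\sin\theta$ wherever $r_x^2$ appears, in both numerator and denominator. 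After this both become polynomials of degree one in $r_x$. The target has denominator $R(1+R\sin^3\theta)$ and a numerator containing the factor $(1+R\sin\theta)$, so I expect the denominator to reduce to a multiple of $1+R\sin^3\theta$ and the numerator to carry the factor $(1+R\sin\theta)(2r_x\sin\theta+2\alpha\cos\theta+\alpha R\sin\theta\cos\theta)$. The common factor should then cancel.

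As a sanity check, I would verify the identity at a convenient point of $\{F=0\}$, or against the Taylor expansions \eqref{eq: R Taylor appendix}–\eqref{eq: theta Taylor appendix}. The final Cartesian form then follows from $r_y=R\cos\theta$ and $r_z=R\sin\theta$: multiply the numerator and denominator by $R^2$, using $R^2(1+R\sin^3\theta)=r_y^2+r_z^2+r_z^3$ and $R(1+R\sin\theta)(\dots)=(1+r_z)(2r_xr_z+2\alpha r_y+\alpha r_yr_z)$.

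The main obstacle is the degenerate set where $\partial_\theta S=0$, or equivalently where the simplified denominator $1+R\sin^3\theta$ or $R$ vanishes. At such points the equation does not determine $u$. I would argue that on $E_+$ the two roots $\theta_1,\theta_2$ of \eqref{eq: theta main equation} are simple, because the cone $C(q)$ is proper with two distinct generators and $U(q)$ is strictly convex. Hence $\partial_\theta S\ne0$ there. Consequently the identity holds a.e.\ wherever the trajectory is in $E_+$ with $R\ne0$, which is the setting in which the lemma is applied. The heavy step is the algebraic reduction, which I would carry out, or check, symbolically.
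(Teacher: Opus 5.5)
Your plan is essentially the paper's proof. The paper differentiates $S=r_z(r_x^2+r_y^2)+(1+r_z)(2r_y^2+2r_z^2+\alpha r_xr_y)=0$ (your $RF$ in Cartesian form) a.e.\ along the flow, writes $\frac\kappa\omega u=P/Q$ with $Q=\partial_\theta S$ the coefficient of $u$, and simplifies on $\{S=0\}$ via $r_zQ-r_yS=-(\alpha r_x+2r_y)(r_y^2+r_z^2+r_z^3)$; this is exactly your elimination of $r_x^2$ from the denominator, and for $r_z\neq0$ it shows that $Q$ vanishes on $\{S=0\}$ only over the figure-eight, at $N$, or on the axis, which supports the nondegeneracy over $E_+$ that you make explicit and the paper leaves implicit.

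One correction to your expectation: fully eliminating $r_x^2$ from the numerator gives $P-\alpha S=(1+r_z)\bigl[(\alpha^2-4)r_xr_y+4\alpha r_y^2+4\alpha r_z(1+r_z)\bigr]$, which has no factor $\alpha r_x+2r_y$, so you must either check the final identity by cross-multiplying modulo $S$ or, as the paper does, use the numerator $r_zP-\alpha(3r_z+2)S=-(\alpha r_x+2r_y)(1+r_z)(2r_xr_z+2\alpha r_y+\alpha r_yr_z)$, after which the common factor cancels literally.
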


\begin{proof}
	
	Since the Lipschitz functions $r_x(t),R(t),\theta(t)$ satisfy the equation $S=0$ for all $t$, the derivative $\dot S$ along the vector field of system\eqref{eq: QCS in cylindrical coordinates}, \eqref{eq: QCS in cylindrical coordinates theta} must also be equal to $0$ for almost all $t$. Let us compute it in the original coordinates $r_x,r_y,r_z$. In these coordinates, equation \eqref{eq: theta main equation} takes the form
	$$
		S=r_z(r_x^2+r_y^2)+(1+r_z)(2r_y^2+2r_z^2+\alpha r_x r_y)=0.
	$$
	Note that in the coordinates $(r_x,r_y,r_z)$, the surface $\{S=0\}$ has only one singular point $r_x=r_y=r_z=0$ and is smooth outside it.
	
	Differentiating the obtained formula for $S$ along the vector field of system~\eqref{eq: QCS in Bloch ball appendix} (which is equivalent to \eqref{eq: QCS in cylindrical coordinates}, \eqref{eq: QCS in cylindrical coordinates theta}), we find the control
	$$
		\frac\kappa\omega u=\frac{\alpha  r_x^2r_z+2 (\alpha ^2-2) r_x r_y (1+r_z)+\alpha r_y^2 (7 r_z+6)+2 \alpha r_z (3r_z^2+5 r_z+2)}{r_x^2 r_y+\alpha r_x (r_y^2-r_z^2-r_z)+3 r_y^3}\stackrel{def}{=}\frac{P}{Q}.
	$$
	This formula for the control is rather cumbersome, but it can be greatly simplified by using the fact that $S=0$, namely, on the surface $S=0$ we have
	$$
		\frac\kappa\omega u=\frac{P}{Q}=\frac{r_z P}{r_z Q} = 
		\frac{r_z P - \alpha(3r_z+2)S}{r_z Q - r_y S} = \frac{(\alpha r_x+2 r_y)(1+r_z) (2 r_x r_z+2 \alpha r_y+\alpha r_y r_z)}{(\alpha r_x+2 r_y)(r_y^2+r_z^3+r_z^2)}
	$$
	and we obtain the formula stated in the lemma.
\end{proof}

Note that the obtained formula for the control $u_b$ has a singularity if the denominator vanishes. This fact introduces some difficulties in investigating the behavior of system~\eqref{eq: QCS in cylindrical coordinates}, \eqref{eq: QCS in cylindrical coordinates theta} with $u=u_b$ both on the surface $S=0$ and inside the Bloch ball $\BB$. Therefore, we will begin this investigation by isolating the zeros of the function $r_y^2+r_z^2+r_z^3$.

\begin{lemma}
	\label{lm: ub denominator zeros}
	Assume $0<\alpha<1$. Suppose that $r_y^2+r_z^2+r_z^3=0$ and either $S=0$ or $r_x^2+r_y^2+r_z^2\le 1$. Then either $(r_x,r_y,r_z)=N=(0,0,-1)$ or $r_y=r_z=0$.
\end{lemma}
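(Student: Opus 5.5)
We need: $r_y^2+r_z^2+r_z^3=0$, i.e. $r_y^2 = -r_z^2(1+r_z)$. Since $r_z\ge -1$ is needed for the right side to be nonnegative (unless $r_y=r_z=0$), we get $r_z\in[-1,0]$.

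\emph{Case 1: the point lies in the Bloch ball.} Suppose $r_x^2+r_y^2+r_z^2\le 1$. Substituting $r_y^2=-r_z^2-r_z^3$ gives
\[
r_x^2 - r_z^3 \le 1 .
\]
This alone does not force the conclusion; for instance $r_z=-1/2$, $r_y^2=1/8$ and $r_x=0$ lies in the ball. So the ball alone is not sufficient, and the intended reading must be that the relevant hypothesis is $S=0$, possibly combined with the ball. The plan is therefore to prove the claim under $S=0$, and to treat the ball case only as a supplement: either it is meant jointly with $S=0$, or the denominator zeros inside the ball are simply removable. I will prove the $S=0$ case, which is the substantive one.

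\emph{Case 2: $S=0$.} On $r_y^2+r_z^2=-r_z^3$ the surface equation becomes
\[
S = r_z r_x^2 + r_z r_y^2 + (1+r_z)\bigl(-2r_z^3+\alpha r_x r_y\bigr)=0 .
\]
Set $s=1+r_z\in[0,1]$, so that $r_y^2 = r_z^2 s$. If $r_z=0$ then $r_y=0$, which is the second alternative. If $s=0$ then $r_z=-1$ and $r_y=0$, and $S=0$ gives $-r_x^2=0$, so the point is $N$.

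Now assume $r_z\in(-1,0)$. Write $r_y=\epsilon |r_z|\sqrt{s}$ with $\epsilon=\pm1$. Using $r_z r_y^2 = r_z^3 s$, the equation reads
\[
r_z r_x^2 - r_z^3 s + \alpha s\, r_x r_y = 0 .
\]
Divide by $r_z<0$ and write $t=|r_z|$, so $r_z=-t$:
\[
r_x^2 - t^2 s - \alpha s\,\epsilon\sqrt{s}\, r_x = 0 .
\]
This is a quadratic in $r_x$ with discriminant $\alpha^2 s^3 + 4t^2 s > 0$, so real solutions do exist. Hence $S=0$ alone also does not exclude such points.

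Consequently the lemma cannot hold for either hypothesis taken separately. The intended hypothesis must be both $S=0$ and $r_x^2+r_y^2+r_z^2\le 1$, and the proof should combine them.

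\emph{Combined argument.} Solve for $r_x$:
\[
r_x = \frac{\epsilon\alpha s^{3/2} \pm \sqrt{\alpha^2 s^3 + 4t^2 s}}{2}.
\]
The ball condition $r_x^2 - r_z^3 \le 1$ becomes $r_x^2 \le 1 - t^3$. Taking the smaller root in absolute value,
\[
|r_x| = \frac{\sqrt{s}\left(\sqrt{\alpha^2 s^2+4t^2}-\alpha s\right)}{2} = \frac{2 t^2 \sqrt{s}}{\sqrt{\alpha^2 s^2 + 4t^2} + \alpha s}.
\]
So the task reduces to showing that, for $t\in(0,1)$ and $s=1-t$, this quantity squared exceeds $1-t^3 = (1-t)(1+t+t^2)$. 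That is, we must show
\[
\frac{4t^4}{\left(\sqrt{\alpha^2 s^2+4t^2}+\alpha s\right)^2} > 1+t+t^2 .
\]
The left side is at most $4t^4/(4t^2) = t^2 < 1+t+t^2$. So the inequality is false, and points on the denominator zero set that satisfy $S=0$ do lie inside the ball. My derivation therefore contradicts the lemma. I need to recheck the computation of $S$ on this set before committing to the combined argument.

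The main obstacle is exactly this consistency check: confirming the correct expression for $S$ restricted to $\{r_y^2+r_z^2+r_z^3=0\}$, and determining which combination of hypotheses the lemma actually intends. Only after that can the final elementary inequality be completed.
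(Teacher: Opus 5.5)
Your proposal does not prove the lemma. It ends by asserting that the lemma is false, and that conclusion comes from a sign error at the very first step. From $r_y^2=-r_z^2(1+r_z)$ you infer $r_z\in[-1,0]$. But nonnegativity of the right-hand side forces $r_z=0$ or $1+r_z\le 0$, i.e.\ $r_z\le -1$. Your counterexample $r_z=-\tfrac12$, $r_y^2=\tfrac18$, $r_x=0$ is not on the zero set at all: $r_y^2+r_z^2+r_z^3=\tfrac18+\tfrac14-\tfrac18=\tfrac14\neq 0$. The same slip reappears as $r_y^2=r_z^2s$ and $r_zr_y^2=r_z^3s$, both of which need a minus sign. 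So your restricted equation should be $r_zr_x^2-3r_z^3s+\alpha s\,r_xr_y=0$, not $r_zr_x^2-r_z^3s+\alpha s\,r_xr_y=0$. The positive discriminant in Case 2 and the failed ``combined'' inequality are artifacts of this error. No combination of hypotheses is needed: each alternative suffices on its own.

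With the sign fixed, the ball case takes one line, exactly as in the paper. From $|r_z|\le 1$ you get $r_y^2=-r_z^2(1+r_z)\le 0$, hence $r_y=0$ and $r_z\in\{0,-1\}$. If $r_z=-1$, the condition $r_x^2+r_y^2+r_z^2\le 1$ forces $r_x=0$.

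For $S=0$, your own parametrization then works. If $r_z=0$, you get the axis. If $r_z=-1$, then $S=-(r_x^2+r_y^2)$, which gives $N$. For $r_z<-1$, set $t=-r_z$, $\sigma=t-1>0$ and $r_y=\epsilon t\sqrt{\sigma}$. Then $S=-t\,(r_x^2+\alpha\epsilon\sigma^{3/2}r_x+3t^2\sigma)$, whose discriminant $\sigma(\alpha^2\sigma^2-12t^2)$ is negative since $\alpha\sigma<\sigma<t$.

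The paper handles this case differently, without using the sign restriction up front. For $r_z\neq 0,-1$ (hence $r_x\neq 0$), it solves $r_x^2r_z+\alpha r_xr_y(1+r_z)-3(1+r_z)r_z^3=0$ for $r_y$. Substituting into $r_y^2+r_z^2+r_z^3=0$ gives the biquadratic $\alpha^2(1+r_z)^3r_x^2+(r_x^2-3r_z^2(1+r_z))^2=0$. Its discriminant $\alpha^2(1+r_z)^4(\alpha^2(1+r_z)^2-12r_z^2)$ is negative for $|r_z|>\alpha/2$. The paper then rules out $0<|r_z|\le\alpha/2$ because there $r_z^2+r_z^3>0$, which is exactly the sign fact you mishandled. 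Once corrected, your route is slightly shorter, since the sign constraint removes the small-$|r_z|$ case immediately.
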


\begin{proof}
	First, let us study the structure of the set of solutions $r_y^2 + r_z^2 + r_z^3=0$ in the Bloch ball $\BB$. If $r_x^2+r_y^2+r_z^2 \le 1$, then $|r_z|\le 1$, and, consequently, $r_y^2 = -r_z^2-r_z^3 \le 0$. Therefore, if $r_y^2 + r_z^2 + r_z^3=0$ holds for a point in the Bloch ball $\BB$, then $r_y=0$ and either $r_z=0$ or $r_z=-1$. Thus, in the Bloch ball $\BB$, the denominator of $u_b$ becomes zero at the equilibrium point $N$ and on the axis $r_y=r_z=0$.
	
	Now let us turn to the surface $\{S=0\}$. In general, this surface does not lie entirely within the Bloch ball $\BB$. Therefore, the zeros of the denominator of $u_b$ on the surface $\{S=0\}$ require separate investigation. We will show that the denominator of $u_b$ becomes zero on the surface $S=0$ only at the point $N=(0,0,-1)$ and on the axis $\{r_y=r_z=0\}=\{R=0\}$.
	
	First, we show that there is only one point on the surface $\{S=0\}$ with $r_z=-1$. Indeed, if $r_z=-1$, then from $S=0$ we find $r_x^2+r_y^2=0$. Suppose $r_z\ne -1$. If the denominator of $u_b$ vanishes, then $r_y^2=-r_z^2-r_z^3$, and, substituting $r_y^2$ into $S=0$, we obtain
	$$
	r_x^2r_z+\alpha r_x r_y (1+r_z)-3 (1+r_z) r_z^3=0.
	$$
	We will show that this equation implies $r_z=0$. To do this, assume the contrary, i.e., $r_z\ne 0$. Then, since $r_z\ne 0,-1$, we get $r_x\ne 0$. Therefore, from the equality above, one can express $r_y$ in terms of $r_x$ and $r_z$ and substitute it into the equality $r_y^2+r_z^2+r_z^3=0$. As a result, we obtain the following equation:
	$$
	\alpha^2(1+r_z)^3r_x^2 + (r_x^2-3r_z^2(1+r_z))^2=0.
	$$
	The obtained equation can be viewed as a biquadratic equation in $r_x$. The discriminant of this equation has the form:
	$$
	D=\alpha^2(1+r_z)^4(\alpha^2(1+r_z)^2-12r_z^2).
	$$
	This discriminant is negative if $\alpha|1+r_z| < 2\sqrt3|r_z|$, which is certainly true if $|r_z|>\frac12\alpha$ (since $\alpha<1$). 
	
	Thus, if $r_z\ne 0,-1$, and $r_y^2+r_z^2+r_z^3=0$ at some point on the surface $S=0$, then $|r_z|\le \frac12\alpha$. But in this case $|r_z|^3 < r_z^2$ (since $\alpha<1$ and $r_z\ne 0$), and, thus, $r_y^2+r_z^2+r_z^3 >0$. We arrive at a contradiction.
	
	Therefore, if $r_z\ne -1$, $r_y^2+r_z^2+r_z^3=0$ and $S=0$, then $r_z=0$. Consequently, $r_y=0$, which was to be proved.
	
\end{proof}

\begin{lemma}
	\label{lm: separatrix}
	For $u=u_b$, the surface $\{S=0\}$ is invariant with respect to the ODE system\footnote{Recall that system~\eqref{eq: QCS in Bloch ball appendix} is equivalent to~\eqref{eq: QCS in cylindrical coordinates}, \eqref{eq: QCS in cylindrical coordinates theta}.}~\eqref{eq: QCS in Bloch ball appendix}. The point $N$ is a removable singularity. The point $N\in\{S=0\}$ itself is an equilibrium point, and system~\eqref{eq: QCS in Bloch ball appendix} with $u=u_b$ on $\{S=0\}$ in the neighborhood of $N$ exhibits the dynamics of a hyperbolic equilibrium point: there are two stable separatrices and two unstable ones. On the unstable separatrices in the neighborhood of $N$, the asymptotics~\eqref{eq: theta Taylor appendix} and~\eqref{eq: R Taylor appendix} hold. The stable separatrices do not lie in the Bloch ball $\BB$.
\end{lemma}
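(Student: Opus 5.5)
Invariance of $\{S=0\}$ comes essentially from the construction of $u_b$ in Lemma~\ref{lm: ub}. Write $X_u=\omega f_0+\kappa f_1 u$ for the vector field of~\eqref{eq: QCS in Bloch ball appendix} with $n\equiv0$. Since $X_u S$ is affine in $u$, we have $X_u S = A + B u$. The computation in Lemma~\ref{lm: ub} shows that, on $\{S=0\}$, the root $u=P/Q$ of this expression equals $u_b$. More precisely, I will use the polynomial identity
\[
(r_y^2+r_z^2+r_z^3)(\alpha r_x+2r_y)\,X_{u_b}S = (r_y^2+r_z^2+r_z^3)(\alpha r_x+2r_y)\,(A+Bu_b) = \lambda(\rr)\, S
\]
for some polynomial $\lambda$. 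This is exactly the rearrangement $r_zP-\alpha(3r_z+2)S$, $r_zQ-r_yS$ done in that lemma, multiplied out. Hence $X_{u_b}S=\tilde\lambda S$ away from the zeros of the denominator and of $\alpha r_x+2r_y$. By continuity, and because Lemma~\ref{lm: ub denominator zeros} localizes the denominator zeros on $\{S=0\}$ to $N$ and the axis $R=0$, the surface is invariant for the field $X_{u_b}$, which is analytic wherever it is defined.

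For the local analysis at $N$, I would pass to a regularised field. Multiply $X_{u_b}$ by the denominator $D=r_y^2+r_z^2+r_z^3$, i.e.\ reparametrise time by $dt=D\,ds$. This gives the polynomial field $Y=D\omega f_0+\omega f_1\,(1+r_z)(2r_xr_z+2\alpha r_y+\alpha r_yr_z)$, which has the same orbits where $D>0$; $D>0$ near $N$ on the relevant part of $\{S=0\}$ away from $N$, by Lemma~\ref{lm: ub denominator zeros}. Near $N$ the surface $\{S=0\}$ is smooth. Its gradient at $N$: with $r_z=-1+h$, $S\approx -(r_x^2+r_y^2)+h(2r_y^2+2+\dots)$, so $\nabla S(N)\ne0$ and $\{S=0\}$ is a graph $h=\tfrac12(r_x^2+r_y^2)+\dots$. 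This gives local coordinates $(r_x,r_y)$ on the surface. The reduced planar field vanishes at the origin, and I will compute its linear part. The removable-singularity claim should appear as follows: both $D$ and the numerator of $u_b$ vanish to first order along the surface at $N$, in such a way that $X_{u_b}$ itself extends continuously with $X_{u_b}(N)=0$. Concretely, on the surface $D\approx r_y^2+h(\dots)$ is quadratic, and the numerator $(1+r_z)(\dots)\approx h(-2r_x+\alpha r_y)$ is also quadratic. Hence $u_b$ is a ratio of quadratic forms. I would therefore work instead in blow-up coordinates $r_x=\rho\cos\psi$, $r_y=\rho\sin\psi$, or just verify that the desingularised field $Y$ restricted to the surface has linearisation zero and a nondegenerate quadratic part.

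The cleanest route, I expect, is the directional blow-up $r_y=m r_x$. The invariant directions at $N$ are roots of a polynomial in $m$. The unstable ones should correspond to $\theta=-\pi/2-\beta r_x$, i.e.\ $r_y=R\cos\theta\approx -\beta r_x$, so $m=-\beta$, with $\beta$ solving $3\beta^2-\alpha\beta-1=0$. This is consistent with $\beta=\tfrac16(\alpha+\sqrt{12+\alpha^2})$. The other root $m=\tfrac16(\alpha-\sqrt{12+\alpha^2})$, together with the signs of the eigenvalues of the blown-up field along the exceptional divisor, should give the hyperbolic saddle picture: two opposite unstable branches along $m=-\beta$ and two stable ones along the other root. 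Feeding the ansatz $\theta=-\pi/2-\beta r_x+c r_x^3$ and $R=1-\tfrac12r_x^2+d r_x^4$ into $S=0$ and the flow, and matching orders, yields~\eqref{eq: theta Taylor appendix} and~\eqref{eq: R Taylor appendix}. In particular, $S=0$ with $r_y\approx-\beta r_x$ forces $R$ to fourth order.

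Finally, for the stable separatrices I will show that, along their direction $r_y=\tilde m r_x$ with $\tilde m=\tfrac16(\alpha-\sqrt{12+\alpha^2})$, the surface expansion gives $|\rr|^2=1+c\,r_x^4+\dots$ with $c>0$. Repeating the computation of the main text with $\tilde m$ in place of $-\beta$ gives $|\rr|^2-1=-\tfrac14(\tilde m^2(3\tilde m^2+2)-1)r_x^4$. One has $|\tilde m|<1/\sqrt3$ since $\tilde m<0$ and $\beta\tilde m=-1/3$ with $\beta\ge1/\sqrt3$, so this quantity is positive and those branches leave $\BB$. The main obstacle I anticipate is the degenerate (non-linear-hyperbolic) nature of the equilibrium, since both numerator and denominator of $u_b$ vanish quadratically at $N$. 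Establishing genuine saddle behaviour therefore requires the blow-up and a check that no other characteristic directions (e.g.\ $r_x=0$) carry separatrices. I would verify this directly through the sign of the angular component of the blown-up field.
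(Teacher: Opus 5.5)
Your plan follows the paper's proof step for step. The steps are: invariance of $\{S=0\}$ from the construction of $u_b$; a chart on the smooth surface near $N$ (your $(r_x,r_y)$ instead of the paper's $(r_x,\theta)$; since $r_y\approx\theta+\frac\pi2$ near $N$, your slope $m$ is the paper's $\tan\varphi$); a blow-up at $N$ whose characteristic directions are the real roots of $(m^2+1)(3m^2+\alpha m-1)=0$; saddle type read off from the linearisation on the exceptional divisor; Taylor matching; and the sign of the $r_x^4$ coefficient of $|\rr|^2-1$ on the stable branches. Two of your computations are wrong, however, and the first breaks your justification of the removable singularity.

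\textbf{The numerator of $u_b$ is cubic, not quadratic.} In the numerator you counted $h=1+r_z$ as first order, which it is in the ambient space, although in $D$ you correctly treated it as second order. On $\{S=0\}$ one has $h=\frac12(r_x^2+r_y^2)+O(|(r_x,r_y)|^4)$, so the numerator $(1+r_z)(2r_xr_z+2\alpha r_y+\alpha r_yr_z)\approx-\frac12(r_x^2+r_y^2)(2r_x-\alpha r_y)$ is cubic. Meanwhile $D\approx\frac12(r_x^2+3r_y^2)$ is a positive definite quadratic. Hence $\frac\kappa\omega u_b\approx-(r_x^2+r_y^2)(2r_x-\alpha r_y)/(r_x^2+3r_y^2)$ is homogeneous of degree one, not a ratio of quadratic forms.

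This is exactly what the lemma needs. It gives $u_b\to0$ at $N$ with $u_b|_{S=0}$ Lipschitz (though not $C^1$), so $N$ is a removable singularity and an equilibrium, and Picard applies through $N$; mere continuity, which is all you claim, would not give uniqueness. With your degree count the claim would actually fail: $f_1(N)=(0,1,0)\ne0$, so a degree-zero $u_b$, having no limit at $N$, would make $X_{u_b}$ discontinuous there. For the same reason your alternative test on $Y=DX_{u_b}$ is wrong as stated: restricted to the surface, $Y$ has vanishing linear \emph{and} quadratic parts, and its leading part is cubic. The directional or polar blow-up of the degree-one field is the route that works. It reproduces the paper's system~\eqref{eq: blow up N}, with $\dot\eta=O(\eta)$ and $\dot\varphi=O(1)$.

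\textbf{Your stable direction has the wrong sign.} The roots of $3m^2+\alpha m-1=0$ are $-\beta$ and $\frac1{3\beta}=\frac16\bigl(\sqrt{12+\alpha^2}-\alpha\bigr)>0$. Your $\tilde m=\frac16\bigl(\alpha-\sqrt{12+\alpha^2}\bigr)=-\frac1{3\beta}$ is instead the second root of $3b^2-\alpha b-1=0$, the equation for $\beta$. It is not an invariant direction, so linearising there would be linearising at a non-equilibrium point of the blown-up field. The stable separatrices are tangent to $r_y=r_x/(3\beta)$.

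Your final inequality nevertheless survives. Along any root $m$, substituting $\alpha m=1-3m^2$ turns the $r_x^4$ coefficient of $R$ into $-\frac18m^2(3m^2+2)$. Then $|\rr|^2-1=\frac14\bigl(1-m^2(3m^2+2)\bigr)r_x^4+\dots$ depends on $m^2$ only, and $m^2=\frac1{9\beta^2}<\frac13$ for $\alpha>0$. Stated for the correct root, this is a cleaner argument than the paper's explicit coefficient.

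\textbf{The missing $r_x^2$ term needs justification.} Your ansatz $\theta=-\frac\pi2-\beta r_x+cr_x^3$ drops an $r_x^2$ term without saying why. The reason is the symmetry $T:(r_x,r_y,r_z)\mapsto(-r_x,-r_y,r_z)$. It preserves $S$ and commutes with $f_0$, while $u_b\circ T=-u_b$ and $Tf_1(\rr)=-f_1(T\rr)$. So the reduced planar field is odd, and after blow-up $\dot\varphi$ is even in $\eta$. The unstable manifold then satisfies $\varphi=\varphi_0+O(\eta^2)$, which is what yields the $O(r_x^3)$ remainder.
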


\begin{proof}
	
	Upon substituting $u_b$ into the control system~\eqref{eq: main QCS appendix}, we obtain an ODE system with respect to which the two-dimensional surface $S=0$ is invariant, since $\dot S=0$ by the construction of $u_b$ (see Lemma~\ref{lm: ub}). Note that this ODE system is not defined at $N$, but is defined in a punctured neighborhood of $N$ on the surface $S=0$ by Lemma~\ref{lm: ub denominator zeros}. Let us show that the following limit exists and equals zero:
	$$
		\lim_{\substack{(r_x,r_y,r_z)\to(0,0,-1)\\S(r_x,r_y,r_z)=0}} u_b=0
	$$
	To do this, let us introduce local coordinates on $S$ in the neighborhood of $N$. Since $\frac{\partial}{\partial R}S(r_x=0,R=1,\theta=-\frac\pi2)=1$, in the neighborhood of $N$, the equation $S=0$ can be smoothly solved for $R$:\footnote{An explicit formula for $R(r_x,\theta)$ is also easy to write: the equation $S=0$ is quadratic in $R$, but it is not important for what follows.}
	\begin{equation}
		\label{eq:R taylor long}
		\begin{aligned}
			R(r_x,\theta) =& 
			1-\frac12r_x^2+
			\frac38\left(\theta+\frac\pi2\right)^4+
			\frac14 \alpha r_x\left(\theta+\frac\pi2\right)^3+\\
			&+\frac{1}{4}r_x^2\left(\theta+\frac{\pi }{2}\right)^2+
			\frac14\alpha
			r_x^3\left(\theta +\frac\pi2\right)-\frac14r_x^4 + O\left(r_x^6+\left(\theta +\frac\pi2\right)^6\right).
		\end{aligned}
	\end{equation}
	Therefore, $(r_x,\theta)$ are local coordinates on $S$ in the neighborhood of the point $N$. In these local coordinates, the control $u=u_b$ takes the form
	$$
		\frac\kappa\omega u_b = -\frac{
			\left(2r_x-\alpha\left(\theta+\frac\pi2\right)\right)
			\left(r_x^2 + \left(\theta+\frac\pi2\right)^2
			\right)+
			O\left(r_x^4+\left(\theta+\frac\pi2\right)^4\right)
		}{
			r_x^2 +
			3 \left(\theta+\frac\pi2\right)^2 +
			O\left(|r_x|^3+\left|\theta+\frac\pi2\right|^3\right)
		}.
	$$
	Consequently, on $S=0$, the control $u_b$ has a limit at $N$ equal to zero. However, if we define $u_b(N)=0$, the resulting function will not be smooth at the point $N$, since the ratio $(r_x^2 + (\theta+\pi/2)^2)/(r_x^2 + 3(\theta+\pi/2)^2)$ is bounded in the neighborhood of the point $(r_x=0,\theta=-\pi/2)$, but does not have a limit there. Nevertheless, the function $u_b|_{S=0}$ is Lipschitz in the neighborhood of the point $N$. Therefore, upon substituting $u=u_b$ into system~\eqref{eq: main QCS appendix} on the surface $S=0$, we obtain an ODE system with a Lipschitz right-hand side. Thus, the conditions of the Picard existence and uniqueness theorem hold for it.
	
	On the surface $S=0$, the point $N$ is an equilibrium point of the ODE system~\eqref{eq: main QCS appendix} with $u=u_b$. However, this system is not smooth at $N$. Nevertheless, below we will show that the behavior of solutions of~\eqref{eq: main QCS appendix} with $u=u_b$ on $S=0$ in the neighborhood of the fixed point $N$ is completely analogous to that of a hyperbolic equilibrium point on the plane.
	
	On the surface $S=0$, system~\eqref{eq: main QCS appendix} with $u=u_b$ takes the form
	$$
	\begin{cases}
		\frac1\omega\dot r_x = -R \cos \theta-\alpha r_x/2\\
		\frac1\omega\dot\theta = \frac14(1+R \sin \theta (1+\cos ^2\theta)) (4 \alpha  \cos\theta+\alpha R \sin 2\theta +4 r_x\sin\theta )/[R (1+R \sin ^3\theta)].
	\end{cases}
	$$
	Substituting the Taylor expansion for $R$ on $S=0$, we obtain
	$$
	\begin{cases}
		\frac1\omega\dot r_x=-\frac12\alpha r_x - \left(\theta+\frac\pi2\right)+O\left(|r_x|^3+\left|\theta+\frac\pi2\right|^3\right)\\
		\frac1\omega\dot\theta=-\dfrac{
			\left(r_x-\frac12\alpha\left(\theta+\frac\pi2\right)\right)\left(r_x^2 - \left(\theta+\frac\pi2\right)^2\right)
		}{
			r_x^2+3\left(\theta+\frac\pi2\right)^2
		}
		+ O(|r_x|^3+\left|\theta+\frac\pi2\right|^3).
	\end{cases}
	$$
	
	To resolve the singularity at the point $(r_x=0,\theta=-\frac\pi2)$, we perform a standard blow-up procedure and glue in a circle instead of the point $N$. To do this, we set $r_x=\eta\cos\varphi$, $\theta+\frac\pi2=\eta\sin\varphi$. As a result, we obtain the system
	\begin{equation}
		\label{eq: blow up N}
		\begin{cases}
			\frac1\omega\dot\eta = -\dfrac{\frac\alpha2+\sin2\varphi}{2-\cos2\varphi}\eta + O(\eta^2)\\[0.5cm]
			\frac1\omega\dot\varphi = \dfrac{1-2 \cos2\varphi+\frac\alpha2\sin2\varphi}{2-\cos2\varphi} + O(\eta^2).
		\end{cases}
	\end{equation}
	Thus, the circle $\eta=0$ is an invariant manifold of this system. In the punctured neighborhood of this circle, there are no fixed points, since the point $N$ was an isolated singular point. To find the fixed points on the circle $\eta=0$, it is sufficient to solve the equation $1-2 \cos2\varphi+\frac\alpha2\sin2\varphi=0$ or equivalently $3\tan^2\varphi + \alpha\tan\varphi -1=0$. From this, we find four fixed points:
	\begin{equation}
		\label{eq: varphi on separatrix}
		\varphi = -\arctan\frac{\alpha\pm\sqrt{12+\alpha^2}}{6}+\pi k = \pm\frac\pi6 - \frac18\alpha + O(\alpha^2)+\pi k,\quad k=0,1.
	\end{equation}
	
	Let us now study the type of these singular points:
	$$
		\frac1\omega
		\begin{pmatrix}
			\partial\dot\eta/\partial\eta & \partial\dot\eta/\partial\varphi\\
			\partial\dot\varphi/\partial\eta & \partial\dot\varphi/\partial\varphi\\
		\end{pmatrix}\Big|_{\eta=0} = 
		\begin{pmatrix}
			-\dfrac{\sin2\varphi+\frac\alpha2}{2-\cos2\varphi} & 0 \\
			0 & \dfrac{6 \sin2\varphi + \alpha (2 \cos2\varphi-1)}{(2-\cos2 \varphi)^2}
		\end{pmatrix}.
	$$
	Therefore, for $\varphi$ from~\eqref{eq: varphi on separatrix}, we obtain
	$$
		\frac1\omega
		\begin{pmatrix}
			\partial\dot\eta/\partial\eta & \partial\dot\eta/\partial\varphi\\
			\partial\dot\varphi/\partial\eta & \partial\dot\varphi/\partial\varphi\\
		\end{pmatrix}\Big|_{\eta=0} = 
		\dfrac13\begin{pmatrix}
			-\alpha\pm\frac12\sqrt{12+\alpha ^2}&0\\
			0 &\alpha\mp 2\sqrt{12+\alpha ^2}
		\end{pmatrix}.
	$$
	
	Consequently, for $0<\alpha<2$, all four points turn out to be hyperbolic. Each of them has a stable and an unstable manifold. We are interested in the points whose unstable manifolds are transversal to the circle $\eta=0$ --- these are two points with $\varphi_{0,1}=-\arctan\frac{\alpha+\sqrt{12+\alpha^2}}{6}+\pi k$, $k=0,1$, near the points $\varphi=-\frac\pi6$ and $\varphi=\pi-\frac\pi6$. Note that the system is analytic and does not change under the reflection $(\eta,\varphi)\mapsto(-\eta,\varphi+\pi)$. Therefore, these two unstable manifolds transform into each other under such a reflection. Since on each of them it holds that $\varphi=\varphi_{0,1}+O(\eta^2)$, we obtain 
	$$
		r_x = \eta\cos\varphi_{0,1}+O(\eta^3);
		\quad
		\theta+\frac\pi2 = \eta\sin\varphi_{0,1}+O(\eta^3)
	$$
	Therefore,
	$$
		\eta = \frac1{\cos\varphi_{0,1}}r_x + O(r_x^3)
		\quad\text{and}\quad
		\theta+\frac\pi2=r_x\tan\varphi_{0,1}+O(r_x^3).
	$$
	This is precisely the Taylor expansion~\eqref{eq: theta Taylor appendix}. Formula~\eqref{eq: R Taylor appendix} is obtained by substituting this expansion for $\theta$ into~\eqref{eq:R taylor long}.
	
	Similar calculations can be done for the stable separatrices of $N$, but the expansion for $R$ will have the form:
	$$
		R=1-\frac12r_x^2 - \frac{1}{432} \left(54+18 \alpha ^2+\alpha ^4-12  \alpha\sqrt{\alpha ^2+12} - \alpha ^3\sqrt{\alpha ^2+12}\right) r_x^4 + O(r_x^6).
	$$
	Since the coefficient of $r_x^4$ here is greater than $-\frac18$, on the stable separatrices in the neighborhood of $N$, it holds that $r_x^2+R^2=r_x^2+r_y^2+r_z^2>1$, and they lie outside the Bloch ball $\BB$.
\end{proof}

\begin{figure} 
	\centering
	\includegraphics[width=0.5\linewidth]{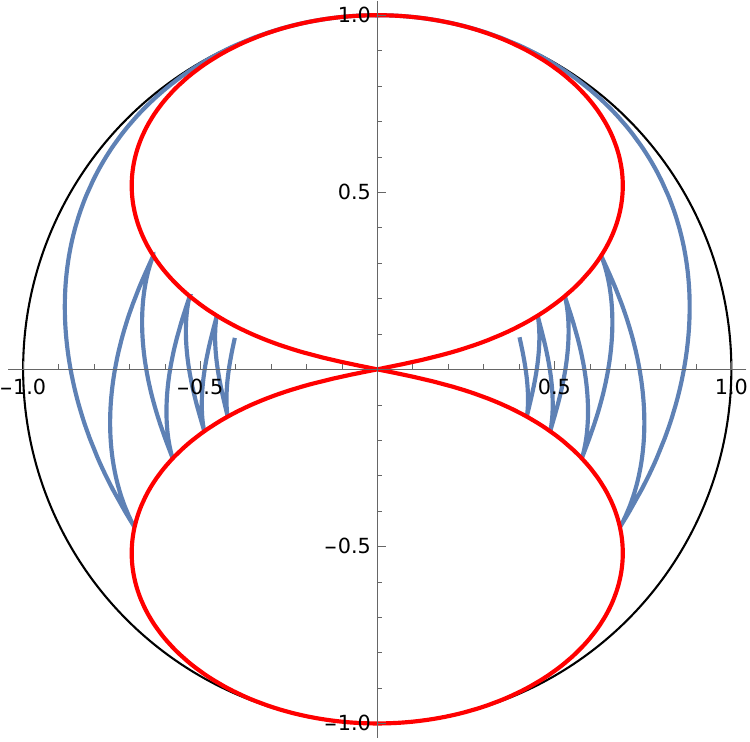}
	\caption{\label{fig: full separatrix}Projection of the separatrices on the $(r_x,R)$ plane.}
\end{figure}

The separatrices found in Lemma~\ref{lm: separatrix} lie on the surface $\{S=0\}$. When projected onto the $(r_x,R)$ plane, the image of this surface never enters the interior of the figure-eight curve $E(\varphi)$, and the projection of the separatrix has the form shown in Fig.~\ref{fig: full separatrix}.

In order to show that the unstable separatrices obtained above emanating from the point $N$ on the surface $\{S=0\}$ indeed define the boundary of the set $\bar{\mathcal{R}}$, it will be convenient for us to prove that both of these separatrices intersect the surface $\{R=0\}$ (see Fig.~\ref{fig: 2d separatrix}). To do this, let us track the sign change of $\dot R$.

\begin{lemma}
	\label{lm: sign dot R}
	If $u=u_b$, then on the surface $\{S=0\}$, the equality $\dot R=0$ for $|R|\le 1$ holds only at the points of the figure-eight curve $E(\varphi)$ from Theorem~\ref{thm: local controllability} when $\theta=\varphi$.
\end{lemma}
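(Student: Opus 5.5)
The key observation is that the control $u$ does not enter the equation for $\dot R$ in~\eqref{eq: QCS in cylindrical coordinates}. So the choice $u=u_b$ only matters through the constraint $S=0$, and the lemma reduces to a statement about the pair of equations $\{\dot R=0,\ S=0\}$ in the variables $(r_x,R,\theta)$. I would work with $R\neq0$ and use the interpretation of~\eqref{eq: theta main equation} from the proof of Lemma~\ref{lm: boundary obey ODE}. Writing $V(r_x,R,\theta)=(\dot r_x,\dot R)/\omega$, the equation $S/R=0$ is exactly $\det[V,\partial_\theta V]=0$.

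If $\dot R=0$, then $V=(\dot r_x/\omega,0)$ and
\[
\det[V,\partial_\theta V]=\frac{\dot r_x}{\omega}\cdot\frac{1}{\omega}\frac{\partial\dot R}{\partial\theta}.
\]
So on $\{S=0,\ \dot R=0\}$ there are two cases. In the first, $\dot r_x=0$ as well, so $V=0$. By the derivation of $E$ in the proof of Theorem~\ref{thm: local controllability}, this means precisely $(r_x,R)=E(\varphi)$ with $\varphi=\theta$, which is the asserted conclusion. In the second case I must exclude the simultaneous vanishing of
\[
\cos\theta\, r_x-\tfrac12\alpha(1+\sin^2\theta)R-\alpha\sin\theta=0,
\qquad
-\sin\theta\, r_x-\tfrac12\alpha R\sin2\theta-\alpha\cos\theta=0
\]
with $|R|\le1$, apart from points already on $E$.

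These are linear in $(r_x,R)$ for fixed $\theta$. The determinant is $-\tfrac12\alpha\sin\theta\,(3-\sin^2\theta)$. If $\sin\theta=0$, the second equation reduces to $\alpha\cos\theta=0$, which is impossible, so that case contributes nothing. Otherwise Cramer's rule gives
\[
R=-\frac{2}{\sin\theta\,(3-\sin^2\theta)}.
\]
Since $|s(3-s^2)|\le2$ for $|s|\le1$, with equality only at $|s|=1$, we get $|R|\ge1$. Equality holds only when $\sin\theta=\pm1$, i.e. $R=\mp1$, and then the second equation forces $r_x=0$. These are the poles $(0,\pm1)$, and they lie on $E$: they are $E(\mp\pi/2)$, with $\theta=\varphi$, where indeed $V=0$. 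Hence within $|R|\le1$ the second case yields nothing new, and the lemma follows.

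The main point needing care is the degenerate set where the identification of $S=0$ with $\det[V,\partial_\theta V]=0$ breaks down. On the axis $R=0$, $S$ vanishes identically and $\theta$ is meaningless. I would treat it separately: $u_b$ is singular there (Lemma~\ref{lm: ub denominator zeros}), and the separatrix is only followed until it reaches $R=0$, so the claim is meant for $R\neq0$. I would also verify the factorisation $S=R\cdot(\ldots)$ and the computation of $\partial_\theta\dot R$ against~\eqref{eq: QCS in cylindrical coordinates}; these are the only places where a sign slip could matter. The reflection $(R,\theta)\mapsto(-R,\theta+\pi)$ lets me check the negative-$R$ branch consistently.
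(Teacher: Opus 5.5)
Your proof is correct. It reaches the same dichotomy as the paper, but by a more structural route.

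The paper works directly with the pair $\dot R=0$ and \eqref{eq: theta main equation}. For $\cos\theta\neq0$ it solves $\dot R=0$ for $r_x$, substitutes, and factors the result as $\frac{1}{\cos^2\theta}\bigl((9\sin\theta+\sin3\theta)R+8\bigr)\bigl(2\alpha^2\sin\theta+(2\alpha^2+(4-\alpha^2)\cos^2\theta)R\bigr)$. The case $\cos\theta=0$ (the poles) is treated separately.

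You instead use the fact that the left-hand side of \eqref{eq: theta main equation} equals $\frac{2}{\alpha}\det[V,\partial_\theta V]$ identically. Expanding confirms this is exact, with no division by a possibly vanishing factor, so the step you flagged for checking is sound. On $\{\dot R=0\}$ the determinant then factors at once as $\dot r_x\,\partial_\theta\dot R/\omega^2$.

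Your two cases are exactly the paper's two brackets:
\begin{itemize}
\item $\dot r_x=0$ is the equilibrium condition $V=0$, which gives the second bracket, i.e.\ the curve $E$.
\item Since $9\sin\theta+\sin3\theta=4\sin\theta(3-\sin^2\theta)$, the first bracket is precisely your Cramer solution $R=-2/(\sin\theta(3-\sin^2\theta))$ of $\dot R=\partial_\theta\dot R=0$.
\end{itemize}

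Your route explains why the factorization exists: a nonzero horizontal velocity can satisfy \eqref{eq: theta main equation} only if $\partial_\theta V$ is horizontal too. It also avoids the elimination and the $\cos\theta\neq0$ split. It makes explicit the bound $|R|\ge1$, which the paper's remark ``the first bracket can vanish only if $\cos\theta=0$'' uses tacitly. The paper's computation, in turn, is self-contained and does not rely on the determinant interpretation from Lemma~\ref{lm: boundary obey ODE}.

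Your caution about the axis is justified, and is in fact sharper than the paper. With $S=R\cdot(\dots)$, the whole set $\{R=0\}$ lies in $\{S=0\}$. There $\dot R=\omega(r_x\cos\theta-\alpha\sin\theta)$ vanishes at points not on $E$. So the lemma must be read for \eqref{eq: theta main equation}, as both proofs effectively do. Because your identity is polynomial, your argument then covers $R=0$ as well, yielding only the origin with $\theta\in\{0,\pi\}$, which is $E(0)=E(\pi)$.
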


\begin{proof}
	
	The equations $\dot R=0$ and $S=0$ have the form
	$$
	\begin{cases}
		r_x\cos\theta-\frac12 \alpha R (1+\sin^2\theta)-\alpha\sin\theta=0,\\
		(1+R\sin\theta)(2R+\alpha r_x \cos\theta) + (R^2 \cos^2\theta+r_x^2)\sin\theta = 0.\\
	\end{cases}
	$$
	Let us first consider the case when $\cos\theta=0$. Then $\theta=\pm\frac\pi2$. In this case, from the first equation, we find $R=\mp 1$. Thus, the second equation gives $r_x=0$. Therefore, these points correspond to the north and south poles and lie on the figure-eight curve.
	
	Consider the case $\cos\theta\ne 0$. The first equation is linear in $r_x$. Expressing $r_x$ from the first equation and substituting it into the second, we get
	$$
		\frac{1}{\cos^2\theta}  ((9\sin\theta+ \sin3\theta)R+8)(2\alpha^2 \sin\theta+(2\alpha^2+(4-\alpha^2)\cos^2\theta)R) = 0.
	$$
	Since $|9\sin\theta+ \sin3\theta|\le 8$, the first bracket can vanish only if $\cos\theta=0$, which contradicts our assumption. Hence, the second bracket must be zero, and we obtain
	$$
		R = \frac{-2\alpha^2\sin\theta}{2\alpha^2+(4-\alpha^2) \cos^2\theta}.
	$$
	Substituting this expression for $R$ into the equation $\dot R=0$, we get
	$$
		r_x=\frac{4\alpha \sin\theta\cos\theta}{2 \alpha^2+(4-\alpha ^2) \cos^2\theta}.
	$$
	The obtained formulas coincide with the definition of the figure-eight curve $E(\varphi)$ from Theorem~\ref{thm: local controllability} when $\theta=\varphi$.
\end{proof}

Now we are ready to prove Theorem~\ref{thm: reachable set boundary}.

\begin{proof}[Proof of Theorem~\ref{thm: reachable set boundary}]
	
	Denote the two unstable separatrices obtained in Lemma~\ref{lm: separatrix} by $s_l$ and $s_r$ with $s_{l,r}\subset \{S=0\}$. For definiteness, let $r_x>0$ on $s_r$ and $r_x<0$ on $s_l$ in the neighborhood of $N$. Obviously, $s_l$ and $s_r$ transform into each other under the reflection $(r_x,\theta)\mapsto(-r_x,\pi-\theta)$.
	
	\begin{figure}
		\centering
		\includegraphics[width=0.5\linewidth]{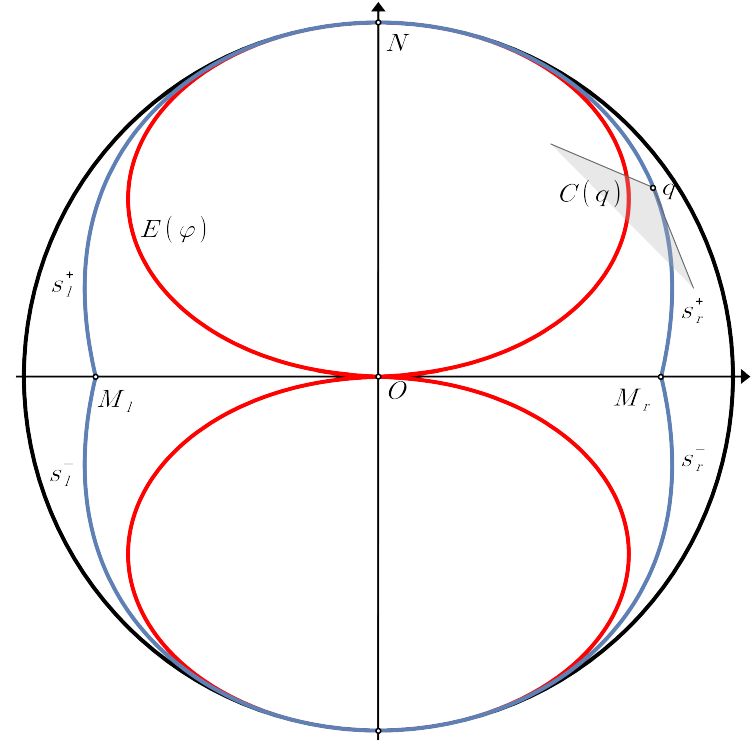}
		\caption{	\label{fig: final structure of boundary}Structure of $\partial\bar{\mathcal{R}}$}
	\end{figure}
	
	First, let us show that the projections of $s_l$ and $s_r$ onto the $(r_x,R)$ plane (obtained by dropping the third coordinate $\theta$) lie in the reachable set $\bar{\mathcal{R}}$. In a sufficiently small neighborhood $V$ of the point $N$ on $\{S=0\}$, the dynamics of the system~\eqref{eq: QCS in cylindrical coordinates} and \eqref{eq: QCS in cylindrical coordinates theta} with $u=u_b$ is analogous to linear hyperbolic dynamics: if a point from $V$ does not lie on the stable separatrix, then as long as it does not leave $V$, as time increases, it approaches one of the two unstable separatrices $s_l$ or $s_r$ (this follows from the blow-up resolution~\eqref{eq: blow up N}). By Lemma~\ref{lemma: boundary is outside the eight}, in an arbitrarily small neighborhood of $N$, there exists an open set of points in\footnote{Recall that $E_+$ is the exterior of the figure-eight curve $E$.} $E_+$ that lie in the reachable set $\bar{\mathcal{R}}$. Therefore, on the surface $\{S=0\}$ in any neighborhood of $N$, there exists an open set of points whose projections lie in $\bar{\mathcal{R}}$. Note that if the projection of a point from $\{S=0\}$ lies in $\bar{\mathcal{R}}$, then its future trajectory under the flow of~\eqref{eq: QCS in cylindrical coordinates}, \eqref{eq: QCS in cylindrical coordinates theta} with $u=u_b$ also projects into $\bar{\mathcal R}$ by the definition of the reachable set $\bar{\mathcal R}$. Thus, under the action of the flow~\eqref{eq: QCS in cylindrical coordinates}, \eqref{eq: QCS in cylindrical coordinates theta} with $u=u_b$, the projections of the images of these points always lie in $\bar{\mathcal R}$, and at the moment they leave $V$, they are arbitrarily close to the points $s_l\cap \partial V$ and $s_r\cap\partial V$. Thus, the projections of the points $s_l\cap \partial V$ and $s_r\cap\partial V$ lie in $\bar{\mathcal{R}}$. Moreover, the same reasoning can be applied with any smaller neighborhood $V_1$ of the point $N$ instead of $V$. That is, the projections of the points $s_l\cap \partial V_1$ and $s_r\cap\partial V_1$ also lie in $\bar{\mathcal R}$. Since the neighborhood $V_1$ can be chosen arbitrarily small, we find that on each of the separatrices $s_l$ and $s_r$, there are points arbitrarily close to $N$ whose projections lie in $\bar{\mathcal R}$. We then again use the fact that the flow preserves this property, meaning the unstable separatrices $s_l$ and $s_r$ project entirely into $\bar{\mathcal{R}}$.
	
	\begin{figure}
		\centering
		\includegraphics[width=0.4\linewidth]{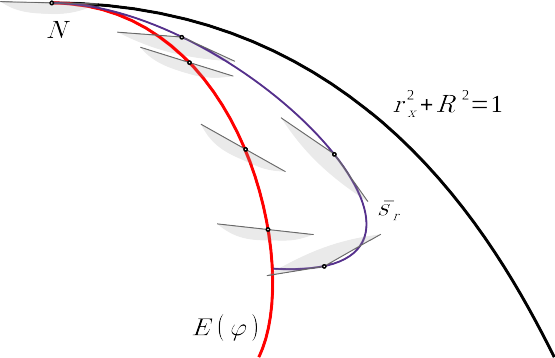}
		\caption{\label{fig: intersection E sr}Structure of cones $C(q)$ along $E(\varphi)$ and $s^+_r$ in the case of their impossible intersection.}
	\end{figure}
	
	Let us now investigate the behavior of these separatrices in the region $\{R\ge0\}$. Denote by $\bar s_r$ the connected component of the separatrix $s_r$ that lies in the region $\{R\ge 0\}$ (and tends to $N$ as $t\to-\infty$). Let $\epsilon$ denote the part of the figure-eight curve $E(\varphi)$ that lies in the same quadrant $\{r_x\ge0,R\ge0\}$. We claim that the curve $\epsilon$ has a neighborhood which the projection of the curve $\bar s_r$ never enters. First let us show that $\bar s_r$ does not contain the origin. Indeed, by the construction of the surface $\{S=0\}$, its projection onto the $(r_x, R)$-plane lies outside the interior of the set bounded by the figure-eight curve $E(\varphi)$. Therefore, if the projection $\bar s_r$ reaches the origin, its velocity at this point cannot be vertical by the structure of the curve $E(\varphi)$ found in Theorem~\ref{thm: local controllability} (see Fig.~\ref{fig: 8 curve}). This is impossible, since all admissible velocities in system~\eqref{eq: QCS in cylindrical coordinates} are vertical at the origin. Now let us show that the projection of $\bar s_r$ cannot intersect the figure-eight curve points of $\epsilon\setminus\{N\}$ for topological reasons. Indeed, at the points of the curve $E(\varphi)$ (except for the origin), the cone $C(q)$ is a half-plane, and the line defining this half-plane is tangent to the curve $E(\varphi)$ itself only at the points $(0,\pm 1)$. Moreover, at the origin, there is a singularity: the cone $C(q)$ degenerates into a line. At the points of $\bar s_r$, one of the generators of $C(q)$ is tangent to $\bar s_r$ (and points in the direction of the forward time flow). Therefore, if an intersection of $E(\varphi)$ and $\bar s_r$ existed in $\{R>0\}$, the generators of the cone would have to swap places (see Fig.~\ref{fig: intersection E sr}). Note that the projection of the entire separatrix $s_r$ might intersect the figure-eight curve $E(\varphi)$, but the first intersection must be in the region $\{R<0\}$ (see Fig.~\ref{fig: full separatrix}). This is because at the point $(0,0)=E(0)=E(\pi)$, the cone $C(q)$ degenerates into a line.
	
	Let us now show that both separatrices intersect $\{R=0\}$. The separatrices $s_l$ and $s_r$ are symmetric, so it is sufficient to give the argument for only one of them. Indeed, by Lemma~\ref{lm: sign dot R}, the sign of $\dot R$ is constant on $\bar s_r$. The case $\dot R>0$ is impossible, because in this case, starting from the point $N$, the separatrix $s_r$ would leave the Bloch ball $\BB$, contradicting the fact that it lies in the reachable set. Let $V$ be a small neighborhood of $N$. Then, due to the continuity of $\dot R$, we obtain that in the compact region $\{|r_x|\le 1, 0\le R\le 1, \theta\in[0;2\pi]\}\setminus V$ outside an open neighborhood of $\epsilon$, the derivative $\dot R$ on $s_r$ is bounded away from zero, and thus, after leaving $V$, the separatrix $s_r$ will intersect $\{R=0\}$ in finite time.
	
	Let $M_l$ and $M_r$ denote the projections of the first intersection points of $s_l$ and $s_r$ with $\{R=0\}$, respectively, where $M_l\in\{R=0,-1<r_x<0\}$, $M_r\in\{R=0,0<r_x<1\}$ (see Fig.~\ref{fig: final structure of boundary}). Further, let $s^+_l$ and $s^+_r$ be the projections onto the $(r_x,R)$ plane of the (open) connected segments of the separatrices $s_l$ and $s_r$ between $N$ and their respective first intersection points with $\{R=0\}$, so that $s^+_{l,r}\subset \{r_x^2+R^2\le 1, R>0\}$. Finally, denote by $s_l^-$ and $s_r^-$ the reflections of $s^+_l$ and $s^+_r$ with respect to the $r_x$-axis, i.e., $s^-_{l,r}\subset \{r_x^2+R^2\le 1, R<0\}$.
	
	We claim that (see Fig.~\ref{fig: final structure of boundary})
	\begin{equation}
	\label{eq: exact formula for boundary}
		\partial \bar{\mathcal R} = s_l^+\cup s_l^-\cup s_r^+ \cup s_r^- \cup \{N,-N\}\cup \{M_l,M_r\}.
	\end{equation}
	To prove this equality, we denote the closed curve on the right-hand side of~\eqref{eq: exact formula for boundary} by $\gamma$ and will show that $\partial\bar{\mathcal{R}}=\gamma$. We have shown that $\gamma\subset\bar{\mathcal R}$. Let us now show that (i) all points inside $\gamma$ lie in $\bar{\mathcal R}$ and (ii) no admissible curve starting from the point $N$ can exit $\gamma$.
	
	To prove (i), observe that if we substitute $\theta=0$ into~\eqref{eq: QCS in cylindrical coordinates}, we get an ODE system of the following form:
	$$
	\dfrac{1}{\omega}
	\begin{pmatrix}
		\dot r_x\\
		\dot R 
	\end{pmatrix}
	=
	\begin{pmatrix}
		-\frac12\alpha & - 1\\
		1  & -\frac12\alpha\\
	\end{pmatrix}
	\begin{pmatrix}
		r_x\\ 
		R
	\end{pmatrix}.
	$$
	The solutions of this system are logarithmic spirals, exponentially tending towards the origin as $t\to+\infty$ and escaping to infinity as $t\to-\infty$. For an arbitrary point $q=(r_x,R)\ne(0,0)$, there exists a unique spiral passing through this point. If the point $q$ lies inside $\gamma$, then in backward time, this spiral will intersect $\gamma$ at some point $q'$, since $\gamma$ is a closed bounded curve, and the spiral tends to infinity in backward time. Consequently, $q'\in\gamma\subset\bar{\mathcal R}$. The point $q$ is reachable from $q'$ using the control $\theta=0$, and, therefore, $q\in\bar{\mathcal R}$. The origin lies in $\bar{\mathcal{R}}$ because a punctured neighborhood of the origin lies in $\bar{\mathcal R}$, and the set $\bar{\mathcal R}$ is closed.
	
	To prove (ii), we show that no admissible curve can leave the region bounded by $\gamma$. Let us check how the cone of admissible velocities $C(q)=\R_+U(q)$ is positioned at each point (say, at a point $q\in s_r^+$, see Fig.~\ref{fig: final structure of boundary}). By the construction of the control $u_b$, one of the generators of the cone $C(q)$ is tangent to $s_r^+$ at the point $q$. Therefore, the cone $C(q)$ either points entirely  into the interior of the curve $\gamma$ or entirely outward, because $s_r^+$ lies in $\Int E_+$, and in $\Int E_+$, the generators of the cone $C(q)$ are never parallel (for $q\in\Int E_+$, the cone $C(q)$ is always strictly smaller than a half-space and has a non-empty interior). The cone $C(q)$ cannot point outward, because for $\theta=0$ or $\theta=\pi$, the solutions of~\eqref{eq: QCS in cylindrical coordinates} are logarithmic spirals tending towards the origin. Thus, at any point $q\in s_{l,r}^\pm$, one generator of the cone of admissible velocities $C(q)$ is tangent to $s_{l,r}^\pm$, and the other is directed inside the curve $\gamma$. By continuity, at the points $M_{l,r}$, both generators of $C(M_{l,r})$ are directed inward. Therefore, no admissible curve from a point on $\gamma$ or inside it can cross $\gamma$ outward. Thus, no point outside the region bounded by $\gamma$ belongs to $\bar{\mathcal R}$, and, consequently, equality~\eqref{eq: exact formula for boundary} holds.
\end{proof}

\begin{proof}[Proof of Theorem~\ref{thm: main}]
	
	Theorem~\ref{thm: main} can be easily derived from Theorem~\ref{thm: reachable set boundary} and the lemmata used in its proof. Indeed, the formula for the control in Theorem~\ref{thm: main} exactly coincides with the formula in Lemma~\ref{lm: ub}. The curve $\Gamma$ in the statement of Theorem~\ref{thm: main} is the union of the separatrices $s_l$ and $s_r$ and the equilibrium $N$, but written in the original coordinates $(r_x, r_y, r_z)$ in the Bloch ball. Its connected component in the half-space $r_z \le 0$ is a union of the initial connected parts of separatrices $s_{l,r}$ in the domain $R \ge 0$ and the point $N$. Hence, the surface of revolution obtained by rotating $\Gamma$ about the $r_x$-axis coincides with the one obtained by rotations of $s_l^+ \cup s_r^+ \cup\{N, M_l, M_r\}$ about the $r_x$-axis, if we consider the $r_xR$-plane as the $r_xr_z$-plane via $(r_x, r_y, r_z) = (r_x, 0, R)$.
\end{proof}

\putbib[apssamp,main_prlNotes]
\end{bibunit}

\providecommand{\noopsort}[1]{}\providecommand{\singleletter}[1]{#1}%
\begin{thebibliography}{59}%
\makeatletter
\providecommand \@ifxundefined [1]{%
 \@ifx{#1\undefined}
}%
\providecommand \@ifnum [1]{%
 \ifnum #1\expandafter \@firstoftwo
 \else \expandafter \@secondoftwo
 \fi
}%
\providecommand \@ifx [1]{%
 \ifx #1\expandafter \@firstoftwo
 \else \expandafter \@secondoftwo
 \fi
}%
\providecommand \natexlab [1]{#1}%
\providecommand \enquote  [1]{``#1''}%
\providecommand \bibnamefont  [1]{#1}%
\providecommand \bibfnamefont [1]{#1}%
\providecommand \citenamefont [1]{#1}%
\providecommand \href@noop [0]{\@secondoftwo}%
\providecommand \href [0]{\begingroup \@sanitize@url \@href}%
\providecommand \@href[1]{\@@startlink{#1}\@@href}%
\providecommand \@@href[1]{\endgroup#1\@@endlink}%
\providecommand \@sanitize@url [0]{\catcode `\\12\catcode `\$12\catcode
  `\&12\catcode `\#12\catcode `\^12\catcode `\_12\catcode `\%12\relax}%
\providecommand \@@startlink[1]{}%
\providecommand \@@endlink[0]{}%
\providecommand \url  [0]{\begingroup\@sanitize@url \@url }%
\providecommand \@url [1]{\endgroup\@href {#1}{\urlprefix }}%
\providecommand \urlprefix  [0]{URL }%
\providecommand \Eprint [0]{\href }%
\providecommand \doibase [0]{https://doi.org/}%
\providecommand \selectlanguage [0]{\@gobble}%
\providecommand \bibinfo  [0]{\@secondoftwo}%
\providecommand \bibfield  [0]{\@secondoftwo}%
\providecommand \translation [1]{[#1]}%
\providecommand \BibitemOpen [0]{}%
\providecommand \bibitemStop [0]{}%
\providecommand \bibitemNoStop [0]{.\EOS\space}%
\providecommand \EOS [0]{\spacefactor3000\relax}%
\providecommand \BibitemShut  [1]{\csname bibitem#1\endcsname}%
\let\auto@bib@innerbib\@empty
\bibitem [{\citenamefont {Schleich}\ \emph {et~al.}(2016)\citenamefont
  {Schleich} \emph {et~al.}}]{Schleich_etal_2016}%
  \BibitemOpen
  \bibfield  {author} {\bibinfo {author} {\bibfnamefont {W.~P.}\ \bibnamefont
  {Schleich}} \emph {et~al.},\ }\href
  {https://doi.org/10.1007/s00340-016-6353-8} {\bibfield  {journal} {\bibinfo
  {journal} {Applied Physics B}\ }\textbf {\bibinfo {volume} {122}},\ \bibinfo
  {pages} {130} (\bibinfo {year} {2016})}\BibitemShut {NoStop}%
\bibitem [{\citenamefont {Ac{\'i}n}\ \emph {et~al.}(2018)\citenamefont
  {Ac{\'i}n} \emph {et~al.}}]{Acin_etal_2018}%
  \BibitemOpen
  \bibfield  {author} {\bibinfo {author} {\bibfnamefont {A.}~\bibnamefont
  {Ac{\'i}n}} \emph {et~al.},\ }\href
  {https://doi.org/10.1088/1367-2630/aad1ea} {\bibfield  {journal} {\bibinfo
  {journal} {New Journal of Physics}\ }\textbf {\bibinfo {volume} {20}},\
  \bibinfo {pages} {080201} (\bibinfo {year} {2018})}\BibitemShut {NoStop}%
\bibitem [{\citenamefont {Koch}\ \emph {et~al.}(2022)\citenamefont {Koch} \emph
  {et~al.}}]{Koch_etal_2022}%
  \BibitemOpen
  \bibfield  {author} {\bibinfo {author} {\bibfnamefont {C.~P.}\ \bibnamefont
  {Koch}} \emph {et~al.},\ }\href
  {https://doi.org/10.1140/epjqt/s40507-022-00138-x} {\bibfield  {journal}
  {\bibinfo  {journal} {EPJ Quantum Technology}\ }\textbf {\bibinfo {volume}
  {9}},\ \bibinfo {pages} {19} (\bibinfo {year} {2022})}\BibitemShut {NoStop}%
\bibitem [{\citenamefont {Brif}\ \emph {et~al.}(2010)\citenamefont {Brif},
  \citenamefont {Chakrabarti},\ and\ \citenamefont
  {Rabitz}}]{Brif_Chakrabarti_Rabitz_2010}%
  \BibitemOpen
  \bibfield  {author} {\bibinfo {author} {\bibfnamefont {C.}~\bibnamefont
  {Brif}}, \bibinfo {author} {\bibfnamefont {R.}~\bibnamefont {Chakrabarti}},\
  and\ \bibinfo {author} {\bibfnamefont {H.}~\bibnamefont {Rabitz}},\ }\href
  {https://doi.org/10.1088/1367-2630/12/7/075008} {\bibfield  {journal}
  {\bibinfo  {journal} {New Journal of Physics}\ }\textbf {\bibinfo {volume}
  {12}},\ \bibinfo {pages} {075008} (\bibinfo {year} {2010})}\BibitemShut
  {NoStop}%
\bibitem [{\citenamefont {Shapiro}\ and\ \citenamefont
  {Brumer}(2012)}]{ShapiroBrumerBook2012}%
  \BibitemOpen
  \bibfield  {author} {\bibinfo {author} {\bibfnamefont {M.}~\bibnamefont
  {Shapiro}}\ and\ \bibinfo {author} {\bibfnamefont {P.}~\bibnamefont
  {Brumer}},\ }\href@noop {} {\emph {\bibinfo {title} {{Quantum Control of
  Molecular Processes}}}},\ \bibinfo {edition} {2nd}\ ed.\ (\bibinfo
  {publisher} {Wiley-VCH},\ \bibinfo {year} {2012})\BibitemShut {NoStop}%
\bibitem [{\citenamefont {Koch}(2016)}]{Koch_2016}%
  \BibitemOpen
  \bibfield  {author} {\bibinfo {author} {\bibfnamefont {C.~P.}\ \bibnamefont
  {Koch}},\ }\href {https://doi.org/10.1088/0953-8984/28/21/213001} {\bibfield
  {journal} {\bibinfo  {journal} {Journal of Physics: Condensed Matter}\
  }\textbf {\bibinfo {volume} {28}},\ \bibinfo {pages} {213001} (\bibinfo
  {year} {2016})}\BibitemShut {NoStop}%
\bibitem [{\citenamefont {Cong}(2018)}]{Cong_2018}%
  \BibitemOpen
  \bibfield  {author} {\bibinfo {author} {\bibfnamefont {S.}~\bibnamefont
  {Cong}},\ }\href@noop {} {\emph {\bibinfo {title} {Control of Quantum
  Systems. Theory and Methods}}}\ (\bibinfo  {publisher} {John Wiley and Sons
  Limited},\ \bibinfo {year} {2018})\BibitemShut {NoStop}%
\bibitem [{\citenamefont {D'Alessandro}(2021)}]{AlessandroBook2021}%
  \BibitemOpen
  \bibfield  {author} {\bibinfo {author} {\bibfnamefont {D.}~\bibnamefont
  {D'Alessandro}},\ }\href@noop {} {\emph {\bibinfo {title} {Introduction to
  Quantum Control and Dynamics}}},\ \bibinfo {edition} {2nd}\ ed.\ (\bibinfo
  {publisher} {Chapman and Hall/CRC},\ \bibinfo {address} {Boca Raton},\
  \bibinfo {year} {2021})\BibitemShut {NoStop}%
\bibitem [{\citenamefont {Kurizki}\ and\ \citenamefont
  {Kofman}(2021)}]{Kurizki_Kofman_2021}%
  \BibitemOpen
  \bibfield  {author} {\bibinfo {author} {\bibfnamefont {G.}~\bibnamefont
  {Kurizki}}\ and\ \bibinfo {author} {\bibfnamefont {A.~G.}\ \bibnamefont
  {Kofman}},\ }\href {https://doi.org/10.1017/9781316798454} {\emph {\bibinfo
  {title} {Thermodynamics and Control of Open Quantum Systems}}},\ \bibinfo
  {edition} {1st}\ ed.\ (\bibinfo  {publisher} {Cambridge University Press},\
  \bibinfo {year} {2021})\BibitemShut {NoStop}%
\bibitem [{\citenamefont {Kallush}\ \emph {et~al.}(2022)\citenamefont
  {Kallush}, \citenamefont {Dann},\ and\ \citenamefont
  {Kosloff}}]{KallushDannKosloff2022}%
  \BibitemOpen
  \bibfield  {author} {\bibinfo {author} {\bibfnamefont {S.}~\bibnamefont
  {Kallush}}, \bibinfo {author} {\bibfnamefont {R.}~\bibnamefont {Dann}},\ and\
  \bibinfo {author} {\bibfnamefont {R.}~\bibnamefont {Kosloff}},\ }\href
  {https://doi.org/10.1126/sciadv.add0828} {\bibfield  {journal} {\bibinfo
  {journal} {Science Advances}\ }\textbf {\bibinfo {volume} {8}},\ \bibinfo
  {pages} {eadd0828} (\bibinfo {year} {2022})}\BibitemShut {NoStop}%
\bibitem [{\citenamefont {Breuer}\ and\ \citenamefont
  {Petruccione}(2007)}]{BreuerPetruccioneBook2007}%
  \BibitemOpen
  \bibfield  {author} {\bibinfo {author} {\bibfnamefont {H.-P.}\ \bibnamefont
  {Breuer}}\ and\ \bibinfo {author} {\bibfnamefont {F.}~\bibnamefont
  {Petruccione}},\ }\href@noop {} {\emph {\bibinfo {title} {The Theory of Open
  Quantum Systems}}}\ (\bibinfo  {publisher} {Oxford University Press},\
  \bibinfo {address} {Oxford},\ \bibinfo {year} {2007})\BibitemShut {NoStop}%
\bibitem [{\citenamefont {Beige}\ \emph {et~al.}(2000)\citenamefont {Beige},
  \citenamefont {Braun}, \citenamefont {Tregenna},\ and\ \citenamefont
  {Knight}}]{Beige_Braun_Tregenna_Knight_2000}%
  \BibitemOpen
  \bibfield  {author} {\bibinfo {author} {\bibfnamefont {A.}~\bibnamefont
  {Beige}}, \bibinfo {author} {\bibfnamefont {D.}~\bibnamefont {Braun}},
  \bibinfo {author} {\bibfnamefont {B.}~\bibnamefont {Tregenna}},\ and\
  \bibinfo {author} {\bibfnamefont {P.~L.}\ \bibnamefont {Knight}},\ }\href
  {https://doi.org/10.1103/PhysRevLett.85.1762} {\bibfield  {journal} {\bibinfo
   {journal} {Physical Review Letters}\ }\textbf {\bibinfo {volume} {85}},\
  \bibinfo {pages} {1762} (\bibinfo {year} {2000})}\BibitemShut {NoStop}%
\bibitem [{\citenamefont {Diehl}\ \emph {et~al.}(2008)\citenamefont {Diehl},
  \citenamefont {Micheli}, \citenamefont {Kantian}, \citenamefont {Kraus},
  \citenamefont {B{\"u}chler},\ and\ \citenamefont {Zoller}}]{Diehl_etal_2008}%
  \BibitemOpen
  \bibfield  {author} {\bibinfo {author} {\bibfnamefont {S.}~\bibnamefont
  {Diehl}}, \bibinfo {author} {\bibfnamefont {A.}~\bibnamefont {Micheli}},
  \bibinfo {author} {\bibfnamefont {A.}~\bibnamefont {Kantian}}, \bibinfo
  {author} {\bibfnamefont {B.}~\bibnamefont {Kraus}}, \bibinfo {author}
  {\bibfnamefont {H.~P.}\ \bibnamefont {B{\"u}chler}},\ and\ \bibinfo {author}
  {\bibfnamefont {P.}~\bibnamefont {Zoller}},\ }\href
  {https://doi.org/10.1038/nphys1073} {\bibfield  {journal} {\bibinfo
  {journal} {Nature Physics}\ }\textbf {\bibinfo {volume} {4}},\ \bibinfo
  {pages} {878} (\bibinfo {year} {2008})}\BibitemShut {NoStop}%
\bibitem [{\citenamefont {Verstraete}\ \emph {et~al.}(2009)\citenamefont
  {Verstraete}, \citenamefont {Wolf},\ and\ \citenamefont
  {Cirac}}]{Verstraete_Wolf_Cirac_2009}%
  \BibitemOpen
  \bibfield  {author} {\bibinfo {author} {\bibfnamefont {F.}~\bibnamefont
  {Verstraete}}, \bibinfo {author} {\bibfnamefont {M.~M.}\ \bibnamefont
  {Wolf}},\ and\ \bibinfo {author} {\bibfnamefont {J.~I.}\ \bibnamefont
  {Cirac}},\ }\href {https://doi.org/10.1038/nphys1342} {\bibfield  {journal}
  {\bibinfo  {journal} {Nature Physics}\ }\textbf {\bibinfo {volume} {5}},\
  \bibinfo {pages} {633} (\bibinfo {year} {2009})}\BibitemShut {NoStop}%
\bibitem [{\citenamefont {Schmidt}\ \emph {et~al.}(2011)\citenamefont
  {Schmidt}, \citenamefont {Negretti}, \citenamefont {Ankerhold}, \citenamefont
  {Calarco},\ and\ \citenamefont
  {Stockburger}}]{Schmidt_Negretti_Ankerhold_Calarco_Stockburger_2011}%
  \BibitemOpen
  \bibfield  {author} {\bibinfo {author} {\bibfnamefont {R.}~\bibnamefont
  {Schmidt}}, \bibinfo {author} {\bibfnamefont {A.}~\bibnamefont {Negretti}},
  \bibinfo {author} {\bibfnamefont {J.}~\bibnamefont {Ankerhold}}, \bibinfo
  {author} {\bibfnamefont {T.}~\bibnamefont {Calarco}},\ and\ \bibinfo {author}
  {\bibfnamefont {J.~T.}\ \bibnamefont {Stockburger}},\ }\href
  {https://doi.org/10.1103/PhysRevLett.107.130404} {\bibfield  {journal}
  {\bibinfo  {journal} {Physical Review Letters}\ }\textbf {\bibinfo {volume}
  {107}},\ \bibinfo {pages} {130404} (\bibinfo {year} {2011})}\BibitemShut
  {NoStop}%
\bibitem [{\citenamefont {Harrington}\ \emph {et~al.}(2022)\citenamefont
  {Harrington}, \citenamefont {Mueller},\ and\ \citenamefont
  {Murch}}]{HarringtonMuellerMurch2022}%
  \BibitemOpen
  \bibfield  {author} {\bibinfo {author} {\bibfnamefont {P.~M.}\ \bibnamefont
  {Harrington}}, \bibinfo {author} {\bibfnamefont {E.~J.}\ \bibnamefont
  {Mueller}},\ and\ \bibinfo {author} {\bibfnamefont {K.~W.}\ \bibnamefont
  {Murch}},\ }\href {https://doi.org/10.1038/s42254-022-00494-8} {\bibfield
  {journal} {\bibinfo  {journal} {Nature Reviews Physics}\ }\textbf {\bibinfo
  {volume} {4}},\ \bibinfo {pages} {660} (\bibinfo {year} {2022})}\BibitemShut
  {NoStop}%
\bibitem [{\citenamefont {Cole}\ \emph {et~al.}(2022)\citenamefont {Cole},
  \citenamefont {Erickson}, \citenamefont {Zarantonello}, \citenamefont {Horn},
  \citenamefont {Hou}, \citenamefont {Wu}, \citenamefont {Slichter},
  \citenamefont {Reiter}, \citenamefont {Koch},\ and\ \citenamefont
  {Leibfried}}]{Coleetal2022}%
  \BibitemOpen
  \bibfield  {author} {\bibinfo {author} {\bibfnamefont {D.~C.}\ \bibnamefont
  {Cole}}, \bibinfo {author} {\bibfnamefont {S.~D.}\ \bibnamefont {Erickson}},
  \bibinfo {author} {\bibfnamefont {G.}~\bibnamefont {Zarantonello}}, \bibinfo
  {author} {\bibfnamefont {K.~P.}\ \bibnamefont {Horn}}, \bibinfo {author}
  {\bibfnamefont {P.-Y.}\ \bibnamefont {Hou}}, \bibinfo {author} {\bibfnamefont
  {J.~J.}\ \bibnamefont {Wu}}, \bibinfo {author} {\bibfnamefont {D.~H.}\
  \bibnamefont {Slichter}}, \bibinfo {author} {\bibfnamefont {F.}~\bibnamefont
  {Reiter}}, \bibinfo {author} {\bibfnamefont {C.~P.}\ \bibnamefont {Koch}},\
  and\ \bibinfo {author} {\bibfnamefont {D.}~\bibnamefont {Leibfried}},\ }\href
  {https://doi.org/10.1103/PhysRevLett.128.080502} {\bibfield  {journal}
  {\bibinfo  {journal} {Physical Review Letters}\ }\textbf {\bibinfo {volume}
  {128}},\ \bibinfo {pages} {080502} (\bibinfo {year} {2022})}\BibitemShut
  {NoStop}%
\bibitem [{\citenamefont {Malinowski}\ \emph {et~al.}(2022)\citenamefont
  {Malinowski}, \citenamefont {Zhang}, \citenamefont {Negnevitsky},
  \citenamefont {Rojkov}, \citenamefont {Reiter}, \citenamefont {Nguyen},
  \citenamefont {Stadler}, \citenamefont {Kienzler}, \citenamefont {Mehta},\
  and\ \citenamefont {Home}}]{Malinowskietal2022}%
  \BibitemOpen
  \bibfield  {author} {\bibinfo {author} {\bibfnamefont {M.}~\bibnamefont
  {Malinowski}}, \bibinfo {author} {\bibfnamefont {C.}~\bibnamefont {Zhang}},
  \bibinfo {author} {\bibfnamefont {V.}~\bibnamefont {Negnevitsky}}, \bibinfo
  {author} {\bibfnamefont {I.}~\bibnamefont {Rojkov}}, \bibinfo {author}
  {\bibfnamefont {F.}~\bibnamefont {Reiter}}, \bibinfo {author} {\bibfnamefont
  {T.-L.}\ \bibnamefont {Nguyen}}, \bibinfo {author} {\bibfnamefont
  {M.}~\bibnamefont {Stadler}}, \bibinfo {author} {\bibfnamefont
  {D.}~\bibnamefont {Kienzler}}, \bibinfo {author} {\bibfnamefont {K.~K.}\
  \bibnamefont {Mehta}},\ and\ \bibinfo {author} {\bibfnamefont {J.~P.}\
  \bibnamefont {Home}},\ }\href
  {https://doi.org/10.1103/PhysRevLett.128.080503} {\bibfield  {journal}
  {\bibinfo  {journal} {Physical Review Letters}\ }\textbf {\bibinfo {volume}
  {128}},\ \bibinfo {pages} {080503} (\bibinfo {year} {2022})}\BibitemShut
  {NoStop}%
\bibitem [{\citenamefont {Pechen}\ and\ \citenamefont
  {Rabitz}(2006)}]{PechenRabitz2006}%
  \BibitemOpen
  \bibfield  {author} {\bibinfo {author} {\bibfnamefont {A.}~\bibnamefont
  {Pechen}}\ and\ \bibinfo {author} {\bibfnamefont {H.}~\bibnamefont
  {Rabitz}},\ }\href {https://doi.org/10.1103/PhysRevA.73.062102} {\bibfield
  {journal} {\bibinfo  {journal} {Phys. Rev. A}\ }\textbf {\bibinfo {volume}
  {73}},\ \bibinfo {pages} {062102} (\bibinfo {year} {2006})}\BibitemShut
  {NoStop}%
\bibitem [{\citenamefont {Pechen}(2011)}]{Pechen2011}%
  \BibitemOpen
  \bibfield  {author} {\bibinfo {author} {\bibfnamefont {A.}~\bibnamefont
  {Pechen}},\ }\href {https://doi.org/10.1103/PhysRevA.84.042106} {\bibfield
  {journal} {\bibinfo  {journal} {Phys. Rev. A}\ }\textbf {\bibinfo {volume}
  {84}},\ \bibinfo {pages} {042106} (\bibinfo {year} {2011})}\BibitemShut
  {NoStop}%
\bibitem [{\citenamefont {Huang}\ \emph {et~al.}(1983)\citenamefont {Huang},
  \citenamefont {Tarn},\ and\ \citenamefont {Clark}}]{Huang_Tarn_Clark_1983}%
  \BibitemOpen
  \bibfield  {author} {\bibinfo {author} {\bibfnamefont {G.~M.}\ \bibnamefont
  {Huang}}, \bibinfo {author} {\bibfnamefont {T.-J.}\ \bibnamefont {Tarn}},\
  and\ \bibinfo {author} {\bibfnamefont {J.~W.}\ \bibnamefont {Clark}},\ }\href
  {https://doi.org/10.1063/1.525634} {\bibfield  {journal} {\bibinfo  {journal}
  {Journal of Mathematical Physics}\ }\textbf {\bibinfo {volume} {24}},\
  \bibinfo {pages} {2608} (\bibinfo {year} {1983})}\BibitemShut {NoStop}%
\bibitem [{\citenamefont {Turinici}\ and\ \citenamefont
  {Rabitz}(2001)}]{Turinici2001}%
  \BibitemOpen
  \bibfield  {author} {\bibinfo {author} {\bibfnamefont {G.}~\bibnamefont
  {Turinici}}\ and\ \bibinfo {author} {\bibfnamefont {H.}~\bibnamefont
  {Rabitz}},\ }\href {https://doi.org/10.1016/S0301-0104(01)00216-6} {\bibfield
   {journal} {\bibinfo  {journal} {Chemical Physics}\ }\textbf {\bibinfo
  {volume} {267}},\ \bibinfo {pages} {1} (\bibinfo {year} {2001})}\BibitemShut
  {NoStop}%
\bibitem [{\citenamefont {Albertini}\ and\ \citenamefont
  {D'Alessandro}(2003)}]{Albertini2001}%
  \BibitemOpen
  \bibfield  {author} {\bibinfo {author} {\bibfnamefont {F.}~\bibnamefont
  {Albertini}}\ and\ \bibinfo {author} {\bibfnamefont {D.}~\bibnamefont
  {D'Alessandro}},\ }\href {https://doi.org/10.1109/TAC.2003.815027} {\bibfield
   {journal} {\bibinfo  {journal} {IEEE Transactions on Automatic Control}\
  }\textbf {\bibinfo {volume} {48}},\ \bibinfo {pages} {1399} (\bibinfo {year}
  {2003})}\BibitemShut {NoStop}%
\bibitem [{\citenamefont {Schirmer}\ \emph {et~al.}(2001)\citenamefont
  {Schirmer}, \citenamefont {Fu},\ and\ \citenamefont
  {Solomon}}]{Schirmer2001}%
  \BibitemOpen
  \bibfield  {author} {\bibinfo {author} {\bibfnamefont {S.~G.}\ \bibnamefont
  {Schirmer}}, \bibinfo {author} {\bibfnamefont {H.}~\bibnamefont {Fu}},\ and\
  \bibinfo {author} {\bibfnamefont {A.~I.}\ \bibnamefont {Solomon}},\ }\href
  {https://doi.org/10.1103/PhysRevA.63.063410} {\bibfield  {journal} {\bibinfo
  {journal} {Phys. Rev. A}\ }\textbf {\bibinfo {volume} {63}},\ \bibinfo
  {pages} {063410} (\bibinfo {year} {2001})}\BibitemShut {NoStop}%
\bibitem [{\citenamefont {Schirmer}\ \emph {et~al.}(2002)\citenamefont
  {Schirmer}, \citenamefont {Solomon},\ and\ \citenamefont
  {Leahy}}]{Schirmer2002}%
  \BibitemOpen
  \bibfield  {author} {\bibinfo {author} {\bibfnamefont {S.~G.}\ \bibnamefont
  {Schirmer}}, \bibinfo {author} {\bibfnamefont {A.~I.}\ \bibnamefont
  {Solomon}},\ and\ \bibinfo {author} {\bibfnamefont {J.~V.}\ \bibnamefont
  {Leahy}},\ }\href {https://doi.org/10.1088/0305-4470/35/40/313} {\bibfield
  {journal} {\bibinfo  {journal} {Journal of Physics A: Mathematical and
  General}\ }\textbf {\bibinfo {volume} {35}},\ \bibinfo {pages} {8551}
  (\bibinfo {year} {2002})}\BibitemShut {NoStop}%
\bibitem [{\citenamefont {Altafini}(2002)}]{Altafini2002}%
  \BibitemOpen
  \bibfield  {author} {\bibinfo {author} {\bibfnamefont {C.}~\bibnamefont
  {Altafini}},\ }\href {https://doi.org/10.1063/1.1467611} {\bibfield
  {journal} {\bibinfo  {journal} {J. Math. Phys.}\ }\textbf {\bibinfo {volume}
  {43}},\ \bibinfo {pages} {2051} (\bibinfo {year} {2002})}\BibitemShut
  {NoStop}%
\bibitem [{\citenamefont {Polack}\ \emph {et~al.}(2009)\citenamefont {Polack},
  \citenamefont {Suchowski},\ and\ \citenamefont {Tannor}}]{Polack2009}%
  \BibitemOpen
  \bibfield  {author} {\bibinfo {author} {\bibfnamefont {T.}~\bibnamefont
  {Polack}}, \bibinfo {author} {\bibfnamefont {H.}~\bibnamefont {Suchowski}},\
  and\ \bibinfo {author} {\bibfnamefont {D.~J.}\ \bibnamefont {Tannor}},\
  }\href {https://doi.org/10.1103/PhysRevA.79.053403} {\bibfield  {journal}
  {\bibinfo  {journal} {Phys. Rev. A}\ }\textbf {\bibinfo {volume} {79}},\
  \bibinfo {pages} {053403} (\bibinfo {year} {2009})}\BibitemShut {NoStop}%
\bibitem [{\citenamefont {Gorini}\ \emph {et~al.}(1976)\citenamefont {Gorini},
  \citenamefont {Kossakowski},\ and\ \citenamefont
  {Sudarshan}}]{GoriniKossakowskiSudarshan1976}%
  \BibitemOpen
  \bibfield  {author} {\bibinfo {author} {\bibfnamefont {V.}~\bibnamefont
  {Gorini}}, \bibinfo {author} {\bibfnamefont {A.}~\bibnamefont
  {Kossakowski}},\ and\ \bibinfo {author} {\bibfnamefont {E.~C.~G.}\
  \bibnamefont {Sudarshan}},\ }\href {https://doi.org/10.1063/1.522979}
  {\bibfield  {journal} {\bibinfo  {journal} {Journal of Mathematical Physics}\
  }\textbf {\bibinfo {volume} {17}},\ \bibinfo {pages} {821} (\bibinfo {year}
  {1976})}\BibitemShut {NoStop}%
\bibitem [{\citenamefont {Lindblad}(1976)}]{Lindblad1976}%
  \BibitemOpen
  \bibfield  {author} {\bibinfo {author} {\bibfnamefont {G.}~\bibnamefont
  {Lindblad}},\ }\href {https://doi.org/10.1007/BF01608499} {\bibfield
  {journal} {\bibinfo  {journal} {Communications in Mathematical Physics}\
  }\textbf {\bibinfo {volume} {48}},\ \bibinfo {pages} {119} (\bibinfo {year}
  {1976})}\BibitemShut {NoStop}%
\bibitem [{\citenamefont {Altafini}(2003{\natexlab{a}})}]{Altafini1}%
  \BibitemOpen
  \bibfield  {author} {\bibinfo {author} {\bibfnamefont {C.}~\bibnamefont
  {Altafini}},\ }\href {https://doi.org/10.1063/1.1571221} {\bibfield
  {journal} {\bibinfo  {journal} {J. Math. Phys.}\ }\textbf {\bibinfo {volume}
  {44}},\ \bibinfo {pages} {2357} (\bibinfo {year}
  {2003}{\natexlab{a}})}\BibitemShut {NoStop}%
\bibitem [{\citenamefont {Altafini}(2003{\natexlab{b}})}]{Altafini2}%
  \BibitemOpen
  \bibfield  {author} {\bibinfo {author} {\bibfnamefont {C.}~\bibnamefont
  {Altafini}},\ }in\ \href {https://doi.org/10.1109/PHYCON.2003.1236992} {\emph
  {\bibinfo {booktitle} {2003 IEEE International Workshop on Workload
  Characterization (IEEE Cat. No.03EX775)}}},\ Vol.~\bibinfo {volume} {3}\
  (\bibinfo  {publisher} {IEEE},\ \bibinfo {year} {2003})\ pp.\ \bibinfo
  {pages} {710--714}\BibitemShut {NoStop}%
\bibitem [{\citenamefont {Dirr}\ \emph {et~al.}(2009)\citenamefont {Dirr},
  \citenamefont {Helmke}, \citenamefont {Kurniawan},\ and\ \citenamefont
  {Schulte-Herbr{\"u}ggen}}]{Dirr_Helmke_Kurniawan_SchulteHerbrueggen_2009}%
  \BibitemOpen
  \bibfield  {author} {\bibinfo {author} {\bibfnamefont {G.}~\bibnamefont
  {Dirr}}, \bibinfo {author} {\bibfnamefont {U.}~\bibnamefont {Helmke}},
  \bibinfo {author} {\bibfnamefont {I.}~\bibnamefont {Kurniawan}},\ and\
  \bibinfo {author} {\bibfnamefont {T.}~\bibnamefont
  {Schulte-Herbr{\"u}ggen}},\ }\href
  {https://doi.org/10.1016/S0034-4877(09)90022-2} {\bibfield  {journal}
  {\bibinfo  {journal} {Reports on Mathematical Physics}\ }\textbf {\bibinfo
  {volume} {64}},\ \bibinfo {pages} {93} (\bibinfo {year} {2009})}\BibitemShut
  {NoStop}%
\bibitem [{\citenamefont {Agrachev}\ and\ \citenamefont
  {Sachkov}(2004)}]{AgrachevSachkov}%
  \BibitemOpen
  \bibfield  {author} {\bibinfo {author} {\bibfnamefont {A.~A.}\ \bibnamefont
  {Agrachev}}\ and\ \bibinfo {author} {\bibfnamefont {Y.~L.}\ \bibnamefont
  {Sachkov}},\ }\href@noop {} {\emph {\bibinfo {title} {Control Theory from the
  Geometric Viewpoint}}}\ (\bibinfo  {publisher} {Springer},\ \bibinfo {year}
  {2004})\BibitemShut {NoStop}%
\bibitem [{\citenamefont {Sugny}\ \emph {et~al.}(2007)\citenamefont {Sugny},
  \citenamefont {Kontz},\ and\ \citenamefont
  {Jauslin}}]{Sugny_Kontz_Jauslin_2007}%
  \BibitemOpen
  \bibfield  {author} {\bibinfo {author} {\bibfnamefont {D.}~\bibnamefont
  {Sugny}}, \bibinfo {author} {\bibfnamefont {C.}~\bibnamefont {Kontz}},\ and\
  \bibinfo {author} {\bibfnamefont {H.~R.}\ \bibnamefont {Jauslin}},\ }\href
  {https://doi.org/10.1103/PhysRevA.76.023419} {\bibfield  {journal} {\bibinfo
  {journal} {Phys. Rev. A}\ }\textbf {\bibinfo {volume} {76}},\ \bibinfo
  {pages} {023419} (\bibinfo {year} {2007})}\BibitemShut {NoStop}%
\bibitem [{\citenamefont {Lapert}\ \emph {et~al.}(2010)\citenamefont {Lapert},
  \citenamefont {Zhang}, \citenamefont {Braun}, \citenamefont {Glaser},\ and\
  \citenamefont {Sugny}}]{Lapert_Zhang_Braun_Glaser_Sugny_2010}%
  \BibitemOpen
  \bibfield  {author} {\bibinfo {author} {\bibfnamefont {M.}~\bibnamefont
  {Lapert}}, \bibinfo {author} {\bibfnamefont {Y.}~\bibnamefont {Zhang}},
  \bibinfo {author} {\bibfnamefont {M.}~\bibnamefont {Braun}}, \bibinfo
  {author} {\bibfnamefont {S.~J.}\ \bibnamefont {Glaser}},\ and\ \bibinfo
  {author} {\bibfnamefont {D.}~\bibnamefont {Sugny}},\ }\href
  {https://doi.org/10.1103/PhysRevLett.104.083001} {\bibfield  {journal}
  {\bibinfo  {journal} {Phys. Rev. Lett.}\ }\textbf {\bibinfo {volume} {104}},\
  \bibinfo {pages} {083001} (\bibinfo {year} {2010})}\BibitemShut {NoStop}%
\bibitem [{\citenamefont {Boscain}\ and\ \citenamefont
  {Piccoli}(2003)}]{BoscainPiccoli12004}%
  \BibitemOpen
  \bibfield  {author} {\bibinfo {author} {\bibfnamefont {U.}~\bibnamefont
  {Boscain}}\ and\ \bibinfo {author} {\bibfnamefont {B.}~\bibnamefont
  {Piccoli}},\ }\href@noop {} {\emph {\bibinfo {title} {Optimal Syntheses for
  Control Systems on 2-D manifolds}}}\ (\bibinfo  {publisher} {Springer},\
  \bibinfo {year} {2003})\BibitemShut {NoStop}%
\bibitem [{\citenamefont {Clark}\ \emph {et~al.}(2020)\citenamefont {Clark},
  \citenamefont {Bloch}, \citenamefont {Colombo},\ and\ \citenamefont
  {Rooney}}]{Clark_Bloch_Colombo_Rooney_2020}%
  \BibitemOpen
  \bibfield  {author} {\bibinfo {author} {\bibfnamefont {W.}~\bibnamefont
  {Clark}}, \bibinfo {author} {\bibfnamefont {A.}~\bibnamefont {Bloch}},
  \bibinfo {author} {\bibfnamefont {L.}~\bibnamefont {Colombo}},\ and\ \bibinfo
  {author} {\bibfnamefont {P.}~\bibnamefont {Rooney}},\ }\href
  {https://doi.org/10.3934/dcdss.2020063} {\bibfield  {journal} {\bibinfo
  {journal} {Discrete and Continuous Dynamical Systems - S}\ }\textbf {\bibinfo
  {volume} {13}},\ \bibinfo {pages} {1061} (\bibinfo {year}
  {2020})}\BibitemShut {NoStop}%
\bibitem [{\citenamefont {Mukherjee}\ \emph {et~al.}(2013)\citenamefont
  {Mukherjee}, \citenamefont {Carlini}, \citenamefont {Mari}, \citenamefont
  {Caneva}, \citenamefont {Montangero}, \citenamefont {Calarco}, \citenamefont
  {Fazio},\ and\ \citenamefont
  {Giovannetti}}]{Mukherjee_Carlini_Mari_Caneva_Montangero_Calarco_Fazio_Giovannetti_2013}%
  \BibitemOpen
  \bibfield  {author} {\bibinfo {author} {\bibfnamefont {V.}~\bibnamefont
  {Mukherjee}}, \bibinfo {author} {\bibfnamefont {A.}~\bibnamefont {Carlini}},
  \bibinfo {author} {\bibfnamefont {A.}~\bibnamefont {Mari}}, \bibinfo {author}
  {\bibfnamefont {T.}~\bibnamefont {Caneva}}, \bibinfo {author} {\bibfnamefont
  {S.}~\bibnamefont {Montangero}}, \bibinfo {author} {\bibfnamefont
  {T.}~\bibnamefont {Calarco}}, \bibinfo {author} {\bibfnamefont
  {R.}~\bibnamefont {Fazio}},\ and\ \bibinfo {author} {\bibfnamefont
  {V.}~\bibnamefont {Giovannetti}},\ }\href
  {https://doi.org/10.1103/PhysRevA.88.062326} {\bibfield  {journal} {\bibinfo
  {journal} {Phys. Rev. A}\ }\textbf {\bibinfo {volume} {88}},\ \bibinfo
  {pages} {062326} (\bibinfo {year} {2013})}\BibitemShut {NoStop}%
\bibitem [{\citenamefont {del Campo}\ \emph {et~al.}(2013)\citenamefont {del
  Campo}, \citenamefont {Egusquiza}, \citenamefont {Plenio},\ and\
  \citenamefont {Huelga}}]{delCampo_Egusquiza_Plenio_Huelga_2013}%
  \BibitemOpen
  \bibfield  {author} {\bibinfo {author} {\bibfnamefont {A.}~\bibnamefont {del
  Campo}}, \bibinfo {author} {\bibfnamefont {I.~L.}\ \bibnamefont {Egusquiza}},
  \bibinfo {author} {\bibfnamefont {M.~B.}\ \bibnamefont {Plenio}},\ and\
  \bibinfo {author} {\bibfnamefont {S.~F.}\ \bibnamefont {Huelga}},\ }\href
  {https://doi.org/10.1103/PhysRevLett.110.050403} {\bibfield  {journal}
  {\bibinfo  {journal} {Phys. Rev. Lett.}\ }\textbf {\bibinfo {volume} {110}},\
  \bibinfo {pages} {050403} (\bibinfo {year} {2013})}\BibitemShut {NoStop}%
\bibitem [{\citenamefont {Zhang}\ \emph {et~al.}(2011)\citenamefont {Zhang},
  \citenamefont {Lapert}, \citenamefont {Sugny}, \citenamefont {Braun},\ and\
  \citenamefont {Glaser}}]{Zhang_Lapert_Sugny_Braun_Glaser_2011}%
  \BibitemOpen
  \bibfield  {author} {\bibinfo {author} {\bibfnamefont {Y.}~\bibnamefont
  {Zhang}}, \bibinfo {author} {\bibfnamefont {M.}~\bibnamefont {Lapert}},
  \bibinfo {author} {\bibfnamefont {D.}~\bibnamefont {Sugny}}, \bibinfo
  {author} {\bibfnamefont {M.}~\bibnamefont {Braun}},\ and\ \bibinfo {author}
  {\bibfnamefont {S.~J.}\ \bibnamefont {Glaser}},\ }\href
  {https://doi.org/10.1063/1.3543796} {\bibfield  {journal} {\bibinfo
  {journal} {J. Chem. Phys.}\ }\textbf {\bibinfo {volume} {134}},\ \bibinfo
  {pages} {054103} (\bibinfo {year} {2011})}\BibitemShut {NoStop}%
\bibitem [{\citenamefont {Fassioli}\ \emph {et~al.}(2014)\citenamefont
  {Fassioli}, \citenamefont {Dinshaw}, \citenamefont {Arpin},\ and\
  \citenamefont {Scholes}}]{Fassioli_Dinshaw_Arpin_Scholes_2014}%
  \BibitemOpen
  \bibfield  {author} {\bibinfo {author} {\bibfnamefont {F.}~\bibnamefont
  {Fassioli}}, \bibinfo {author} {\bibfnamefont {R.}~\bibnamefont {Dinshaw}},
  \bibinfo {author} {\bibfnamefont {P.~C.}\ \bibnamefont {Arpin}},\ and\
  \bibinfo {author} {\bibfnamefont {G.~D.}\ \bibnamefont {Scholes}},\ }\href
  {https://doi.org/10.1098/rsif.2013.0901} {\bibfield  {journal} {\bibinfo
  {journal} {J. R. Soc. Interface}\ }\textbf {\bibinfo {volume} {11}},\
  \bibinfo {pages} {20130901} (\bibinfo {year} {2014})}\BibitemShut {NoStop}%
\bibitem [{\citenamefont {Kozyrev}\ and\ \citenamefont
  {Pechen}(2022)}]{Kozyrev_Pechen_2022}%
  \BibitemOpen
  \bibfield  {author} {\bibinfo {author} {\bibfnamefont {S.~V.}\ \bibnamefont
  {Kozyrev}}\ and\ \bibinfo {author} {\bibfnamefont {A.~N.}\ \bibnamefont
  {Pechen}},\ }\href {https://doi.org/10.1103/PhysRevA.106.032218} {\bibfield
  {journal} {\bibinfo  {journal} {Phys. Rev. A}\ }\textbf {\bibinfo {volume}
  {106}},\ \bibinfo {pages} {032218} (\bibinfo {year} {2022})}\BibitemShut
  {NoStop}%
\bibitem [{\citenamefont {Viola}\ and\ \citenamefont
  {Lloyd}(1998)}]{Viola_Lloyd_1998}%
  \BibitemOpen
  \bibfield  {author} {\bibinfo {author} {\bibfnamefont {L.}~\bibnamefont
  {Viola}}\ and\ \bibinfo {author} {\bibfnamefont {S.}~\bibnamefont {Lloyd}},\
  }\href {https://doi.org/10.1103/PhysRevA.58.2733} {\bibfield  {journal}
  {\bibinfo  {journal} {Phys. Rev. A}\ }\textbf {\bibinfo {volume} {58}},\
  \bibinfo {pages} {2733} (\bibinfo {year} {1998})}\BibitemShut {NoStop}%
\bibitem [{\citenamefont {Viola}\ \emph
  {et~al.}(1999{\natexlab{a}})\citenamefont {Viola}, \citenamefont {Knill},\
  and\ \citenamefont {Lloyd}}]{Viola_Knill_Lloyd_1999}%
  \BibitemOpen
  \bibfield  {author} {\bibinfo {author} {\bibfnamefont {L.}~\bibnamefont
  {Viola}}, \bibinfo {author} {\bibfnamefont {E.}~\bibnamefont {Knill}},\ and\
  \bibinfo {author} {\bibfnamefont {S.}~\bibnamefont {Lloyd}},\ }\href
  {https://doi.org/10.1103/PhysRevLett.82.2417} {\bibfield  {journal} {\bibinfo
   {journal} {Phys. Rev. Lett.}\ }\textbf {\bibinfo {volume} {82}},\ \bibinfo
  {pages} {2417} (\bibinfo {year} {1999}{\natexlab{a}})}\BibitemShut {NoStop}%
\bibitem [{\citenamefont {Viola}\ \emph
  {et~al.}(1999{\natexlab{b}})\citenamefont {Viola}, \citenamefont {Lloyd},\
  and\ \citenamefont {Knill}}]{Viola_Lloyd_Knill_1999}%
  \BibitemOpen
  \bibfield  {author} {\bibinfo {author} {\bibfnamefont {L.}~\bibnamefont
  {Viola}}, \bibinfo {author} {\bibfnamefont {S.}~\bibnamefont {Lloyd}},\ and\
  \bibinfo {author} {\bibfnamefont {E.}~\bibnamefont {Knill}},\ }\href
  {https://doi.org/10.1103/PhysRevLett.83.4888} {\bibfield  {journal} {\bibinfo
   {journal} {Phys. Rev. Lett.}\ }\textbf {\bibinfo {volume} {83}},\ \bibinfo
  {pages} {4888} (\bibinfo {year} {1999}{\natexlab{b}})}\BibitemShut {NoStop}%
\bibitem [{\citenamefont {Grace}\ \emph {et~al.}(2007)\citenamefont {Grace},
  \citenamefont {Brif}, \citenamefont {Rabitz}, \citenamefont {Walmsley},
  \citenamefont {Kosut},\ and\ \citenamefont
  {Lidar}}]{Grace_Brif_Rabitz_Walmsley_Kosut_Lidar_2007}%
  \BibitemOpen
  \bibfield  {author} {\bibinfo {author} {\bibfnamefont {M.}~\bibnamefont
  {Grace}}, \bibinfo {author} {\bibfnamefont {C.}~\bibnamefont {Brif}},
  \bibinfo {author} {\bibfnamefont {H.}~\bibnamefont {Rabitz}}, \bibinfo
  {author} {\bibfnamefont {I.~A.}\ \bibnamefont {Walmsley}}, \bibinfo {author}
  {\bibfnamefont {R.~L.}\ \bibnamefont {Kosut}},\ and\ \bibinfo {author}
  {\bibfnamefont {D.~A.}\ \bibnamefont {Lidar}},\ }\href
  {https://doi.org/10.1088/0953-4075/40/9/S06} {\bibfield  {journal} {\bibinfo
  {journal} {J. Phys. B: At. Mol. Opt. Phys.}\ }\textbf {\bibinfo {volume}
  {40}},\ \bibinfo {pages} {S103} (\bibinfo {year} {2007})}\BibitemShut
  {NoStop}%
\bibitem [{\citenamefont {West}\ \emph {et~al.}(2010)\citenamefont {West},
  \citenamefont {Lidar}, \citenamefont {Fong},\ and\ \citenamefont
  {Gyure}}]{West_Lidar_Fong_Gyure_2010}%
  \BibitemOpen
  \bibfield  {author} {\bibinfo {author} {\bibfnamefont {J.~R.}\ \bibnamefont
  {West}}, \bibinfo {author} {\bibfnamefont {D.~A.}\ \bibnamefont {Lidar}},
  \bibinfo {author} {\bibfnamefont {B.~H.}\ \bibnamefont {Fong}},\ and\
  \bibinfo {author} {\bibfnamefont {M.~F.}\ \bibnamefont {Gyure}},\ }\href
  {https://doi.org/10.1103/PhysRevLett.105.230503} {\bibfield  {journal}
  {\bibinfo  {journal} {Phys. Rev. Lett.}\ }\textbf {\bibinfo {volume} {105}},\
  \bibinfo {pages} {230503} (\bibinfo {year} {2010})}\BibitemShut {NoStop}%
\bibitem [{\citenamefont {Zahedinejad}\ \emph {et~al.}(2015)\citenamefont
  {Zahedinejad}, \citenamefont {Ghosh},\ and\ \citenamefont
  {Sanders}}]{Zahedinejad_Ghosh_Sanders_2015}%
  \BibitemOpen
  \bibfield  {author} {\bibinfo {author} {\bibfnamefont {E.}~\bibnamefont
  {Zahedinejad}}, \bibinfo {author} {\bibfnamefont {J.}~\bibnamefont {Ghosh}},\
  and\ \bibinfo {author} {\bibfnamefont {B.~C.}\ \bibnamefont {Sanders}},\
  }\href {https://doi.org/10.1103/PhysRevLett.114.200502} {\bibfield  {journal}
  {\bibinfo  {journal} {Phys. Rev. Lett.}\ }\textbf {\bibinfo {volume} {114}},\
  \bibinfo {pages} {200502} (\bibinfo {year} {2015})}\BibitemShut {NoStop}%
\bibitem [{\citenamefont {Caneva}\ \emph {et~al.}(2009)\citenamefont {Caneva},
  \citenamefont {Murphy}, \citenamefont {Calarco}, \citenamefont {Fazio},
  \citenamefont {Montangero}, \citenamefont {Giovannetti},\ and\ \citenamefont
  {Santoro}}]{Caneva_2009}%
  \BibitemOpen
  \bibfield  {author} {\bibinfo {author} {\bibfnamefont {T.}~\bibnamefont
  {Caneva}}, \bibinfo {author} {\bibfnamefont {M.}~\bibnamefont {Murphy}},
  \bibinfo {author} {\bibfnamefont {T.}~\bibnamefont {Calarco}}, \bibinfo
  {author} {\bibfnamefont {R.}~\bibnamefont {Fazio}}, \bibinfo {author}
  {\bibfnamefont {S.}~\bibnamefont {Montangero}}, \bibinfo {author}
  {\bibfnamefont {V.}~\bibnamefont {Giovannetti}},\ and\ \bibinfo {author}
  {\bibfnamefont {G.~E.}\ \bibnamefont {Santoro}},\ }\href
  {https://doi.org/10.1103/PhysRevLett.103.240501} {\bibfield  {journal}
  {\bibinfo  {journal} {Phys. Rev. Lett.}\ }\textbf {\bibinfo {volume} {103}},\
  \bibinfo {pages} {240501} (\bibinfo {year} {2009})}\BibitemShut {NoStop}%
\bibitem [{\citenamefont {Hegerfeldt}(2013)}]{Hegerfeldt_2013}%
  \BibitemOpen
  \bibfield  {author} {\bibinfo {author} {\bibfnamefont {G.~C.}\ \bibnamefont
  {Hegerfeldt}},\ }\href {https://doi.org/10.1103/PhysRevLett.111.260501}
  {\bibfield  {journal} {\bibinfo  {journal} {Phys. Rev. Lett.}\ }\textbf
  {\bibinfo {volume} {111}},\ \bibinfo {pages} {260501} (\bibinfo {year}
  {2013})}\BibitemShut {NoStop}%
\bibitem [{\citenamefont {Palao}\ and\ \citenamefont
  {Kosloff}(2002)}]{Palao_Kosloff_2002}%
  \BibitemOpen
  \bibfield  {author} {\bibinfo {author} {\bibfnamefont {J.~P.}\ \bibnamefont
  {Palao}}\ and\ \bibinfo {author} {\bibfnamefont {R.}~\bibnamefont
  {Kosloff}},\ }\href {https://doi.org/10.1103/PhysRevLett.89.188301}
  {\bibfield  {journal} {\bibinfo  {journal} {Phys. Rev. Lett.}\ }\textbf
  {\bibinfo {volume} {89}},\ \bibinfo {pages} {188301} (\bibinfo {year}
  {2002})}\BibitemShut {NoStop}%
\bibitem [{\citenamefont {Li}\ \emph {et~al.}(2016)\citenamefont {Li},
  \citenamefont {Lu}, \citenamefont {Luo}, \citenamefont {Laflamme},
  \citenamefont {Peng},\ and\ \citenamefont
  {Du}}]{Li_Lu_Luo_Laflamme_Peng_Du_2016}%
  \BibitemOpen
  \bibfield  {author} {\bibinfo {author} {\bibfnamefont {J.}~\bibnamefont
  {Li}}, \bibinfo {author} {\bibfnamefont {D.}~\bibnamefont {Lu}}, \bibinfo
  {author} {\bibfnamefont {Z.}~\bibnamefont {Luo}}, \bibinfo {author}
  {\bibfnamefont {R.}~\bibnamefont {Laflamme}}, \bibinfo {author}
  {\bibfnamefont {X.}~\bibnamefont {Peng}},\ and\ \bibinfo {author}
  {\bibfnamefont {J.}~\bibnamefont {Du}},\ }\href
  {https://doi.org/10.1103/PhysRevA.94.012312} {\bibfield  {journal} {\bibinfo
  {journal} {Phys. Rev. A}\ }\textbf {\bibinfo {volume} {94}},\ \bibinfo
  {pages} {012312} (\bibinfo {year} {2016})}\BibitemShut {NoStop}%
\bibitem [{\citenamefont {Lokutsievskiy}\ and\ \citenamefont
  {Pechen}(2021)}]{LokutsievskiyPechen1}%
  \BibitemOpen
  \bibfield  {author} {\bibinfo {author} {\bibfnamefont {L.}~\bibnamefont
  {Lokutsievskiy}}\ and\ \bibinfo {author} {\bibfnamefont {A.}~\bibnamefont
  {Pechen}},\ }\href {https://doi.org/10.1088/1751-8121/ac19f8} {\bibfield
  {journal} {\bibinfo  {journal} {Journal of Physics A: Mathematical and
  Theoretical}\ }\textbf {\bibinfo {volume} {54}},\ \bibinfo {pages} {395304}
  (\bibinfo {year} {2021})}\BibitemShut {NoStop}%
\bibitem [{\citenamefont {Lokutsievskiy}\ \emph {et~al.}(2024)\citenamefont
  {Lokutsievskiy}, \citenamefont {Pechen},\ and\ \citenamefont
  {Zelikin}}]{LokutsievskiyPechen2}%
  \BibitemOpen
  \bibfield  {author} {\bibinfo {author} {\bibfnamefont {L.~V.}\ \bibnamefont
  {Lokutsievskiy}}, \bibinfo {author} {\bibfnamefont {A.~N.}\ \bibnamefont
  {Pechen}},\ and\ \bibinfo {author} {\bibfnamefont {M.~I.}\ \bibnamefont
  {Zelikin}},\ }\href {https://doi.org/10.1088/1751-8121/ad5396} {\bibfield
  {journal} {\bibinfo  {journal} {Journal of Physics A: Mathematical and
  Theoretical}\ }\textbf {\bibinfo {volume} {57}},\ \bibinfo {pages} {275302}
  (\bibinfo {year} {2024})}\BibitemShut {NoStop}%
\bibitem [{\citenamefont {Sussmann}(1982)}]{Sussmann_1982}%
  \BibitemOpen
  \bibfield  {author} {\bibinfo {author} {\bibfnamefont {H.~J.}\ \bibnamefont
  {Sussmann}},\ }in\ \href@noop {} {\emph {\bibinfo {booktitle} {Feedback
  Control of Linear and Nonlinear Systems}}},\ \bibinfo {series} {Lecture Notes
  in Control and Information Sciences}, Vol.~\bibinfo {volume} {39}\ (\bibinfo
  {publisher} {Springer},\ \bibinfo {year} {1982})\ pp.\ \bibinfo {pages}
  {244--260}\BibitemShut {NoStop}%
\bibitem [{\citenamefont {Sussmann}(1987)}]{Sussmann_1987}%
  \BibitemOpen
  \bibfield  {author} {\bibinfo {author} {\bibfnamefont {H.~J.}\ \bibnamefont
  {Sussmann}},\ }\href {https://doi.org/10.1137/0325025} {\bibfield  {journal}
  {\bibinfo  {journal} {SIAM Journal on Control and Optimization}\ }\textbf
  {\bibinfo {volume} {25}},\ \bibinfo {pages} {433} (\bibinfo {year}
  {1987})}\BibitemShut {NoStop}%
\bibitem [{\citenamefont {Scully}\ and\ \citenamefont
  {Zubairy}(1997)}]{Scully_Zubairy_1997}%
  \BibitemOpen
  \bibfield  {author} {\bibinfo {author} {\bibfnamefont {M.~O.}\ \bibnamefont
  {Scully}}\ and\ \bibinfo {author} {\bibfnamefont {M.~S.}\ \bibnamefont
  {Zubairy}},\ }\href@noop {} {\emph {\bibinfo {title} {Quantum Optics}}}\
  (\bibinfo  {publisher} {Cambridge University Press},\ \bibinfo {address}
  {Cambridge},\ \bibinfo {year} {1997})\BibitemShut {NoStop}%
\bibitem [{Note1()}]{Note1}%
  \BibitemOpen
  \bibinfo {note} {The set $\protect \mathcal {R}(\rho _0)$ of all quantum
  states reachable from a given initial state $\rho _0$ always contains
  $\protect \mathcal {R}=\protect \mathcal {R}(N)$. Moreover, if $\rho _0\in
  \protect \mathcal {R}$, then $\protect \mathcal {R}(\rho _0)=\protect
  \mathcal {R}$, see~\cite {LokutsievskiyPechen1}.}\BibitemShut {Stop}%
\bibitem [{Note2()}]{Note2}%
  \BibitemOpen
  \bibinfo {note} {Note that \cite {LokutsievskiyPechen2} uses a slightly
  different convention ($r_z\DOTSB \mapstochar \rightarrow -r_z$ and $u\DOTSB
  \mapstochar \rightarrow -u$) corresponding to the opposite choice of the
  raising and lowering operators $\sigma _\pm $.}\BibitemShut {Stop}%
\end{thebibliography}%


\providecommand{\noopsort}[1]{}\providecommand{\singleletter}[1]{#1}%
\begin{thebibliography}{16}%
\makeatletter
\providecommand \@ifxundefined [1]{%
 \@ifx{#1\undefined}
}%
\providecommand \@ifnum [1]{%
 \ifnum #1\expandafter \@firstoftwo
 \else \expandafter \@secondoftwo
 \fi
}%
\providecommand \@ifx [1]{%
 \ifx #1\expandafter \@firstoftwo
 \else \expandafter \@secondoftwo
 \fi
}%
\providecommand \natexlab [1]{#1}%
\providecommand \enquote  [1]{``#1''}%
\providecommand \bibnamefont  [1]{#1}%
\providecommand \bibfnamefont [1]{#1}%
\providecommand \citenamefont [1]{#1}%
\providecommand \href@noop [0]{\@secondoftwo}%
\providecommand \href [0]{\begingroup \@sanitize@url \@href}%
\providecommand \@href[1]{\@@startlink{#1}\@@href}%
\providecommand \@@href[1]{\endgroup#1\@@endlink}%
\providecommand \@sanitize@url [0]{\catcode `\\12\catcode `\$12\catcode
  `\&12\catcode `\#12\catcode `\^12\catcode `\_12\catcode `\%12\relax}%
\providecommand \@@startlink[1]{}%
\providecommand \@@endlink[0]{}%
\providecommand \url  [0]{\begingroup\@sanitize@url \@url }%
\providecommand \@url [1]{\endgroup\@href {#1}{\urlprefix }}%
\providecommand \urlprefix  [0]{URL }%
\providecommand \Eprint [0]{\href }%
\providecommand \doibase [0]{https://doi.org/}%
\providecommand \selectlanguage [0]{\@gobble}%
\providecommand \bibinfo  [0]{\@secondoftwo}%
\providecommand \bibfield  [0]{\@secondoftwo}%
\providecommand \translation [1]{[#1]}%
\providecommand \BibitemOpen [0]{}%
\providecommand \bibitemStop [0]{}%
\providecommand \bibitemNoStop [0]{.\EOS\space}%
\providecommand \EOS [0]{\spacefactor3000\relax}%
\providecommand \BibitemShut  [1]{\csname bibitem#1\endcsname}%
\let\auto@bib@innerbib\@empty
\bibitem [{\citenamefont {D'Alessandro}(2021)}]{AlessandroBook2021}%
  \BibitemOpen
  \bibfield  {author} {\bibinfo {author} {\bibfnamefont {D.}~\bibnamefont
  {D'Alessandro}},\ }\href@noop {} {\emph {\bibinfo {title} {Introduction to
  Quantum Control and Dynamics}}},\ \bibinfo {edition} {2nd}\ ed.\ (\bibinfo
  {publisher} {Chapman and Hall/CRC},\ \bibinfo {address} {Boca Raton},\
  \bibinfo {year} {2021})\BibitemShut {NoStop}%
\bibitem [{\citenamefont {Scully}\ and\ \citenamefont
  {Zubairy}(1997)}]{Scully_Zubairy_1997}%
  \BibitemOpen
  \bibfield  {author} {\bibinfo {author} {\bibfnamefont {M.~O.}\ \bibnamefont
  {Scully}}\ and\ \bibinfo {author} {\bibfnamefont {M.~S.}\ \bibnamefont
  {Zubairy}},\ }\href@noop {} {\emph {\bibinfo {title} {Quantum Optics}}}\
  (\bibinfo  {publisher} {Cambridge University Press},\ \bibinfo {address}
  {Cambridge},\ \bibinfo {year} {1997})\BibitemShut {NoStop}%
\bibitem [{\citenamefont {Lokutsievskiy}\ and\ \citenamefont
  {Pechen}(2021)}]{LokutsievskiyPechen1}%
  \BibitemOpen
  \bibfield  {author} {\bibinfo {author} {\bibfnamefont {L.}~\bibnamefont
  {Lokutsievskiy}}\ and\ \bibinfo {author} {\bibfnamefont {A.}~\bibnamefont
  {Pechen}},\ }\href {https://doi.org/10.1088/1751-8121/ac19f8} {\bibfield
  {journal} {\bibinfo  {journal} {Journal of Physics A: Mathematical and
  Theoretical}\ }\textbf {\bibinfo {volume} {54}},\ \bibinfo {pages} {395304}
  (\bibinfo {year} {2021})}\BibitemShut {NoStop}%
\bibitem [{\citenamefont {Lokutsievskiy}\ \emph {et~al.}(2024)\citenamefont
  {Lokutsievskiy}, \citenamefont {Pechen},\ and\ \citenamefont
  {Zelikin}}]{LokutsievskiyPechen2}%
  \BibitemOpen
  \bibfield  {author} {\bibinfo {author} {\bibfnamefont {L.~V.}\ \bibnamefont
  {Lokutsievskiy}}, \bibinfo {author} {\bibfnamefont {A.~N.}\ \bibnamefont
  {Pechen}},\ and\ \bibinfo {author} {\bibfnamefont {M.~I.}\ \bibnamefont
  {Zelikin}},\ }\href {https://doi.org/10.1088/1751-8121/ad5396} {\bibfield
  {journal} {\bibinfo  {journal} {Journal of Physics A: Mathematical and
  Theoretical}\ }\textbf {\bibinfo {volume} {57}},\ \bibinfo {pages} {275302}
  (\bibinfo {year} {2024})}\BibitemShut {NoStop}%
\bibitem [{Note3()}]{Note3}%
  \BibitemOpen
  \bibinfo {note} {Note that \cite {LokutsievskiyPechen2} uses a slightly
  different convention ($r_z\DOTSB \mapstochar \rightarrow -r_z$ and $u\DOTSB
  \mapstochar \rightarrow -u$) corresponding to the opposite choice of the
  raising and lowering operators $\sigma _\pm $.}\BibitemShut {Stop}%
\bibitem [{Note4()}]{Note4}%
  \BibitemOpen
  \bibinfo {note} {Theorem numbering is consistent with the main
  text.}\BibitemShut {Stop}%
\bibitem [{Note5()}]{Note5}%
  \BibitemOpen
  \bibinfo {note} {The curve $\Gamma $ consists of the point $N$ and two
  symmetric branches, $\Gamma =\Gamma ^+ \sqcup \Gamma ^- \sqcup \{N\}$. The
  branches $\Gamma ^\pm $ are trajectories of \protect \eqref {eq: QCS in Bloch
  ball appendix} with $u=u_b$. Moreover, both of these solutions tend to $N$ as
  $t\to -\infty $. Thus, $\Gamma ^\pm $ are the unstable separatrices of the
  point $N$.}\BibitemShut {Stop}%
\bibitem [{Note6()}]{Note6}%
  \BibitemOpen
  \bibinfo {note} {In the previous papers \cite {LokutsievskiyPechen1} and
  \cite {LokutsievskiyPechen2}, we have used slightly different notation.
  Namely, $r_z\DOTSB \mapstochar \rightarrow -r_z$, $u\DOTSB \mapstochar
  \rightarrow -u$, and $\theta \DOTSB \mapstochar \rightarrow -\theta
  $.}\BibitemShut {Stop}%
\bibitem [{Note7()}]{Note7}%
  \BibitemOpen
  \bibinfo {note} {STL = small-time localized}\BibitemShut {NoStop}%
\bibitem [{\citenamefont {Boscain}\ \emph {et~al.}(2023)\citenamefont
  {Boscain}, \citenamefont {Cannarsa}, \citenamefont {Franceschi},\ and\
  \citenamefont {Sigalotti}}]{BoscainControllability}%
  \BibitemOpen
  \bibfield  {author} {\bibinfo {author} {\bibfnamefont {U.}~\bibnamefont
  {Boscain}}, \bibinfo {author} {\bibfnamefont {D.}~\bibnamefont {Cannarsa}},
  \bibinfo {author} {\bibfnamefont {V.}~\bibnamefont {Franceschi}},\ and\
  \bibinfo {author} {\bibfnamefont {M.}~\bibnamefont {Sigalotti}},\ }\href
  {https://doi.org/10.5802/crmath.538} {\bibfield  {journal} {\bibinfo
  {journal} {Comptes Rendus. Math\'ematique}\ }\textbf {\bibinfo {volume}
  {361}},\ \bibinfo {pages} {1813} (\bibinfo {year} {2023})}\BibitemShut
  {NoStop}%
\bibitem [{Note8()}]{Note8}%
  \BibitemOpen
  \bibinfo {note} {In fact, the interiors of the sets of ST-local
  controllability and L-local controllability coincide here, but this is not
  essential for what follows.}\BibitemShut {Stop}%
\bibitem [{\citenamefont {Agrachev}\ and\ \citenamefont
  {Sachkov}(2004)}]{AgrachevSachkov}%
  \BibitemOpen
  \bibfield  {author} {\bibinfo {author} {\bibfnamefont {A.~A.}\ \bibnamefont
  {Agrachev}}\ and\ \bibinfo {author} {\bibfnamefont {Y.~L.}\ \bibnamefont
  {Sachkov}},\ }\href@noop {} {\emph {\bibinfo {title} {Control Theory from the
  Geometric Viewpoint}}}\ (\bibinfo  {publisher} {Springer},\ \bibinfo {year}
  {2004})\BibitemShut {NoStop}%
\bibitem [{\citenamefont {Davydov}(1994)}]{Davydov}%
  \BibitemOpen
  \bibfield  {author} {\bibinfo {author} {\bibfnamefont {A.~A.}\ \bibnamefont
  {Davydov}},\ }\href@noop {} {\emph {\bibinfo {title} {Qualitative Theory of
  Control Systems}}}\ (\bibinfo  {publisher} {American Mathematical Society},\
  \bibinfo {year} {1994})\BibitemShut {NoStop}%
\bibitem [{Note9()}]{Note9}%
  \BibitemOpen
  \bibinfo {note} {Recall that system~\protect \eqref {eq: QCS in Bloch ball
  appendix} is equivalent to~\protect \eqref {eq: QCS in cylindrical
  coordinates}, \protect \eqref {eq: QCS in cylindrical coordinates
  theta}.}\BibitemShut {Stop}%
\bibitem [{Note10()}]{Note10}%
  \BibitemOpen
  \bibinfo {note} {An explicit formula for $R(r_x,\theta )$ is also easy to
  write: the equation $S=0$ is quadratic in $R$, but it is not important for
  what follows.}\BibitemShut {Stop}%
\bibitem [{Note11()}]{Note11}%
  \BibitemOpen
  \bibinfo {note} {Recall that $E_+$ is the exterior of the figure-eight curve
  $E$.}\BibitemShut {Stop}%
\end{thebibliography}%
\end{document}